\newcommand{\1}{\mathds{1}}
\newcommand{\ee}{\mathrm{e}}
\newcommand{\I}{\mathrm{i}}
\newcommand{\pluseq}{\mathrel{+}=} 
\newcommand{\minuseq}{\mathrel{-}=} 
\newcommand{\floor}[1]{\lfloor #1 \rfloor}
\newcommand{\ceil}[1]{\lceil #1 \rceil}
\newcommand{\dd}{\text{d}} 
\newcommand{\st}{\:\: | \:\:} 
\renewcommand{\epsilon}{\varepsilon}
\newcommand{\range}[1]{\text{Range}(#1)}
\newcommand{\Ac}{\mathcal{A}}
\newcommand{\Cc}{\mathcal{C}}
\newcommand{\Cb}{\mathbb{C}}
\newcommand{\Kb}{\mathbb{K}} 
\newcommand{\Nb}{\mathbb{N}}
\newcommand{\Rb}{\mathbb{R}}
\newcommand{\Zb}{\mathbb{Z}}
\newcommand{\Tb}{\mathbb{T}} 
\newcommand{\Be}[1]{\mathrm{Be}\left(#1\right)} 
\newcommand{\CP}{\mathrm{CPo}} 
\newcommand{\Po}[1]{\mathrm{Po}\left(#1\right)} 
\newcommand{\Normal}[2]{\mathcal{N}\left(#1,#2\right)}
\newcommand{\E}[1]{\mathbb{E}\left[#1\right]}
\renewcommand{\P}[1]{\mathbb{P}\left(#1\right)}
\newcommand{\var}[2]{\mathrm{VaR}_{#1}\left(#2\right)}
\newcommand{\es}[2]{\mathrm{ES}_{#1}\left(#2\right)}
\newtheorem{definition}{Definition}[section]
\newtheorem{lemma}[definition]{Lemma}
\newtheorem{theorem}[definition]{Theorem}
\newtheorem{proposition}[definition]{Proposition}
\theoremstyle{remark}
\newtheorem{remark}[definition]{Remark}
\newtheorem{example}[definition]{Example}
\newcommand{\id}{\mathrm{id}}
\newcommand{\lle}{\left[\!\left[} 
\newcommand{\rre}{\right]\!\right]}
\setlist[enumerate]{itemsep=10pt,topsep=10pt}
\setlist[itemize]{itemsep=5pt,topsep=5pt}
\begin{document}

\title[Mod-Poisson approximation schemes: applications to credit risk]{Mod-Poisson approximation schemes:\\ applications to credit risk}


\author{Pierre-Lo\"ic M\'eliot}
\address{Institut de math\'ematiques d'Orsay, Universit\'e Paris-Saclay, France}
\email{pierre-loic.meliot@universite-paris-saclay.fr}
\thanks{}

\author{Ashkan Nikeghbali}
\address{Institute of Mathematics, Universit\"at Z\"urich, Switzerland}
\email{ashkan.nikeghbali@math.uzh.ch}
\thanks{}

\author{Gabriele Visentin}
\address{Department of Mathematics, RiskLab, ETH Zürich, Switzerland}
\email{gabriele.visentin@math.ethz.ch}
\thanks{}


\date{\today}

\keywords{Mod-$\phi$ convergence, Mod-$\phi$ approximation schemes, Credit risk, Risk measures, CDO pricing, Poisson approximation}

\begin{abstract}
We introduce a new numerical approximation method for functionals of factor credit portfolio models based on the theory of mod-$\phi$ convergence and mod-$\phi$ approximation schemes. The method can be understood as providing correction terms to the classic Poisson approximation, where higher order corrections lead to asymptotically better approximations as the number of obligors increases. We test the model empirically on two tasks: the estimation of risk measures ($\mathrm{VaR}$ and ES) and the computation of CDO tranche prices. We compare it to other commonly used methods -- such as the recursive method, the large deviations approximation, the Chen--Stein method and the Monte Carlo simulation technique (with and without importance sampling) -- and we show that it leads to more accurate estimates while requiring less computational time.
\end{abstract}

\maketitle

\tableofcontents

\section{Introduction}
\label{sec:introduction}

Mod-$\phi$ convergence \cite{feray2016mod} is a new notion of convergence for sequences of random variables which provides a unified framework for the derivation of refinements of classical limit theorems, such as the central limit theorem, the Berry--Esseen theorem, precise large and moderate deviation results, local limit theorems and more.
The results presented in this paper rely on mod-$\phi$ approximation schemes, which were first introduced in \cite{barbour2014mod} and \cite{chaibi2020mod}, together with many examples of applications to probability theory, analytic number theory and combinatorics. 
\medskip 

The paper is organized as follows. Sections \ref{sec:mod-phi_convergence} and \ref{sec:mod-phi_approximation_schemes} provide a self-contained introduction to the main concepts of mod-$\phi$ convergence and mod-$\phi$ approximation schemes respectively. Section \ref{sec:mod-poisson_approximation} discusses the application of mod-Poisson approximation schemes to credit portfolio models. The main result is Theorem \ref{thm:modPoisson_convergence}, which states that the total number of portfolio defaults $L_n = \sum_{i=1}^n Y_i$ converges mod-Poisson conditionally on the mixing factor and can therefore be approximated using mod-Poisson approximation schemes $(\nu^{(r)}_n)_{n \in \Nb}$, where $r \ge 1$ is the order of approximation. Higher orders lead to better asymptotic approximations and can therefore be used to improve accuracy for finite $n$. The performance of this approximation is empirically tested on two benchmark applications -- the estimation of risk measures in Section \ref{sec:risk_measures_estimation} and the pricing of synthetic CDO tranches in Section \ref{sec:cdo_pricing} -- and is compared in terms of accuracy and computational time to the following commonly used estimation methods: recursive methodology \cite{hull2004valuation, brasch2004note}, the large deviations approximation \cite{dembo2004large}, Chen--Stein's method and the zero-bias transformation method \cite{el2008gauss, el2009stein} and Monte Carlo simulation, with and without importance sampling \cite{glasserman2005importance}. These methods and the corresponding algorithms are presented in Appendix \ref{sec:estimation_overview}. Section \ref{sec:mod-compound_poisson_approximation} presents the derivation of mod-compound Poisson approximation schemes for the case of portfolio losses $L_n = \sum_{i=1}^n Z_i\, Y_i$ with conditionally i.i.d.~exposures $(Z_i)_{i=1}^n$. Section \ref{sec:conclusions} concludes, and Appendices \ref{app:useful_formulae}, \ref{sec:mobius_function} and \ref{sec:stirling} present the detailed proofs of certain theoretical results. In particular, the framework of mod-$\phi$ convergence relies on certain combinatorial arguments related to the theory of Möbius inversion and to the theory of symmetric functions; for this later topic, we refer to \cite[Chapter 1]{Mac95}, and everything required is recalled in Section \ref{sec:stirling}.

\section{Mod-\texorpdfstring{$\phi$}{phi} convergence}
\label{sec:mod-phi_convergence}

Given a $\Zb$-valued random variable $X$ with probability law $\mu_X$, we can define its characteristic function as the Fourier transform of its law:
$$ \widehat{\mu}_X(\xi) = \E{\ee^{\I\xi X}} = \sum_{k \in \Zb} \mu_X(\{k\})\, \ee^{\I k \xi}, \quad \xi \in \Tb := \Rb/2 \pi \Zb.$$
Notice that the Fourier transform $\widehat{\mu}_X$ in this case is well-defined on $\Tb$, because we are assuming $X$ to be an integer-valued random variable.
\medskip 

If $\mu$ is an infinitely divisible distribution on $\Zb$, then its Fourier transform admits the following representation: 
$$\widehat{\mu}(\xi) = \ee^{\phi(\xi)}, \quad \xi \in \Tb,$$
where $\phi$ is a periodic function of period $2 \pi$, called the L\'{e}vy--Khintchine exponent \cite{steutel2003infinite} of the distribution. Infinitely divisible laws will play a fundamental role in mod-$\phi$ convergence. Two very important examples are given below.

\begin{example}{(\textbf{Poisson distribution})} Let $X$ be a Poisson random variable with parameter $\lambda$, i.e.~$X \sim \Po{\lambda}$. Then,  its L\'evy--Khintchine exponent is given by:
$$ \phi(\xi) = \lambda (\ee^{\I\xi} - 1), \quad \xi \in \Tb.$$
\end{example}

\begin{example}{(\textbf{Compound Poisson distribution})}
\label{ex:compound_poisson}
A random variable $X$ follows a compound Poisson distribution if it admits the following representation:
$$ X = \sum_{i=1}^N Z_i,$$
where $(Z_i)_{i=1}^\infty$ is a family of i.i.d.~$\Nb$-valued random variables distributed like $Z$, and $N$ is an independent Poisson random variable with parameter $\lambda$. In this case we write $X \sim \CP(\lambda, Z)$.
If $X \sim \CP(\lambda, Z)$, then its L\'evy--Khintchine exponent is given by:
$$
    \phi(\xi) := \log\left( \E{\ee^{\I\xi X}} \right) = \lambda(\widehat{\mu}_Z(\xi) - 1),
$$
where $\widehat{\mu}_Z$ is the Fourier transform of the law of $Z$.
For $\Nb$-valued random variables, it turns out that every infinitely divisible distribution is a compound Poisson distribution: if $X$ is a $\Nb$-valued infinitely divisible distribution, then it necessarily follows that there exists a $\lambda > 0$ and a $\Nb$-valued random variable $Z$ such that $X \sim \CP(\lambda, Z)$; see \cite[Theorem 3.2]{steutel2003infinite}.
\end{example}

By L{\'e}vy's continuity theorem \cite[Thm. 4.3]{kallenberg1997foundations}, knowledge of $\widehat{\mu}_X$ is equivalent to knowledge of the full probability law $\mu_X$ and limit theorems for probability laws can be  derived directly in Fourier space in terms of pointwise convergence of characteristic functions. For instance, the so-called \emph{law of small numbers} (which is a generalization of the original Poisson convergence theorem) states that if $Y_{i,n}, i = 1, \ldots, n$ is a triangular array of independent Bernoulli random variables with success probabilities $p_{i,n}$ such that $\sum_{i=1}^n p_{i,n} \to \lambda \in \Rb$ and $\max_{1 \le i \le n} p_{i,n} \to 0$, then $X_n := \sum_{i=1}^n Y_{i, n}$ converges in law to a random variable $Y$, with $Y \sim \Po{\lambda}$.
\medskip

This convergence can be proved from the pointwise convergence of $\widehat{\mu}_{X_n}$ to the characteristic function of $Y$ as $n \to \infty$ (see, for instance, \cite[Theorem 3.6.1]{durrett2010probability}).
But this limit law can also be interpreted as a non-asymptotic approximation result, stating that the law of $X_n$ can be approximated by a Poisson variable with parameter $\lambda_n = \sum_{i=1}^n p_{i,n}$, which is just a sum of $\lambda_n$ independent Poisson random variables with parameter one. We would expect this approximation to work well for $n \to \infty$, or equivalently for $\lambda_n \to \infty$.
\medskip 

More generally, given a sequence $(X_n)_{n \in \Nb}$ of $\Zb$-valued random variables, we are interested in the problem of approximating the law $\mu_{X_n}$ of $X_n$ -- which might be difficult to compute or simulate -- by a sum $Y_n$ of $\lambda_n$ i.i.d.~copies of a given $\Zb$-valued infinitely divisible law with L\'{e}vy--Khintchine exponent $\phi$, for $\lambda_n \to \infty$. Let us notice that we can make sense of the law of $Y_n$ even if $\lambda_n$ is not an integer, by placing oneself in the convolution semigroup of probability measures generated by $\phi$. This approximation problem leads to the following definition:

\begin{definition}[Mod-$\phi$ convergence]
\label{def:mod-phi_convergence}

We say that $(X_n)_{n \in \Nb}$ converges mod-$\phi$ with parameters $(\lambda_n)_{n \in \Nb}$ and limiting function $\psi$ if $\lambda_n \to \infty$ and
\begin{equation}
\label{eq:deconv_residues}
\widehat{\mu}_{X_n}(\xi)\, \ee^{-\lambda_n \phi(\xi)} = \psi_n(\xi), \quad \xi \in \Tb    
\end{equation}
with 
$$ \lim_{n \to \infty} \psi_n(\xi) = \psi(\xi).$$
\end{definition}
The convergence $\psi_n \to \psi$ usually occurs in a space of continuously derivable functions $\mathcal{C}^r(\Tb)$ endowed with the norm $\| f \|_{\Cc^r} = \sup_{|\alpha| \le r} \sup_{\xi \in \Tb} |\partial^\alpha f (\xi)|$. When a sequence $(X_n)_{n \in \Nb}$ converges mod-$\phi$, we can use the functions $\psi_n$ and $\psi$ to extract precise information about the behavior of $\mu_{X_n}$, both asymptotically and non-asymptotically, and to construct good approximations by the reference infinitely divisible law, as we will see in the next section.

\section{Mod-\texorpdfstring{$\phi$}{phi} approximation schemes}
\label{sec:mod-phi_approximation_schemes}

The right space to study these approximations turns out to be the Wiener algebra $\Ac(\Tb)$ of absolutely convergent Fourier series on $\Tb$, which is a Banach algebra under the pointwise product and the norm
$$\|\widehat{\mu}\|_{\Ac(\Tb)} = \sum_{n \in \Zb} |c_n(\widehat{\mu})|,$$
where $c_n(\widehat{\mu}) := \int_\Tb \widehat{\mu}(\xi)\, \ee^{-\I n\xi} \,d\xi$ is the $n$-th Fourier coefficient of $\widehat{\mu}$.
\medskip 

Working in the Wiener algebra $\Ac(\Tb)$ is essential for at least two reasons.
First of all, the characteristic function $\widehat{\mu}$ of a probability law $\mu$ always lies in $\Ac(\Tb)$ and its Fourier coefficients satisfy $c_n(\widehat{\mu}) = \mu(n)$, so that the Wiener algebra norm of $\widehat{\mu}$ is actually equal to the total variation norm of the law $\mu$ itself (or twice this norm depending on the chosen convention). This means that good approximation bounds in the Wiener algebra directly translate into good approximation bounds in total variation distance.\medskip 

Second, Wiener's $1/f$ theorem guarantees that $\ee^{-\lambda_n \phi(\xi)}$ lies in $\Ac(\Tb)$. This fact, together with Eq.~\eqref{eq:deconv_residues}, implies that $\psi_n \in \Ac(\Tb)$ and that it can be thought of as the deconvolution residue of the law $\mu_{X_n}$ by a sum of $\lambda_n$ independent copies of the infinitely divisible reference law with L\'evy--Khintchine exponent $\phi$. In other words, we can think of $\psi_n$ as that element of $\Ac(\Tb)$ that satisfies the following equation:
\begin{equation}
\label{eq:mod-phi}
\widehat{\mu}_{X_n}(\xi) = \psi_n(\xi)\, \ee^{\lambda_n \phi(\xi)}.    
\end{equation}
Clearly Equation \eqref{eq:mod-phi} implies that we can always use the deconvolution residue $\psi_n$ to reconstruct $\widehat{\mu}_{X_n}$ perfectly by pointwise multiplication with the Fourier transform of the reference infinitely divisible, but in practice $\psi_n$ might be as hard to compute as $\widehat{\mu}_{X_n}$ itself. Nevertheless one can construct good approximations for $\widehat{\mu}_{X_n}$ by substituting $\psi_n$ in Eq.~\eqref{eq:mod-phi} with another function $\chi_n \in \Ac(\Tb)$ that approximates it sufficiently well on $\Tb$ and that is easier to compute. This leads to the following definition of mod-$\phi$ approximation schemes, which was first given in \cite{chaibi2020mod}. In the sequel, we focus on $\Nb$-valued random variables, hence non-negative. This choice fits with the application that we have in mind, namely, the approximation of the distribution of the total loss variable of a credit portfolio. However, notice that if we were working with variables that can be positive or negative, then a straightforward extension of our methods to $\Zb$-valued random variables exist and is described in \cite[Example 1.11]{chaibi2020mod}.

\begin{definition}[Mod-$\phi$ approximation scheme of order $r$]
Let $(X_n)_{n \in \Nb}$ be a sequence of $\Nb$-valued random variables that converges mod-$\phi$ with parameters $(\lambda_n)_{n \in \Nb}$. We suppose that $\psi_n$ admits the following series expansion around zero:
$$
\psi_n(\xi) = 1 + \sum_{k=1}^\infty b_{k, n} (\ee^{\I \xi} - 1)^k.    
$$
Then the mod-$\phi$ approximation scheme of order $r$ for $(X_n)_{n \in \Nb}$ is a sequence of discrete signed measures $(\nu^{(r)}_n)_{n \in \Nb}$ on $\Zb$, such that
\begin{equation}
\label{eq:reconvolution_of_residues}
\widehat{\nu}^{(r)}_n(\xi) = \chi^{(r)}_n(\xi) \,\ee^{\lambda_n \phi(\xi)},
\end{equation}
where $\chi^{(r)}_n$ is the polynomial of degree $r$ that approximates $\psi_n$ around $0$ up to order $r$:
$$\chi^{(r)}_n (\xi) = 1 + \sum_{k=1}^r b_{k, n} (\ee^{\I\xi} - 1)^k .$$
\end{definition}

An explicit formula for the measure $\widehat{\nu}_n^{(r)}$ in terms of $\lambda_n$ and of the coefficients $b_{k,n}$ is given in \cite[Lemma 3.8 and Remark 3.9]{chaibi2020mod}. For instance, if $\phi=\Po{1}$ and $r=2$, then we have:
$$\nu^{(2)}_n(k) = \frac{\ee^{-\lambda_n}\,(\lambda_n)^k}{k!}\,\left(1 + b_{2,n}\,\left(1-\frac{2k}{\lambda_n} + \frac{k(k-1)}{(\lambda_n)^2}\right)\right),$$
so $\nu^{(2)}_n$ is in this case a perturbation of $\nu^{(0)}_n = \Po{\lambda_n}$. In general, 
the zero-th order approximation $\nu_n^{(0)}$ corresponds to approximating $\mu_{X_n}$ with a sum of $\lambda_n$ i.i.d.~copies of the reference infinitely divisible law, while higher order approximations will correspond to signed measures that approximate the law $\mu_{X_n}$ increasingly well. Let us be a bit more precise on this claim. In a companion paper \cite{companion}, we prove  that if $X_n$ is a sum of $n$ independent Bernoulli variables with parameters $p_1,\ldots,p_n$, then the corresponding approximations $\nu_n^{(r)}$ satisfy
\begin{equation} 
\sum_{k \in \Nb} |\P{X_n=k} - \nu_n^{(r)}(k)| \leq A \left(\frac{K\sigma_n}{\sqrt{\lambda_n}}\right)^{r+1},\label{eq:companion}
\end{equation}
with $\lambda_n = \sum_{i=1}^n p_i$, $(\sigma_n)^2 = \sum_{i=1}^n (p_i)^2$ and $A$ and $K$ universal constants. The mod-Poisson convergence of such sequences $(X_n)_{n \in \Nb}$ is explained in Theorem \ref{thm:modPoisson_convergence} below. Thus,  the quality of the approximation $\nu_n^{(r)}$ of $\mu_{X_n}$ indeed increases with the order of approximation $r$.  The results that we shall present in Sections \ref{sec:risk_measures_estimation} and \ref{sec:cdo_pricing} are numerical evidences of the general theoretical result \eqref{eq:companion}, and applications for the study of credit risk models.\medskip
 
In general, the way in which the measures $\nu^{(r)}_n$ incorporate the information contained in $\mu_{X_n}$ and in the residue $\psi_n$ as $r$ increases can also be understood in terms of factorial cumulants.

\begin{definition}[Factorial cumulant generating function]
If $X$ is a $\Nb$-valued random variable, then its factorial cumulant generating function (when it exists, which is always the case in this paper) is defined as
$$ \log\left( \E{(z+1)^X} \right) = \sum_{k=1}^\infty \frac{1}{k!}\, \kappa_k(X)\, z^k, \quad z \in \Cb,$$
and the coefficient $\kappa_k(X)$ is called the $k$-th factorial cumulant of $X$.
\end{definition}


\begin{example}
If $X \sim \Po{\lambda}$, then 
$$ \kappa_k(X) = \begin{cases} \lambda & \text{if $k=1$,} \\ 0 & \text{otherwise.} \end{cases} $$
\end{example}

We note that if $X$ is a random variable with law $\mu$, then the coefficients in the expansion of its Fourier transform $\widehat{\mu}(\xi)$ in powers of $z := (\ee^{\I\xi} - 1)$ are precisely its factorial cumulants, so that the coefficients of a mod-$\phi$ approximation scheme are naturally related to these quantities.
In particular, the following proposition shows that the measures $\nu^{(r)}_n$ achieve better approximations by matching exactly the factorial cumulants of $X_n$ up to order $r$, while maintaining the factorial cumulants of the reference infinitely divisible law for all higher orders. 

\begin{proposition}
\label{prop:factorial_approximation}
Let $(X_n)_{n \in \Nb}$ be a sequence of $\Nb$--valued random variables that converges mod-$\phi$ with parameters $(\lambda_n)_{n \in \Nb}$, and let $(Y_n)_{n \in \Nb}$ follow the reference infinitely divisible laws with exponents $(\lambda_n \phi)_{n \in \Nb}$.
If $(\nu_n^{(r)})_{n \in \Nb}$ is a mod-$\phi$ approximation scheme of order $r$ for $(X_n)_{n \in \Nb}$ and if we denote by $\kappa^{(r)}_{k,n}$ the $k$-th factorial cumulant of $\nu^{(r)}_n$, then 
$$ \kappa^{(r)}_{k,n} = \begin{cases} \kappa_k(X_n) & \text{for $k = 1, \ldots, r$,} \\ \kappa_k(Y_n) & \text{for $k \ge r+1$,} \end{cases}$$
where $\kappa_k(X)$ is the $k$-th factorial cumulant of the random variable $X$.
\end{proposition}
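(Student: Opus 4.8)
The cleanest route is to phrase everything through factorial cumulant generating functions and to observe that, under the substitution $z=\ee^{\I\xi}-1$, the factorial cumulant generating function of any law (or of any signed measure of total mass $1$) is simply the logarithm of its Fourier transform: for such a $W$ one has $\sum_{k\ge 1}\frac{\kappa_k(W)}{k!}\,z^k=\log\widehat{\mu}_W(\xi)$ as formal power series in $z$. Matching factorial cumulants up to, or from, a given order is then the same as matching Taylor coefficients of the corresponding logarithm. Note that the logarithms are legitimate formal power series near $z=0$ because $\chi^{(r)}_n(0)=\psi_n(0)=1$ and $\phi(0)=0$, so each Fourier transform equals $1$ at $\xi=0$.

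First I would take logarithms in the two structural identities. Equation \eqref{eq:reconvolution_of_residues} gives $\log\widehat{\nu}^{(r)}_n(\xi)=\log\chi^{(r)}_n(\xi)+\lambda_n\phi(\xi)$, Equation \eqref{eq:mod-phi} gives $\log\widehat{\mu}_{X_n}(\xi)=\log\psi_n(\xi)+\lambda_n\phi(\xi)$, and the reference law obeys $\log\widehat{\mu}_{Y_n}(\xi)=\lambda_n\phi(\xi)$, so that $\kappa_k(Y_n)=k!\,[z^k]\,\lambda_n\phi$, where $[z^k]$ denotes extraction of the coefficient of $z^k$. Subtracting the third identity from the first two, the whole statement collapses to the single pair of comparisons
\[
\kappa^{(r)}_{k,n}-\kappa_k(Y_n)=k!\,[z^k]\log\chi^{(r)}_n,
\qquad
\kappa_k(X_n)-\kappa_k(Y_n)=k!\,[z^k]\log\psi_n,
\]
so that I only have to understand the Taylor coefficients of $\log\chi^{(r)}_n$ and of $\log\psi_n$.

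For the low-order range $k\le r$ I would invoke a purely formal fact: if two power series with constant term $1$ agree up to order $r$, then so do their logarithms. Writing $\log(1+\sum_{k\ge1}b_k z^k)=\sum_{k\ge1}c_k z^k$ and differentiating, one obtains the Newton-type recursion $k\,c_k=k\,b_k-\sum_{j=1}^{k-1} j\,c_j\,b_{k-j}$, which shows inductively that $c_k$ is a polynomial in $b_1,\dots,b_k$ alone. Since $\chi^{(r)}_n$ shares with $\psi_n$ the coefficients $b_{1,n},\dots,b_{r,n}$, it follows that $[z^k]\log\chi^{(r)}_n=[z^k]\log\psi_n$ for $1\le k\le r$, and the first comparison yields $\kappa^{(r)}_{k,n}=\kappa_k(X_n)$.

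The high-order range $k\ge r+1$ is the heart of the argument and the step I expect to be the main obstacle. Here one needs $[z^k]\log\chi^{(r)}_n=0$, i.e.\ the \emph{logarithmic} residue of the order-$r$ scheme must carry no factorial cumulants beyond order $r$; granting this, the first comparison reduces immediately to $\kappa^{(r)}_{k,n}=\kappa_k(Y_n)$. The delicate point is that this is a statement at the level of $\log\psi_n$ rather than of $\psi_n$: the two truncations agree through order $r$ (which is all the low-order case used), but it is precisely the vanishing of the higher factorial cumulants of the residue that characterises the order-$r$ scheme here. I would therefore isolate as a lemma the assertion that the residue entering \eqref{eq:reconvolution_of_residues} satisfies $\log\chi^{(r)}_n=\pi_r(\log\psi_n)$, the degree-$r$ truncation of $\log\psi_n$, so that no monomial $z^k$ with $k\ge r+1$ survives; and I would sanity-check the resulting formula against the worked Poisson example, where it must return $\kappa^{(r)}_{k,n}=\kappa_k(\Po{\lambda_n})$ for all $k\ge r+1$.
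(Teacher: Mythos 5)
Your reduction to logarithms and your proof for the range $k \le r$ are correct, and in that range they reproduce (via a Newton recursion instead of the paper's M\"obius inversion over $\Pi(s)$) exactly what the paper obtains. The genuine gap is the one you flagged and declined to prove: the claim that $\log \chi^{(r)}_n$ contains no monomial $z^k$ with $k \ge r+1$. This claim is not merely missing — it is \emph{false} for the scheme as the paper defines it, where $\chi^{(r)}_n$ is the degree-$r$ polynomial truncation of $\psi_n$: a polynomial of degree $r \ge 1$ with constant term $1$ has a complex zero, so its logarithm is never a polynomial. Your own proposed sanity check makes this concrete. In the mod-Poisson case with $r=2$ one has $\chi^{(2)}_n = 1 + b_{2,n} z^2$ with $b_{2,n} = -\frac{1}{2}\sum_{i=1}^n (p_i)^2 \neq 0$, hence, writing $z = \ee^{\I\xi}-1$,
\[
\log \widehat{\nu}^{(2)}_n(\xi) = \lambda_n z + \log\left(1 + b_{2,n} z^2\right) = \lambda_n z + b_{2,n} z^2 - \frac{(b_{2,n})^2}{2}\, z^4 + \cdots,
\]
so the fourth factorial cumulant of $\nu^{(2)}_n$ equals $-12\,(b_{2,n})^2 \neq 0 = \kappa_4(Y_n)$, contradicting the statement at $k=4$. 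No argument, along your lines or any other, can close the case $k \ge r+1$ under the stated definition.

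It is worth seeing why the paper's own proof appears to succeed where you (correctly) stalled. When the paper applies the logarithm identity of Theorem \ref{thm:exp_log} to $\chi^{(r)}_n$, the truncation should enter as a global factor $\1_{\{|D| \le r\}}$ multiplying the entire coefficient $\sum_{\pi \in \Pi(|D|)} \prod_{B \in \pi} \left( \kappa_{|B|}(X_n) - \kappa_{|B|}(Y_n) \right)$; instead, the proof inserts a per-block indicator $\1_{\{|B| \le r\}}$ inside the product. The two are different: for $|D| = 4$, $r = 2$ and $\kappa_1(X_n) = \kappa_1(Y_n)$, the first vanishes while the second equals $3\left(\kappa_2(X_n)-\kappa_2(Y_n)\right)^2$. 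The per-block version is precisely the coefficient of the \emph{exponentially} truncated residue $\exp\bigl(\sum_{k=1}^r \frac{1}{k!}(\kappa_k(X_n)-\kappa_k(Y_n))\, z^k\bigr)$, which is an infinite series rather than a degree-$r$ polynomial. With that alternative definition of $\chi^{(r)}_n$, your missing lemma becomes trivially true and the proposition holds; with the paper's actual definition, the conclusion for $k \ge r+1$ must be weakened: the high-order factorial cumulants of $\nu^{(r)}_n$ differ from $\kappa_k(Y_n)$ by universal polynomials in $b_{2,n},\dots,b_{r,n}$ in which every monomial has at least two factors. So your instinct to isolate the high-order claim as a lemma and to test it on the worked Poisson example was exactly the right move; carrying the test out settles the matter, negatively, and also pinpoints the flaw in the paper's own argument.
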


\begin{proof}
From Equation \eqref{eq:mod-phi} we compute:
\begin{align*}
\psi_n(\xi) & = \exp \left( \sum_{k=1}^\infty \frac{1}{k!} \left( \kappa_k(X_n) - \kappa_k(Y_n) \right) z^k \right) \\
& = 1 + \sum_{k=1}^\infty \frac{1}{k!} \sum_{\pi \in \Pi(k)} \prod_{B \in \pi} \left( \kappa_{|B|}(X_n) - \kappa_{|B|}(Y_n) \right) z^k,
\end{align*}
by using the first identity of Theorem \ref{thm:exp_log} in the last step. Since $\chi^{(r)}_n$ approximates $\psi_n$ up to order $r$ in powers of $z$, we have:
$$ \log \left( \chi^{(r)}_n(\xi) \right) = \log \left( 1 + \sum_{k=1}^r \frac{1}{k!}\sum_{\pi \in \Pi(k)} \prod_{B \in \pi} \left( \kappa_{|B|}(X_n) - \kappa_{|B|}(Y_n) \right) z^k \right).$$
We can now compute the coefficients of the series expansion of $\log(\chi^{(r)}_n)$ in powers of $z$ by using the second identity of Theorem \ref{thm:exp_log}:
\begin{align*}
    s!\,[z^s] \log \left( \chi^{(r)}_n(\xi) \right) & = \sum_{\sigma \in \Pi(s)} \mu(\sigma, \widehat{1}_s) \prod_{D \in \sigma} \sum_{\pi \in \Pi(|D|)} \prod_{B \in \pi} \left( \kappa_{|B|}(X_n) - \kappa_{|B|}(Y_n) \right) \1_{\{|B| \le r\}} \\
    & = \sum_{\sigma \in \Pi(s)} \mu(\sigma, \widehat{1}_s) \sum_{\tau \le \sigma} \prod_{B \in \tau} \left( \kappa_{|B|}(X_n) - \kappa_{|B|}(Y_n) \right) \1_{\{|B| \le r\}} \\
    & = \sum_{\tau \in \Pi(s)} \sum_{\sigma \in \Pi(s)} \zeta(\tau, \sigma)\, \mu(\sigma, \widehat{1}_s) \prod_{B \in \tau} \left( \kappa_{|B|}(X_n) - \kappa_{|B|}(Y_n) \right) \1_{\{|B| \le r\}}\\
    & = \sum_{\tau \in \Pi(s)} \delta(\tau, \widehat{1}_s)  \prod_{B \in \tau} \left( \kappa_{|B|}(X_n) - \kappa_{|B|}(Y_n) \right) \1_{\{|B| \le r\}} \\
    & = \left( \kappa_{s}(X_n) - \kappa_{s}(Y_n) \right) \1_{\{s \le r\}}
\end{align*}
where we have used the convolution relation $\zeta \star \mu= \delta$ for the M\"obius function of the poset $\Pi(s)$. We refer the reader to Appendix \ref{sec:mobius_function} for a primer on the M\"obius function and its basic properties.
\medskip 

Finally, we can compute the factorial cumulants of the measures $\nu^{(r)}_n$ by taking logarithms in Equation \eqref{eq:reconvolution_of_residues}:
\begin{align*}
    \log \left( \widehat{\nu}^{(r)}_n(\xi) \right) & = \sum_{k=1}^\infty \frac{1}{k!} \left( \kappa_k(Y_n) \right) z^k + \log \left(\chi^{(r)}_n(\xi) \right) \\
    & = \sum_{k=1}^\infty \frac{1}{k!} \left( \kappa_k(Y_n) \right) z^k + \sum_{k=1}^r \frac{1}{k!} \left( \kappa_{k}(X_n) - \kappa_{k}(Y_n) \right) z^k \\
    & = \sum_{k=1}^r \frac{1}{k!} \left( \kappa_k(X_n) \right) z^k + \sum_{k=r+1}^\infty \frac{1}{k!} \left( \kappa_k(Y_n) \right) z^k.
\end{align*}
\end{proof}
As a particular case of Proposition \ref{prop:factorial_approximation}, we remark that in the case of mod-Poisson approximation schemes, the signed measures $\nu^{(r)}_n$ have all factorial cumulants equal to zero (because the Poisson distribution itself does) with the exception of the first $r$ cumulants, which exactly match the factorial cumulants of $X_n$.
Since factorial cumulant generating functions fully characterize probability distributions, Proposition  \ref{prop:factorial_approximation} also implies that $\nu^{(\infty)}_n = \mu_{X_n}$.
\medskip

Despite the availability of this interpretation in terms of factorial cumulants, the signed measures $\nu^{(r)}_n$ are in general difficult to compute, even if we have full knowledge of the coefficients $(b_{k,n})_{k=1}^r$. Nevertheless, the computation of expectations of functions of these measures can be done efficiently, as the following proposition shows.

\begin{proposition}
\label{prop:expectation_approximation}
Let $(X_n)_{n \in \Nb}$ be a sequence of $\Nb$-valued random variables that converges mod-$\phi$ with parameters $(\lambda_n)_{n \in \Nb}$, and let $(\nu^{(r)}_n)_{n \in \Nb}$ be its mod-$\phi$ approximation scheme of order $r$. Then, for any bounded function $f: \Nb \to \Rb$, the integral of $f$ with respect to $\nu^{(r)}_n$ is given by:
$$
    \nu^{(r)}_n(f) = \sum_{j \in \Nb} f(j) \: \nu^{(r)}_n(\{j\}) = \E{f(Y_n)} + \E{\Delta_n(r,f)(Y_n)}
$$
where $Y_n$ follows the reference infinitely divisible law with exponent $\lambda_n \phi$, and where the correction term $\Delta_n(r,f)$ is given by:
$$
\Delta_n(r,f)(j) = \sum_{k=1}^r b_{k,n} (\Delta_+^k (f))(j).
$$
Here, $\Delta_+^k$ denotes the $k$-th power of the forward finite difference operator:
$$(\Delta_+^k (f))(j) = \sum_{l=0}^k (-1)^{k-l} \binom{k}{l} f(j + l).$$
\end{proposition}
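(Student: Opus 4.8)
The plan is to read Equation~\eqref{eq:reconvolution_of_residues} as an identity between convolutions of signed measures. Since $Y_n$ has L\'evy--Khintchine exponent $\lambda_n\phi$, its characteristic function is exactly $\widehat{\mu}_{Y_n}(\xi) = \ee^{\lambda_n\phi(\xi)}$, so \eqref{eq:reconvolution_of_residues} reads $\widehat{\nu}^{(r)}_n = \chi^{(r)}_n \cdot \widehat{\mu}_{Y_n}$. As $\chi^{(r)}_n$ is itself a trigonometric polynomial, it is the Fourier transform of a finitely supported signed measure $\rho^{(r)}_n$ on $\Zb$, and the product of Fourier transforms corresponds to the convolution $\nu^{(r)}_n = \rho^{(r)}_n * \mu_{Y_n}$. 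The whole statement then follows by integrating $f$ against this convolution.

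First I would identify the atoms of $\rho^{(r)}_n$. Expanding each monomial by the binomial theorem, $(\ee^{\I\xi}-1)^k = \sum_{l=0}^k \binom{k}{l}(-1)^{k-l}\,\ee^{\I l\xi}$, rewrites $\chi^{(r)}_n(\xi) = 1 + \sum_{k=1}^r b_{k,n}(\ee^{\I\xi}-1)^k$ as a finite linear combination of the characters $\ee^{\I l\xi}$. Reading off coefficients, $\rho^{(r)}_n$ is the point mass at $0$ plus $\sum_{k=1}^r b_{k,n}\sum_{l=0}^k \binom{k}{l}(-1)^{k-l}\,\delta_l$. Next I would use the elementary fact that multiplication of a characteristic function by $\ee^{\I l\xi}$ translates the underlying law by $l$, so that $\int f\,\dd(\delta_l * \mu_{Y_n}) = \E{f(Y_n+l)}$ for every bounded $f$. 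Expanding $\nu^{(r)}_n(f) = \int f\,\dd(\rho^{(r)}_n * \mu_{Y_n})$ against the atoms above yields
$$\nu^{(r)}_n(f) = \E{f(Y_n)} + \sum_{k=1}^r b_{k,n}\sum_{l=0}^k \binom{k}{l}(-1)^{k-l}\,\E{f(Y_n+l)}.$$
By linearity of expectation the inner sum equals $\E{(\Delta_+^k f)(Y_n)}$, exactly by the stated definition of $\Delta_+^k$, and collecting the sum over $k$ inside a single expectation reconstructs $\E{\Delta_n(r,f)(Y_n)}$, which is the claim.

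The only matters requiring care are the interchanges of summation implicit in passing from the atoms of $\nu^{(r)}_n$ to the integral of $f$; equivalently, one may work directly from $\nu^{(r)}_n(\{j\}) = \sum_m \rho^{(r)}_n(\{m\})\,\mu_{Y_n}(\{j-m\})$, substitute into $\sum_j f(j)\,\nu^{(r)}_n(\{j\})$, and reindex by $i = j-m$. These interchanges are legitimate because $f$ is bounded, $\mu_{Y_n}$ is a probability measure, and $\rho^{(r)}_n$ has finite support and hence finite total variation, so every sum involved converges absolutely and Fubini's theorem for series applies. I do not anticipate a genuine obstacle here: the entire content lies in recognizing the convolution structure of \eqref{eq:reconvolution_of_residues} and in translating ``multiplication by $\ee^{\I l\xi}$'' into ``shift by $l$'', after which the forward difference operator $\Delta_+^k$ reassembles from its binomial expansion purely formally.
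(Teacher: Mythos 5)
Your proof is correct, and it takes a genuinely different route from the paper's. You read Equation \eqref{eq:reconvolution_of_residues} multiplicatively, as the convolution identity $\nu_n^{(r)} = \rho_n^{(r)} * \mu_{Y_n}$, where $\rho_n^{(r)}$ is the finitely supported signed measure whose Fourier transform is $\chi_n^{(r)}$; identifying its atoms via the binomial theorem, translating ``multiplication by $\ee^{\I l \xi}$'' into ``shift by $l$'', and invoking Fubini (licit since $f$ is bounded, $\mu_{Y_n}$ is a probability measure, and $\rho_n^{(r)}$ has finite support) gives the claim in one stroke. The paper instead works on the dual side: it applies the Parseval formula to summable $f$ and $g=\nu_n^{(r)}$, shows by induction on $k$ that multiplying $\overline{\widehat{f}}$ by $(\ee^{\I\xi}-1)^k$ amounts to replacing $f$ by $\Delta_+^k(f)$ --- the boundary terms vanishing because $\widehat{\nu}_n^{(0)}$ has no Fourier coefficients of negative index, i.e.~$\mu_{Y_n}$ is supported on $\Nb$ --- and then extends from $\ell^1(\Nb)$ to bounded $f$ by truncation ($f_L(j) = \1_{j \leq L}\,f(j)$) and dominated convergence. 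Your convolution identity is precisely the adjoint of the paper's identity of linear forms $\nu^{(r)}_n = \nu^{(0)}_n \circ (\id + \sum_{k=1}^r b_{k,n}\,\Delta_+^k)$, but your execution is more economical: it bypasses Parseval, the induction, and the two-stage extension, since absolute convergence holds directly for bounded $f$. It also yields the paper's closing remark essentially for free: the same Fubini bound applies whenever the shifted functions $j \mapsto f(j+l)$, $0 \leq l \leq r$, are integrable against $\mu_{Y_n}$, which covers polynomially bounded $f$ when the reference infinitely divisible law has the corresponding moments.
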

\begin{proof}
A similar result with functions $f \in \ell^2(\Nb)$ is stated in \cite[Proposition 1.12]{chaibi2020mod}; in the sequel, we give a new proof when $f$ is only assumed to be bounded, and at the end we shall even explain how to extend the result to polynomially bounded functions. Since $\nu^{(r)}_n$ is a finite signed measure, it is in $\ell^1(\Nb) \subset \ell^2(\Nb)$. On the other hand, for any $f,g\in \ell^2(\Nb)$, the Parseval formula holds:
$$\sum_{j \in \Nb} f(j)\,g(j) = \int_{0}^{2\pi} \overline{\widehat{f}(\xi)}\,\widehat{g}(\xi)\,\frac{d\xi}{2\pi} .$$
In particular, if $f \in \ell^2(\Nb)$ and $g=\nu^{(r)}_n$, we obtain:
$$\nu^{(r)}_n(f)=\nu^{(0)}_n(f)+\sum_{k=1}^r b_{k,n}\left(\int_{0}^{2\pi} \overline{\widehat{f}(\xi)\,(\ee^{-\I\xi}-1)^k}\,\,\widehat{\nu}^{(0)}_n(\xi)\,\frac{d\xi}{2\pi}\right),$$
where $\widehat{\nu}^{(0)}_n(\xi)=\ee^{\lambda_n\phi(\xi)}$. However, for $f \in \ell^1(\Nb)$, we have 
\begin{align*}
    \widehat{f}(\xi)\,(\ee^{-\I \xi}-1) &= \sum_{j\in \Nb}f(j)\,\ee^{\I j \xi} (\ee^{-\I \xi}-1) \\
    &= f(0)\, \ee^{-\I \xi} +  \sum_{j\in \Nb}(f(j+1)-f(j))\,\ee^{\I j \xi}  \\ 
    &= f(0)\, \ee^{-\I \xi}+ \widehat{\Delta_+(f)}(\xi)
    \end{align*}
so
\begin{align*}
    \int_0^{2\pi} \overline{\widehat{f}(\xi)\,(\ee^{-\I \xi}-1)}\,\widehat{\nu}^{(0)}_n(\xi)\,\frac{d\xi}{2\pi} &= f(0) \int_{0}^{2\pi}\ee^{\I \xi}\,\widehat{\nu}^{(0)}_n(\xi)\,\frac{d\xi}{2\pi} + \int_0^{2\pi} \overline{\widehat{\Delta_+(f)}(\xi)}\,\,\widehat{\nu}^{(0)}_n(\xi)\,\frac{d\xi}{2\pi} \\ 
    &= \nu^{(0)}_n(\Delta_+(f))
\end{align*}
since the first integral vanishes ($\widehat{\nu}^{(0)}_n$ has only positive Fourier coefficients). By an immediate induction,
$$\int_{0}^{2\pi} \overline{\widehat{f}(\xi)\,(\ee^{-\I\xi}-1)^k}\,\,\widehat{\nu}^{(0)}_n(\xi)\,\frac{d\xi}{2\pi} = \nu^{(0)}_n(\Delta_+^k(f))$$
for any $k \geq 1$, whence the result for $f \in \ell^1(\Nb)$. So, we have the equality of linear forms on the space of summable real functions on $\Nb$:
$$\nu^{(r)}_n = \nu^{(0)}_n \circ \left(\id + \sum_{k=1}^r b_{k,n}\,\Delta_+^k\right).$$
Consider now a bounded function $f : \Nb \to \Rb$, and for $L \in \Nb$, denote $f_L(j) = \1_{j \leq L}\,f(j)$, which is in $\ell^1(\Nb)$. Since $\nu^{(0)}_n$ and $\nu^{(r)}_n$ are in $\ell^1(\Nb)$, for any sequence of functions $(g_L)_{L \in \Nb}$ which are uniformly bounded by a constant $K$ and such that $g_L(j) \to_{L \to \infty} g(j)$ for any $j \in \Nb$, $\nu^{(0)}_n(g_L) \to_{L \to \infty} \nu^{(0)}_n (g)$ and 
 $\nu^{(r)}_n(g_L) \to_{L \to \infty} \nu^{(r)}_n (g)$ by the dominated convergence theorem. Here, we have obviously 
 \begin{align*}
 &f_L(j) \to_{L \to \infty} f(j); \\
 &\left(f_L(j) + \sum_{k=1}^r b_{k,n}\,(\Delta_+^kf_L)(j)\right) \to_{L \to \infty} \left(f(j) + \sum_{k=1}^r b_{k,n}\,(\Delta_+^kf)(j)\right),
 \end{align*}
 so
 $$\nu^{(r)}_n(f) = \lim_{L \to \infty} \nu^{(r)}_n(f_L) =  \lim_{L \to \infty} \nu^{(0)}_n\left(f_L + \sum_{k=1}^r b_{k,n}\,\Delta_+^k(f_L)\right) = \nu^{(0)}_n\left(f + \sum_{k=1}^r b_{k,n}\,\Delta_+^k(f)\right).$$
 Let us remark that the dominated convergence argument works for a larger class of functions: a sufficient assumption is that $(f(j))_{j \in \Nb}$ and its shifts $(f(j+l))_{j \in \Nb}$ with $1\leq l\leq r$ are bounded by functions which are integrable against the reference infinitely divisible distribution $\nu^{(0)}_n$. In particular, if $\phi$ is the exponent of the Poisson distribution, then the formula of the proposition holds for any $f$ bounded by a polynomial function. More generally, if $\phi=\CP(\lambda,Z)$ with $\E{Z^r}<+\infty$, then the formula of the proposition holds for any $f$ bounded by a polynomial function with degree $r$. 
\end{proof}

We can summarize Proposition \ref{prop:expectation_approximation} by saying that expectations of the form $\E{f(X_n)}$ can be approximated with a mod-$\phi$ approximation scheme $(\nu^{(r)}_n)_{n \in \Nb}$ in two steps:
\begin{enumerate}
    \item Replace $X_n$ by the infinitely divisible random variable $Y_n$.
    \item Correct the function $f$ by adding the correction term $\Delta_n(r,f)$.
\end{enumerate}
The main advantage of this approximation procedure is that computing integrals of the reference infinitely divisible law is typically easier, as they may even admit closed-form expressions in terms of well-known special functions. 

\begin{remark}
If the reference infinitely divisible law $\phi$ has a moment of order $3$, and if the convergence of residues $\psi_n(\xi) \to \psi(\xi)$ occurs in the space $\mathcal{C}^{r+2}(\Tb)$, then \cite[Theorem 3.11]{chaibi2020mod} shows that for any $r \geq 0$ and any bounded function $f : \Nb \to \Rb$, the difference 
$|\E{f(X_n)} - \E{f(Y_n)} - \E{\Delta_n(r,f)(Y_n)}|$ goes to $0$ as $n$ goes to infinity, with a speed of convergence which improves with the order of approximation $r$. Indeed,
\begin{align*}
    |\E{f(X_n)} - \E{f(Y_n)} - \E{\Delta_n(r,f)(Y_n)}| &= |\mu_{X_n}(f) - \nu_n^{(r)}(f)| \\
    &\leq \|f\|_\infty\,d_{\mathrm{TV}}(\mu_{X_n},\nu_n^{(r)}) = O\left(\frac{1}{(\lambda_n)^{\frac{r+1}{2}}}\right).
\end{align*}
In the case which we shall examine in Section \ref{sec:mod-poisson_approximation} (sums of independent Bernoulli variables), the result follows also from the general estimate \eqref{eq:companion}.
\end{remark}

\section{Mod-Poisson approximation}
\label{sec:mod-poisson_approximation}

In the context of credit risk we are interested in approximating the total losses of a credit portfolio with $n$ counterparties, which is given by
$$
L_n = \sum_{i=1}^n Z_i \,Y_i,
$$
where $Y_i$ is the default indicator function for the $i$-th counterparty (i.e.~a Bernoulli random variable with $\P{Y_i = 1} = 1 - \P{Y_i = 0} = p_i$, where $p_i$ is the default probability) and $Z_i$ is the monetary loss incurred by the portfolio due to that counterparty's default.
\medskip

It is customary in credit risk to assume that the random variables $(Y_i)_{i=1}^n$ are conditionally independent given some underlying (macroeconomic or purely statistical) latent factor, $\Psi$, so that one can write $\P{Y_i = 1|\Psi} = p_i(\Psi)$, for some measurable function $p_i$, which depends on the particular credit risk model we are interested in studying.
On the other hand, the exposures $(Z_i)_{i=1}^n$ can be either constant or random. In the latter case, it is common practice to assume them to be i.i.d.~random variables, independent of $(Y_i)_{i=1}^n$. A less common choice is to incorporate the dependence on the underlying factor $\Psi$ by choosing $(Z_i)_{i=1}^n$ to be conditionally i.i.d.~given $\Psi$.
\medskip

In this section we focus on the case of constant unit exposures (i.e.~$Z_i = 1$, for all $i = 1, \ldots, n$), which is related to the classical Poisson approximation problem, but we anticipate that all results presented here will be extended in Section \ref{sec:mod-compound_poisson_approximation} to the general case of $(Z_i)_{i=1}^n$ conditionally i.i.d.~given $\Psi$.
The basic idea is to derive mod-Poisson convergence of $L_n$ conditionally on $\Psi$ and use mod-Poisson approximation schemes to estimate functionals of $L_n$ conditionally on $\Psi$. Unconditional estimates would then follow by integrating (numerically) on $\Psi$.

\begin{theorem}[\textbf{Mod-Poisson convergence}]
\label{thm:modPoisson_convergence}
Let $(L_n)_{n \in \Nb}$ be a sequence of total portfolio losses given by $L_n = \sum_{i=1}^n Y_i$, where we assume the $Y_i$'s to be independent random variables such that $Y_i \sim \Be{p_i}$.
\begin{enumerate}
    \item If $\sum_{i=1}^\infty p_i = +\infty$ and $\sum_{i=1}^\infty (p_i)^2 < +\infty$, then $(L_n)_{n \in \Nb}$ converges mod-$\phi$ with parameters $\lambda_n = \sum_{i=1}^n p_i$ and with reference infinitely divisible law the Poisson distribution $\Po{1}$ (i.e.~$\phi(\xi) = \ee^{\I\xi} - 1$).
    \item Furthermore, $(L_n)_{n \in \Nb}$ admits a mod-Poisson approximation scheme of order $r$ with the following coefficients:
\begin{equation}
\label{eq:mod-poisson_approximation_scheme_coefs}
 b_{k,n} = \frac{1}{k!} \sum_{\substack{\pi \in \Pi(k) \\ \forall B \in \pi,\,|B| \ge 2}} \mu(\widehat{0}_k, \pi) \left(\prod_{B \in \pi} \mathfrak{p}_{|B|, n}\right)
\end{equation}
where $\Pi(k)$ denotes the poset of set partitions of $\{1, \ldots, k\}$; $\mu(\cdot, \cdot)$ is the M\"obius function for the incidence algebra of the poset $\Pi(k)$; and $\mathfrak{p}_{k,n} := \sum_{j=1}^n (p_j)^k$ for $k \geq 2$.
\end{enumerate}
\end{theorem}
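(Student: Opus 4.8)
The plan is to reduce both statements to a single explicit computation of the deconvolution residue $\psi_n$ as an infinite product, and then to read off its analytic behaviour (part~1) and its coefficients (part~2).

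For part~(1) I would start from the independence of the $Y_i$ to factor the characteristic function as $\widehat{\mu}_{L_n}(\xi) = \prod_{i=1}^n(1 + p_i(\ee^{\I\xi}-1))$, and observe that, because $\lambda_n = \sum_{i=1}^n p_i$, the Poisson exponential factors the same way: $\ee^{\lambda_n\phi(\xi)} = \prod_{i=1}^n \ee^{p_i(\ee^{\I\xi}-1)}$. Writing $z = \ee^{\I\xi}-1$, this yields
$$\psi_n(\xi) = \widehat{\mu}_{L_n}(\xi)\,\ee^{-\lambda_n\phi(\xi)} = \prod_{i=1}^n (1+p_i z)\,\ee^{-p_i z}.$$
Since $\sum_i p_i = +\infty$ guarantees $\lambda_n \to \infty$, it remains to prove that this product converges. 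Here I would use that each factor satisfies $(1+p_i z)\ee^{-p_i z} - 1 = -\tfrac12(p_i z)^2 + O(p_i^3)$, so that $\| (1+p_i z)\ee^{-p_i z} - 1\|_{\mathcal{C}^r(\Tb)} = O(p_i^2)$ uniformly (the $z^0$ and $z^1$ terms cancel, and $z$ is a smooth bounded function of $\xi$). Because $\sum_i p_i^2 < +\infty$ forces $p_i \to 0$ and controls the tail, the product converges in $\mathcal{C}^r(\Tb)$ for every $r$, to $\psi(\xi) = \prod_{i=1}^\infty (1+p_i z)\ee^{-p_i z}$; this is stronger than the pointwise convergence required by Definition~\ref{def:mod-phi_convergence}.

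For part~(2) I would identify the coefficients $b_{k,n}$ through the factorial cumulants of $L_n$, mirroring the proof of Proposition~\ref{prop:factorial_approximation}. The factorial cumulant generating function factorizes as $\log\E{(z+1)^{L_n}} = \sum_{i=1}^n \log(1+p_i z) = \sum_{k\ge1}\frac{(-1)^{k-1}}{k}\mathfrak{p}_{k,n}\,z^k$, so that $\kappa_k(L_n) = (-1)^{k-1}(k-1)!\,\mathfrak{p}_{k,n}$. Since the reference law $Y_n \sim \Po{\lambda_n}$ has $\kappa_1(Y_n) = \lambda_n = \mathfrak{p}_{1,n} = \kappa_1(L_n)$ and $\kappa_k(Y_n) = 0$ for $k\ge2$, the differences $g_k := \kappa_k(L_n)-\kappa_k(Y_n)$ vanish at $k=1$ and equal $(-1)^{k-1}(k-1)!\,\mathfrak{p}_{k,n}$ for $k\ge2$. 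Feeding this into the exponential formula (the first identity of Theorem~\ref{thm:exp_log}) gives
$$\psi_n(\xi) = 1 + \sum_{k\ge1}\frac{z^k}{k!}\sum_{\pi\in\Pi(k)}\prod_{B\in\pi} g_{|B|},$$
whence $b_{k,n} = \frac{1}{k!}\sum_{\pi\in\Pi(k)}\prod_{B\in\pi}g_{|B|}$.

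The last step, which I expect to be the crux of part~(2), is to rewrite this sum in the stated Möbius form. Because $g_1 = 0$, only partitions all of whose blocks satisfy $|B|\ge2$ survive, which already produces the constraint in Eq.~\eqref{eq:mod-poisson_approximation_scheme_coefs}. For such a $\pi$ I would invoke the standard factorization of the Möbius function of the partition lattice over the interval $[\widehat{0}_k,\pi]\cong\prod_{B\in\pi}\Pi(|B|)$, namely $\mu(\widehat{0}_k,\pi) = \prod_{B\in\pi}(-1)^{|B|-1}(|B|-1)!$ (recalled in Appendix~\ref{sec:mobius_function}), to recognize that $\prod_{B\in\pi}g_{|B|} = \mu(\widehat{0}_k,\pi)\prod_{B\in\pi}\mathfrak{p}_{|B|,n}$; substituting this identity yields exactly Eq.~\eqref{eq:mod-poisson_approximation_scheme_coefs}. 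The only delicate point is the termwise bookkeeping of the factorials $(|B|-1)!$ coming from $\kappa_{|B|}(L_n)$ against those in the Möbius function, but these match block by block, so no global combinatorial identity beyond the multiplicativity of $\mu$ over the interval $[\widehat{0}_k,\pi]$ is needed.
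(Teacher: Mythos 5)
Your proposal is correct and follows essentially the same route as the paper: compute $\psi_n$ as the product $\prod_{i=1}^n(1+p_i z)\,\ee^{-p_i z}$, observe that its logarithm is a power series in $z=\ee^{\I\xi}-1$ starting at $z^2$ with coefficients $(-1)^{k-1}(k-1)!\,\mathfrak{p}_{k,n}$ (your factorial-cumulant bookkeeping is just another way of phrasing the paper's direct expansion of $\log(1+p_jz)-p_jz$), and then apply the first identity of Theorem \ref{thm:exp_log} together with $\mu(\widehat{0}_k,\pi)=\prod_{B\in\pi}(-1)^{|B|-1}(|B|-1)!$ to get Eq.~\eqref{eq:mod-poisson_approximation_scheme_coefs}. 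Your convergence argument for part~(1) (bounding $\|(1+p_iz)\ee^{-p_iz}-1\|_{\mathcal{C}^r}=O(p_i^2)$ and using $\sum_i p_i^2<\infty$) is a minor, valid variant of the paper's uniform-convergence claim for the log series, and if anything slightly sharper.
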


\begin{proof}
We compute the deconvolution residues as in Equation \eqref{eq:deconv_residues} using the explicit form for the characteristic function of $L_n$ and of the $\Po{1}$ distribution, obtaining:

\begin{align}
\psi_n(\xi) & = \widehat{\mu}_{L_n}(\xi) \,\ee^{-\sum_{j=1}^n p_j (\ee^{\I \xi} - 1)} \nonumber \\
& = \prod_{j=1}^n \left( 1 + p_j (\ee^{\I \xi} - 1) \right) \ee^{- p_j (\ee^{\I \xi}-1)} \nonumber \\
& = \exp \left( \sum_{j=1}^n \log \left( 1 + p_j (\ee^{\I \xi} - 1) \right) - p_j (\ee^{\I \xi} - 1) \right) \nonumber \\
& = \exp \left( \sum_{k=2}^\infty \frac{(-1)^{k-1}}{k} \left( \ee^{\I \xi} - 1 \right)^k \mathfrak{p}_{k,n} \right), \label{eq:psi_n}
\end{align}
where we have defined $\mathfrak{p}_{k,n} := \sum_{j=1}^n (p_j)^k$.
\medskip

The condition $\sum_{j=1}^\infty (p_j)^2 < +\infty$ guarantees that $\psi_n(\xi)$ converges to a limit $\psi(\xi)$ uniformly in $\Tb$, so the mod-$\phi$ convergence is proved.
In order to compute the coefficients $(b_{k,n})_{k=1}^r$ for the mod-$\phi$ approximation scheme of order $r$, we need to extract the coefficients of the series expansion of $\psi_n(\xi)$ in powers of $(\ee^{\I\xi}-1)$. This is again an application of the first identity of Theorem \ref{thm:exp_log}, since $\psi_n(\xi)$ is the exponential of the generating series $G(z) = \sum_{k=2}^\infty \frac{g_k}{k!}\,z_k$ with $g_{k\geq 2} = (-1)^{k-1} (k-1)!\,\mathfrak{p}_{k,n}$. Hence,
\begin{align*}
b_{k,n} = [z^k] \exp(G(z)) &= \frac{1}{k!} \sum_{\substack{\pi \in \Pi(k) \\\forall B \in \pi,\, |B| \ge 2}} \prod_{B \in \pi} g_{|B|} \\ 
&= \frac{1}{k!} \sum_{\substack{\pi \in \Pi(k) \\ \forall B \in \pi,\,|B| \ge 2}} \mu(\widehat{0}_k,\pi)\,\prod_{B \in \pi} p_{|B|,n}
\end{align*}
by using the formula for the Möbius function of the poset $\Pi(k)$ computed in the Appendix \ref{sec:mobius_function}.
Notice that the sum runs over all set partitions with each block of size at least two.
\end{proof}

\begin{remark}
\label{rk:partitions}
The computational time for the evaluation of the coefficients $(b_{k,n})_{k=1}^r$ can be substantially reduced by noticing that in \eqref{eq:mod-poisson_approximation_scheme_coefs} the term in the summation depends only on the type of the set partition $\pi$. This leads to the following equivalent, but computationally more advantageous expression:
$$b_{k,n} = \sum_{\substack{\lambda \in P(k) \\ \lambda=(\lambda_1\geq \lambda_2\geq\cdots \geq \lambda_\ell\geq 2)}} \frac{(-1)^{(k - \ell(\lambda))}}{z_\lambda}\,  \mathfrak{p}_{\lambda,n},$$
where $P(k)$ is the set of integer partitions of $k$, i.e.~the set of all finite non-increasing sequences of positive integers, $\lambda = (\lambda_1 \ge \lambda_2 \ge \cdots \ge \lambda_\ell)$, such that $\sum_{i=1}^\ell \lambda_i = k$. We also denote $\ell(\lambda)$ the number of parts of the integer partition $\lambda$; $\mathfrak{p}_{\lambda, n} = \prod_{i=1}^{\ell(\lambda)} \mathfrak{p}_{\lambda_i, n}$; and $z_\lambda = \prod_{k \ge 1} k^{m_k(\lambda)} (m_k(\lambda))!$, where $m_k(\lambda)$ is the number of parts of $\lambda$ of size $k$. In the expression of $b_{k,n}$ in terms of the $\mathfrak{p}_{\lambda,n}$, the sum runs over those integer partitions $\lambda \in P(k)$ such that all the parts of $\lambda$ are larger than $2$.
\end{remark}

The first few coefficients of the mod-$\phi$ approximation in terms of the obligors' default probabilities are:
\begin{align*}
    b_{1,n} & = 0 \qquad\qquad\,\,\,;\qquad 
    b_{2,n}  = - \frac{1}{2} \sum_{i=1}^n (p_i)^2; \\
    b_{3,n} & = \frac{1}{3} \sum_{i=1}^n (p_i)^3  \qquad;\qquad 
    b_{4,n} = - \frac{1}{4} \sum_{i=1}^n (p_i)^4 + \frac{1}{8} \left( \sum_{i=1}^n (p_i)^2 \right)^2.
\end{align*}

We remark that the first-order correction to the Poisson approximation presented in \cite{el2008gauss, el2009stein} and based on the Chen--Stein method corresponds to a mod-Poisson approximation scheme of order $r=2$. On the other hand, higher order approximations have so far remained inaccessible to the Chen--Stein method.

\begin{remark}
Equation \eqref{eq:mod-poisson_approximation_scheme_coefs} shows that the coefficients $b_{k,n}$ are symmetric functions in the default probabilities $p_i$, because they are polynomials in the Newton power sums $\mathfrak{p}_{k,n}$. By using the combinatorics of symmetric functions, one can rewrite the coefficients $(b_{k,n})_{k=1}^r$ as polynomials in the moments $(\E{(L_n)^k})_{k=1}^r$ of the total portfolio loss variable. Thus, if $M_{k,n}=\E{(L_n)^k}$ (which can be estimated numerically), then
\begin{equation}
b_{k,n} =  \frac{(-1)^{k}}{k!}\,(M_{1,n})^k+\sum_{1\leq m \leq l\leq k}  \frac{(-1)^{k-m}}{(k-l)!\,l!}\, \genfrac[]{0pt}{0}{l}{m}\,M_{m,n}(M_{1,n})^{k-l},\label{eq:bkn_in_terms_of_Mkn}
\end{equation}
where $\genfrac[]{0pt}{1}{l}{m}$ is the Stirling number which counts the permutations of size $l$ with exactly $m$ disjoint cycles (taking into account the fixed points as cycles with length $1$). In particular, knowledge of these first moments is sufficient in order to construct the $r$-th order approximation scheme $\nu_n^{(r)}$, and we do not need to know all the individuals default probabilities. The first coefficients are:
\begin{align*}
    b_{1,n} &=0  \qquad;\qquad b_{2,n}=\frac{1}{2}\,(M_{2,n}-M_{1,n}-(M_{1,n})^2); \\ 
    b_{3,n}&=\frac{1}{2}\,((M_{1,n})^2-M_{2,n}-M_{2,n}M_{1,n}) + \frac{1}{3}\,(M_{1,n} + (M_{1,n})^3 ) + \frac{1}{6}\,M_{3,n}.
\end{align*}
The proof of Eq.~\eqref{eq:bkn_in_terms_of_Mkn} is given in Appendix \ref{sec:stirling}.
\end{remark}
\medskip

The following proposition provides explicit closed-form expressions for the mod-Poisson approximations of the expectation of two common functions in credit risk, namely the tail function and the call function, which will be used in the numerical simulations of Sections \ref{sec:risk_measures_estimation} and \ref{sec:cdo_pricing}.

\begin{proposition}[Estimation formul{\ae}]\label{prop:estimation_formulae}

Let $\gamma$ be the lower incomplete gamma function given by:
$$ \gamma(x, \lambda) = \int_0^\lambda t^{x-1} \ee^{-t} dt.$$
Then:

\begin{enumerate}
    \item \textbf{Tail function}. Fix a value $x \in \Rb$, and let $f(j) = \1_{\{j > x\}} = \1_{\{j > \lfloor x \rfloor\}}$. Then,
\begin{equation}
\label{eq:tail_function_approximation}
\nu^{(r)}_n(f) = \frac{1}{\floor{x}!}\, \gamma \left( \floor{x} + 1, \lambda_n \right) + \ee^{- \lambda_n} \sum_{j = \floor{x} - r + 1}^{\floor{x}} \frac{(\lambda_n)^j}{j!}\, \Delta_n(r,f)(j).
\end{equation}
where $\Delta_n(r,f)$ is the correction term for $f$ as in Proposition \ref{prop:expectation_approximation}. 
\medskip   

\item \textbf{Call function}. Fix a value $K \in \Rb$, and let $f(j) = (j - K)^+$,  which is used in the pricing of call options and CDOs. Then,
\begin{align}
\label{eq:call_function_approximation}
\nu^{(r)}_n(f) = & \frac{\lambda_n}{(\ceil{K} -2)!}\, \gamma \left( \ceil{K} - 1, \lambda_n \right) - \frac{K}{(\ceil{K} - 1)!}\, \gamma \left( \ceil{K} , \lambda_n \right) \nonumber \\
& + \ee^{- \lambda_n} \sum_{j = \floor{K} - r + 1}^{\floor{K}} \frac{(\lambda_n)^j}{j!}\, \Delta_n(r,f)(j)
\end{align}
where $\Delta_n(r,f)$ is the correction term for $f$ as in Proposition \ref{prop:expectation_approximation}.
\end{enumerate}

\end{proposition}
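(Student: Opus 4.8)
The plan is to reduce both formulae to Proposition~\ref{prop:expectation_approximation}, which already gives
$\nu^{(r)}_n(f) = \E{f(Y_n)} + \E{\Delta_n(r,f)(Y_n)}$ with $Y_n \sim \Po{\lambda_n}$ and $\Delta_n(r,f)(j) = \sum_{k=1}^r b_{k,n}\,(\Delta_+^k f)(j)$. For the call function $f(j)=(j-K)^+$, which is unbounded, I would first invoke the polynomial-growth extension of that proposition noted at the end of its proof: since the reference law is Poisson, the identity holds for any $f$ of polynomial growth, and $(j-K)^+$ is dominated by a degree-one polynomial. Both claimed identities then split into a \emph{main Poisson term} $\E{f(Y_n)}$ and a \emph{finite correction term} $\E{\Delta_n(r,f)(Y_n)}$, which I would handle separately.

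For the main term I would use the classical identity linking the Poisson cumulative distribution function to the incomplete gamma function, in the form $\P{Y_n \ge m+1} = \gamma(m+1,\lambda_n)/m!$ with $\gamma$ the lower incomplete gamma function of the statement. For the tail function $f(j)=\1_{\{j>\floor{x}\}}$ this immediately yields $\E{f(Y_n)} = \P{Y_n \ge \floor{x}+1} = \gamma(\floor{x}+1,\lambda_n)/\floor{x}!$, the first term of \eqref{eq:tail_function_approximation}. For the call function I would write $\E{(Y_n-K)^+} = \E{Y_n\,\1_{\{Y_n>K\}}} - K\,\P{Y_n>K}$; reindexing the first expectation via $j/j! = 1/(j-1)!$ turns $\E{Y_n\,\1_{\{Y_n\ge \ceil{K}\}}}$ into $\lambda_n\,\P{Y_n \ge \ceil{K}-1}$, and applying the gamma identity to both pieces produces exactly the two gamma terms of \eqref{eq:call_function_approximation}. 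Here I would keep careful track of the floor/ceiling bookkeeping (using $j>K \iff j \ge \floor{K}+1 = \ceil{K}$ for non-integer $K$) and check the integer-$K$ edge case separately.

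The remaining point is that the correction expectation $\E{\Delta_n(r,f)(Y_n)} = \ee^{-\lambda_n}\sum_{j\ge 0}\tfrac{(\lambda_n)^j}{j!}\,\Delta_n(r,f)(j)$ collapses to the advertised finite window, which follows once I show $\Delta_n(r,f)$ is supported on the range of width $r$ ending at the floor of the threshold. The mechanism is that $\Delta_+^k$ uses only the stencil $f(j),\dots,f(j+k)$ and annihilates any function that is a polynomial of degree $<k$ on that stencil. For the tail function, which is a step jumping at $\floor{x}$, the $k$-th difference is nonzero only when the stencil straddles the jump, i.e.\ for $\floor{x}-k+1 \le j \le \floor{x}$; since $k \le r$ this sits inside $\{\floor{x}-r+1,\dots,\floor{x}\}$, giving the finite sum in \eqref{eq:tail_function_approximation}.

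For the call function the same idea applies, but with one caveat which I expect to be the main obstacle: $(j-K)^+$ is piecewise affine with a single kink at $K$, so $\Delta_+^k f$ is supported on $\{\floor{K}-k+1,\dots,\floor{K}\}$ \emph{only for $k\ge 2$}, whereas $\Delta_+^1 f$ equals $1$ for all $j\ge \floor{K}$ and is therefore \emph{not} compactly supported, which would destroy the finite sum. The resolution is that $b_{1,n}=0$ for the mod-Poisson scheme, so the order-one term is absent from $\Delta_n(r,f)$ and only the compactly supported orders $k=2,\dots,r$ survive. Granting this, the correction reduces to the sum over $\{\floor{K}-r+1,\dots,\floor{K}\}$, and combining it with the main term completes the proof; the remaining work (matching the $\floor{\cdot}$ and $\ceil{\cdot}$ indices and justifying the interchange of summation with the gamma identities) is routine once the support statement and $b_{1,n}=0$ are in hand.
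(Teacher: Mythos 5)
Your proposal is correct and follows essentially the same route as the paper's proof: reduce to Proposition \ref{prop:expectation_approximation} (invoking the polynomially bounded extension for the call function), evaluate the main Poisson term via the incomplete-gamma identity and the reindexing $\E{Y_n\1_{\{Y_n\ge\ceil{K}\}}}=\lambda_n\,\P{Y_n\ge\ceil{K}-1}$, and localize the correction term by the support analysis of $\Delta_+^k f$. In particular, the caveat you flag for the call function --- that $\Delta_+^1 f$ is not compactly supported and the finite sum survives only because $b_{1,n}=0$ --- is exactly the argument the paper uses ($k\ge 2$ together with the fact that $k$-th order differences annihilate polynomials of degree strictly less than $k$).
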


\begin{proof}
We must compute in each case the following expectations:
\begin{equation}
\label{eq:mod_phi_approx_functions}
    \nu^{(r)}_n(f) = \E{f(Y_n)} + \E{\Delta_n(r,f)(Y_n)},
\end{equation}
where $Y_n \sim \Po{\sum_{i=1}^n p_i}$. Indeed, for the tail function,  Proposition \ref{prop:expectation_approximation} applies readily because $f$ is bounded. For the call function, we can use the remark at the end of the proof of Proposition \ref{prop:expectation_approximation}: the formula is valid because $f$ is bounded by a polynomial function.
The term $\E{f(Y_n)}$ can be computed in closed form for both functions. Indeed the tail function of a Poisson random variable is known to admit a closed-form expression in terms of the incomplete lower gamma function (see Proposition \ref{prop:poisson_tail_gamma} for the formula and its proof), while for the call function, one can proceed as follows:
\begin{align*}
    \E{(Y_n - K)^+} & = \sum_{j=0}^\infty (j - K)^+\, \ee^{-\lambda_n}\, \frac{\lambda_n^j}{j!} \\
    & = \sum_{j=\ceil{K}}^\infty (j - K)\, \ee^{-\lambda_n}\, \frac{(\lambda_n)^j}{j!} \\
    & = \ee^{-\lambda_n} \sum_{j=\ceil{K}}^\infty  \frac{\lambda_n^j}{(j-1)!} - K\, \P{Y_n > \ceil{K} - 1} \\
    & = \ee^{-\lambda_n} \lambda_n \sum_{j=\ceil{K}-1}^\infty \frac{(\lambda_n)^j}{j!} - K \,\P{Y_n > \ceil{K} - 1} \\
    & = \lambda_n \,\P{Y_n > \ceil{K} - 2} - K\, \P{Y_n > \ceil{K} - 1}. \\
\end{align*}

The term $\E{\Delta_n(r,f)(Y_n)}$ of Equation \eqref{eq:mod_phi_approx_functions} corresponds to the summations in Equations \eqref{eq:tail_function_approximation} and \eqref{eq:call_function_approximation}, the only difference being that the integration with respect to the distribution of $Y_n$ has been explicitly restricted to the integer-valued interval $\llbracket \floor{x} - r + 1, \floor{x} \rrbracket $. This follows from the fact that the correction term $\Delta_n(r,f)$ actually vanishes outside that interval for both the tail and the call function, as can be seen by their explicit formul{\ae}. Indeed, for the tail function one has
$$ \Delta_n(r,f)(j) = \sum_{k=1}^r b_{k,n} \sum_{l=0}^k (-1)^{k-l} \binom{k}{l} \1_{\{j + l > \floor{x}\}}.$$
On the one hand if $j \le \floor{x} - r$, then $j + l \le \floor{x}$ for all values of $l$, so $\Delta_n(r,f)(j)$ vanishes. On the other hand, if $j \ge \floor{x} + 1$, then $j + l > \floor{x}$ for all values of $l$, so that the inner summation yields $ \sum_{l=0}^k (-1)^{k-l} \binom{k}{l} = 0$ for all $k=1, \ldots, r$ and $\Delta_n(r,f)(j)$ is zero.
\medskip 

For the call function, instead, we have that
$$ \Delta_n(r,f)(j) = \sum_{k=1}^r b_{k,n} \sum_{l=0}^k (-1)^{k-l} \binom{k}{l} (j + l - K)^+.$$
On the one hand if $j \le \floor{K} - r$, then $j + l \le \floor{K}$ for all values of $l$, so $(j + l - K)^+$ is identically zero and $\Delta_n(r,f)(j)$ vanishes. On the other hand, if $j \ge \ceil{K}$, then $j + l \ge \ceil{K}$ for all values of $l$, so that the inner summation becomes $\sum_{l=0}^k (-1)^{k-l} \binom{k}{l} (j + l - K)$ which vanishes because $k \ge 2$ ($b_{1, n} = 0$) and $\sum_{l=0}^k (-1)^{k-l} \binom{k}{l} P(l) = 0$ for all polynomials $P$ of order strictly lower than $k$ (see, for instance, Corollary 2 in \cite{ruiz1996algebraic}).
\end{proof}

Formul{\ae} \eqref{eq:tail_function_approximation} and \eqref{eq:call_function_approximation} are particularly suitable for numerical implementation. Indeed, the lower incomplete gamma function can be estimated efficiently from its power series expansion (as shown in \cite{temme1994set}, on which the \texttt{python} implementation of the gamma function in \texttt{scipy} is based) while the correction term $\Delta_n(r,f)$, which requires numerical integration, is non-zero on at most $r$ points and is therefore easy to integrate.

\section{Mod-compound Poisson approximation}
\label{sec:mod-compound_poisson_approximation}

This section deals with the extension of the results presented in Section \ref{sec:mod-poisson_approximation} to the case of credit portfolios $L_n = \sum_{i=1}^n Z_i\, Y_i$ with random i.i.d.~exposures $(Z_i)_{i=1}^n$. In this case it is possible to prove a mod-compound Poisson convergence of the sequence $(L_n)_{n \in \Nb}$, as shown in the following theorem.

\begin{theorem}[\textbf{Mod-compound Poisson convergence}]
\label{thm:modCompoundPoisson_convergence}
Define a sequence $(L_n)_{n \in \Nb}$ of total portfolio losses given by $L_n = \sum_{i=1}^n Z_i \,Y_i$, where we assume the $Y_i$'s and $Z_i$'s to be mutually independent, $Y_i \sim \Be{p_i}$, and the $Z_i$'s are i.i.d.~random variables with the same law as $Z$, for $Z$ a given $\Nb$-valued random variable. 
\begin{enumerate}
    \item If $\sum_{i=1}^\infty p_i = +\infty$ and $\sum_{i=1}^\infty (p_i)^2 < +\infty$, then $(L_n)_{n \in \Nb}$ converges mod-$\phi$ with parameters $\lambda_n = \sum_{i=1}^n p_i$ and with reference infinitely divisible law the compound Poisson distribution $\CP(1, Z)$.
    \item Furthermore, $(L_n)_{n \in \Nb}$ admits a mod-compound Poisson approximation scheme of order $r$ with the following coefficients:

$$ b_{k,n} = \frac{1}{k!} \sum_{\sigma \in \Pi(k)} \sum_{\substack{\tau \in \Pi(k) \\ \tau \le \sigma}} \mu(\tau, \sigma) \prod_{D \in \tau} \E{(Z)_{|D|}} \prod_{B \in \sigma} \mathfrak{p}_{n^\sigma_\tau(B), n} ,$$
where $\Pi(k)$ is the poset of set partitions of $\{1, \ldots, k\}$; $n^\sigma_\tau(B)$ denotes the number of blocks of $\tau$ contained in the block $B$ of $\sigma$; $\mu(\tau, \sigma) = (-1)^{|\tau| - |\sigma|} \prod_{B \in \sigma} (n^\sigma_\tau(B) - 1)!$ is the M\"obius function for the incidence algebra of the poset $\Pi(k)$, and $\mathfrak{p}_{k,n} := \sum_{j=1}^n (p_j)^k$, with the convention $\mathfrak{p}_{1,n} := 0$.
\end{enumerate}
\end{theorem}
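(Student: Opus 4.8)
The plan is to follow the same strategy as in the proof of Theorem~\ref{thm:modPoisson_convergence}, replacing the single-atom increment $\ee^{\I\xi}$ by the full characteristic function $\widehat{\mu}_Z(\xi)$ of the exposures. First I would compute the characteristic function of $L_n$ directly: since the $Y_i$'s and $Z_i$'s are mutually independent and $Y_i\sim\Be{p_i}$, each summand satisfies $\E{\ee^{\I\xi Z_iY_i}} = (1-p_i) + p_i\,\widehat{\mu}_Z(\xi) = 1 + p_i\,w$, where I set $w := \widehat{\mu}_Z(\xi)-1$. Independence then gives $\widehat{\mu}_{L_n}(\xi) = \prod_{i=1}^n(1+p_iw)$. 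By Example~\ref{ex:compound_poisson}, the reference law $\CP(1,Z)$ has Lévy--Khintchine exponent $\phi(\xi)=\widehat{\mu}_Z(\xi)-1 = w$, so the deconvolution residue of Equation~\eqref{eq:deconv_residues} is
\[
\psi_n(\xi) = \widehat{\mu}_{L_n}(\xi)\,\ee^{-\lambda_n\phi(\xi)} = \prod_{i=1}^n (1+p_iw)\,\ee^{-p_iw}, \qquad \lambda_n=\sum_{i=1}^n p_i .
\]
To establish part (1), I would show that this product converges uniformly on $\Tb$. Writing $g(u)=(1+u)\ee^{-u}-1$, one has $g(0)=g'(0)=0$, hence $|g(u)|\le C\,|u|^2$ on the compact disc $|u|\le 2$; since $|w|\le 2$ and $p_i\le 1$, each factor obeys $|(1+p_iw)\ee^{-p_iw}-1|\le 4C\,p_i^2$ uniformly in $\xi$. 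The assumption $\sum_i p_i^2<+\infty$ then forces the infinite product to converge uniformly to a limit $\psi(\xi)$, which proves the mod-$\phi$ convergence.

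For part (2), I would extract the Taylor coefficients of $\psi_n$ in the variable $z:=\ee^{\I\xi}-1$ by a \emph{two-level} application of the exponential formula (Theorem~\ref{thm:exp_log}). Near $\xi=0$ one has $w\to0$, so the logarithm may be expanded as $\log\psi_n = \sum_{m\ge2}\frac{(-1)^{m-1}}{m}\,\mathfrak{p}_{m,n}\,w^m$, exactly as in the Poisson case, except that now $w$ is itself a power series in $z$: since $\widehat{\mu}_Z(\xi)=\E{(1+z)^Z}$, one has $w=\sum_{j\ge1}\frac{1}{j!}\,\E{(Z)_j}\,z^j$, where $\E{(Z)_j}$ is the $j$-th factorial moment of $Z$. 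The inner level consists in expanding each power $w^m$: by the exponential formula, $w^m = m!\sum_b\frac{z^b}{b!}\sum_{\tau\in\Pi(b),\,|\tau|=m}\prod_{D\in\tau}\E{(Z)_{|D|}}$, so that the coefficient $\ell_b := b!\,[z^b]\log\psi_n$ equals $\ell_b = \sum_{\tau\in\Pi(b)}(-1)^{|\tau|-1}(|\tau|-1)!\,\mathfrak{p}_{|\tau|,n}\prod_{D\in\tau}\E{(Z)_{|D|}}$, the convention $\mathfrak{p}_{1,n}=0$ absorbing the constraint $m\ge2$.

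The outer level is then a second application of the exponential formula to $\psi_n=\exp(\log\psi_n)$, giving $b_{k,n}=\frac1{k!}\sum_{\sigma\in\Pi(k)}\prod_{B\in\sigma}\ell_{|B|}$. Choosing, inside each block $B\in\sigma$, a partition $\tau_B\in\Pi(B)$ amounts precisely to choosing a single refinement $\tau\le\sigma$ of $\{1,\ldots,k\}$, with $n^\sigma_\tau(B)=|\tau_B|$ blocks of $\tau$ inside $B$. Collecting terms, the per-block sign-and-factorial factor $\prod_{B\in\sigma}(-1)^{n^\sigma_\tau(B)-1}(n^\sigma_\tau(B)-1)!$ is exactly the Möbius function $\mu(\tau,\sigma)=(-1)^{|\tau|-|\sigma|}\prod_{B\in\sigma}(n^\sigma_\tau(B)-1)!$ of the interval $[\tau,\sigma]$ in $\Pi(k)$ (using $\sum_{B}(n^\sigma_\tau(B)-1)=|\tau|-|\sigma|$), while the remaining factors reassemble into $\prod_{B\in\sigma}\mathfrak{p}_{n^\sigma_\tau(B),n}$ and $\prod_{D\in\tau}\E{(Z)_{|D|}}$, which is the claimed formula. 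The main obstacle is purely combinatorial bookkeeping: keeping the two nested partition structures disentangled and recognizing the product of per-block signs and factorials as the Möbius weight of $[\tau,\sigma]$. The analytic points — uniform convergence of the product and the validity of the logarithmic expansion near $\xi=0$, where only the finitely many factorial moments $\E{(Z)_j}$ with $j\le r$ are needed for the order-$r$ scheme — are routine.
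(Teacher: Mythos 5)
Your proposal is correct and follows essentially the same route as the paper's proof: the same residue computation $\psi_n(\xi)=\prod_{i=1}^n(1+p_iw)\,\ee^{-p_iw}$ with $w=\widehat{\mu}_Z(\xi)-1$, the same two-level coefficient extraction via Theorem \ref{thm:exp_log} (expanding powers of $w$ over set partitions of the inner blocks, then the outer exponential formula), and the same final step of identifying the nested choice of partitions $\tau_B\in\Pi(B)$ for $B\in\sigma$ with a refinement $\tau\le\sigma$ carrying the M\"obius weight $\mu(\tau,\sigma)$. The only difference is cosmetic: you supply an explicit quadratic bound $|(1+p_iw)\ee^{-p_iw}-1|\le 4C\,p_i^2$ for the uniform convergence of the infinite product, which the paper merely asserts, and you phrase the inner expansion through the power identity for $w^m$ rather than through compositions and integer partitions.
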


\begin{proof}
We compute the deconvolution residue as in Equation \eqref{eq:deconv_residues}, using the explicit form of the L\'evy--Khintchine exponent of $\CP(1,Z)$. We obtain:

\begin{align}
\psi_n(\xi) & = \widehat{\mu}_{L_n}(\xi)\, \ee^{-\sum_{i=1}^n p_i (\widehat{\mu}_Z(\xi) - 1)} \nonumber \\
& = \prod_{j=1}^n \left( 1 + p_j (\widehat{\mu}_Z(\xi) - 1) \right) \ee^{- p_j (\widehat{\mu}_Z(\xi)-1)} \nonumber \\
& = \exp \left( \sum_{j=1}^n \log \left( 1 + p_j (\widehat{\mu}_Z(\xi) - 1) \right) - p_j (\widehat{\mu}_Z(\xi) - 1) \right) \nonumber \\
& = \exp \left( \sum_{k=1}^\infty \frac{(-1)^{k-1}}{k} \left( \widehat{\mu}_Z(\xi) - 1 \right)^k \mathfrak{p}_{k,n} \right) \label{eq:residue_mod-compound_poisson},
\end{align}
where we have defined $\mathfrak{p}_{k,n} := \sum_{j=1}^n (p_j)^k$, with the convention $\mathfrak{p}_{1,n} := 0$.
\medskip

The condition $\sum_{j=1}^\infty (p_j)^2 < +\infty$ guarantees that $\psi_n(\xi)$ converges to a limit, $\psi(\xi)$, uniformly in $\Tb$, so the mod-$\phi$ convergence is proved.
Let us then extract the coefficients $(b_{k,n})_{k=1}^r$ for the mod-$\phi$ approximation scheme of order $r$: thus, we need to compute the power series expansion of $\psi_n(\xi)$ in powers of $z=\ee^{\I\xi}-1$.
We start by noticing that 
$$\widehat{\mu}_Z(\xi) = \sum_{j=0}^\infty \frac{1}{j!}\, \E{(Z)_j} (\ee^{\I\xi} - 1)^j,$$
where $(Z)_j := Z(Z-1)\cdots (Z-k+1)$ is the $j$-th falling factorial and $\E{(Z)_j}$ is therefore the $j$-th factorial moment of $Z$. As a consequence, the series in the exponential \eqref{eq:residue_mod-compound_poisson} rewrites as:
\begin{align*}
    &\sum_{k=1}^\infty \frac{(-1)^{k-1}}{k}\,\left(\sum_{j=1}^\infty \frac{1}{j!}\,\E{(Z)_j}\,z^j \right)^k\,\mathfrak{p}_{k,n} \\ &= \sum_{l=1}^\infty z^l\left(\sum_{k=1}^\infty \frac{(-1)^{k-1}}{k}\,\mathfrak{p}_{k,n}\,\sum_{\substack{l=j_1+\cdots+j_k \\ j_1,\ldots,j_k \geq 1}} \frac{\E{(Z)_{j_1}}\cdots \E{(Z)_{j_k}}}{(j_1)!\cdots (j_k)!}\right) \\ 
    &= \sum_{l=1}^\infty z^l\left(\sum_{k=1}^l (-1)^{k-1} (k-1)!\,\mathfrak{p}_{k,n}\,\sum_{\substack{l=\lambda_1+\cdots+\lambda_k \\ \lambda_1\geq \cdots \geq \lambda_k \geq 1}}  \frac{\E{(Z)_{\lambda_1}}\cdots \E{(Z)_{\lambda_k}}}{(m_1(\lambda))!\cdots (m_l(\lambda))!\,\,(\lambda_1)!\cdots (\lambda_k)!}\right).
\end{align*}
In these formul{\ae}, the sum on the second line runs over compositions $l=j_1+\cdots+j_k$ of size $l$ and length $k$ (sequences that sum to $l$), and the sum on the third line runs over integer partitions $l=\lambda_1 +\cdots + \lambda_k$ with $\lambda=(\lambda_1\geq \cdots \geq \lambda_k)$ (non-increasing sequences). For $i \leq l$, $m_i(\lambda)$ is the number of parts $\lambda_j$ equal to $i$. Similar combinatorial arguments are detailed in the proof of Theorem \ref{thm:exp_log}; see Appendix \ref{sec:mobius_function}. Now, since the number of set partitions $\pi \in \Pi(l)$ with sizes of blocks given by an integer partition $\lambda=(\lambda_1\geq \cdots \geq \lambda_k)$ is 
$$\frac{l!}{(m_1(\lambda))!\cdots (m_l(\lambda))!\,\lambda_1!\cdots \lambda_k!},$$
and since the Möbius function $\mu(\pi,\widehat{1}_l)$ of a set partition with $k=|\pi|$ blocks is $(-1)^{k-1}\,(k-1)!$, we can rewrite:
$$\sum_{k=1}^\infty \frac{(-1)^{k-1}}{k}\,\left(\sum_{j=1}^\infty \frac{1}{j!}\,\E{(Z)_j}\,z^j \right)^k\,\mathfrak{p}_{k,n} = \sum_{l=1}^\infty \frac{z^l}{l!}\left(\sum_{\pi \in \Pi(l)} \mu(\pi,\widehat{1}_l)\,\mathfrak{p}_{|\pi|,n}\left(\prod_{B \in \pi} \E{(Z)_{|B|}}\right) \right).$$
Substituting this expansion in Equation \eqref{eq:residue_mod-compound_poisson} and using the first part of Theorem \ref{thm:exp_log} with $g_l = \sum_{\pi \in \Pi(l)} \mu(\pi,\widehat{1}_l)\,\mathfrak{p}_{|\pi|,n} (\prod_{B \in \pi} \E{(Z)_{|B|}})$, we get:
\begin{align*}
b_{k,n} &= \frac{1}{k!} \sum_{\sigma \in \Pi(k)} \prod_{B \in \sigma} g_{|B|} = \frac{1}{k!}\sum_{\sigma \in \Pi(k)} \prod_{B \in \sigma} \left(\sum_{\tau_B \in \Pi(|B|)} \mu(\tau_B,\widehat{1}_{|B|})\,\mathfrak{p}_{|\tau_B|,n} \left(\prod_{C \in \tau_B} \E{(Z)_{|C|}}\right)\right).
\end{align*}
We notice that choosing one partition $\tau_B \in \Pi(|B|)$ for each block $B \in \sigma$ is equivalent to choosing a subpartition $\tau$ of $\sigma$, so the coefficient can be expressed more compactly as:  
\begin{equation*}
    b_{k,n} = \frac{1}{k!} \sum_{\sigma \in \Pi(k)} \sum_{\substack{\tau \in \Pi(k) \\ \tau \le \sigma}} \mu(\tau, \sigma) \prod_{D \in \tau} \E{(Z)_{|D|}} \prod_{B \in \sigma} \mathfrak{p}_{n^\sigma_\tau(B), n}
\end{equation*}
where $n^\sigma_\tau(B)$ denotes the number of blocks of $\tau$ contained in the block $B$ of $\sigma$; and where we have furthermore recognized the M\"obius function for the incidence algebra of the poset $\Pi(k)$, given by $\mu(\tau, \sigma) = (-1)^{|\tau| - |\sigma|} \prod_{B \in \sigma} (n^\sigma_\tau(B) - 1)!=\prod_{B \in \sigma} \mu(\tau_B,\widehat{1}_{|B|})$.
\end{proof}

The implementation of mod-compound Poisson approximation schemes for credit risk applications is numerically more problematic than in the case of mod-Poisson schemes. 
In particular, integrals with respect to the reference infinitely divisible law are more difficult to evaluate and closed-form expressions, such as formul{\ae} \eqref{eq:tail_function_approximation} and \eqref{eq:call_function_approximation}, are not available anymore, except in few very specialized cases. Nevertheless these integrals can be evaluated numerically (for instance by estimating the compound Poisson law via Panjer recursion) but, depending on the application at hand and the particular distribution of the random exposures, this might lead to estimations that are as computationally expensive as the recursive methodology.

\section{Application: Estimation of risk measures}
\label{sec:risk_measures_estimation}

\subsection{Background}

A commonly used risk measure for market and credit risk applications is the \emph{Value at Risk} ($\mathrm{VaR}$), defined as
$$\var{\alpha}{L_n} := \inf \left\{ t \in \Rb \st \P{L_n \le t} \ge \alpha \right\},$$
which quantifies the minimum capital required to cover all portfolio losses with a probability at least equal to $\alpha$. The parameter $\alpha$ is known as the \emph{confidence level} and higher values of this parameter correspond to higher and more stringent capital requirements. 
The definition of $\var{\alpha}{L_n}$ is mathematically equivalent to the generalized inverse of the distribution function of $L_n$, which is also known as the $\alpha$-quantile of $L_n$.\medskip 

The $\mathrm{VaR}$ is the most commonly used risk measure in financial practice, despite the fact that it is not a coherent risk measure \cite{artzner1999coherent}, which means that it fails to account for diversification effects when risk is aggregated across several portfolios. 
A risk measure that is commonly employed to solve this problem is the \emph{Expected Shortfall} (ES), also known as \emph{conditional $\mathrm{VaR}$}, which is defined as follows:
$$\es{\alpha}{L_n} := \frac{1}{1-\alpha} \int_\alpha^1 \var{u}{L_n} du = \E{L_n|L_n > \var{\alpha}{L_n}}.$$

One can think of $\es{\alpha}{L_n}$ at confidence level $\alpha$ as the expected value of portfolio losses, given that these losses already exceed $\var{\alpha}{L_n}$. One can show that $\es{\alpha}{L_n}$ is always greater or equal to $\var{\alpha}{L_n}$ for all confidence levels $\alpha$, so that $\mathrm{ES}$ is more conservative than $\mathrm{VaR}$. It is also a coherent risk measure, because it incorporates information about all potential losses, including the ones above the confidence level $\alpha$, and it is therefore the simplest modification of $\mathrm{VaR}$ that yields a theoretically acceptable risk measure. The computation of $\mathrm{ES}$ is more demanding than the computation of $\mathrm{VaR}$, because it requires the availability of accurate estimations for the entire tail function of the loss distribution: this is the reason why simulation-based methods, such as Monte Carlo simulation, typically perform poorly in estimations of portfolio $\mathrm{ES}$. 
\medskip 

We remark that if $L_n$ is a discrete random variable (this is typically the case if the distribution has been discretized for purposes of numerical evaluation) with $ N := \|L_n\|_\infty < \infty$ , then $\var{\alpha}{L_n}$ is integer-valued and is given by the following left-continuous step function:
\begin{equation}
\label{eq:var}
\var{\alpha}{L_n} = \sum_{k=0}^N k \: \1_{(\P{L_n \le k-1}, \P{L_n \le k}]}(\alpha),    
\end{equation}
while $\es{\alpha}{L_n}$ is given by
\begin{equation}
\label{eq:es}
\es{\alpha}{L_n} = \frac{1}{1-\alpha} \left( \left( \P{L_n \le \var{\alpha}{L_n}} - \alpha \right)\var{\alpha}{L_n}  \,\,+\!\!\!\!  \sum_{k > \var{\alpha}{L_n} }^N \!\!\!k \: \P{L_n = k} \right).
\end{equation}

\subsection{Estimation or risk measures}

It is clear from Equations \eqref{eq:var} and \eqref{eq:es} that the estimation of the $\mathrm{VaR}$ and the $\mathrm{ES}$ depends crucially on accurate estimates for the tail function $\P{L_n \ge x}$ of portfolio losses. It is therefore instructive to compare estimation methods first of all on the task of tail function estimation for a representative credit portfolio model, in our case a single-factor Gaussian copula with $n=250$ obligors, heterogeneous average default probabilities uniformly distributed in $[2\%, 8\%]$ and equicorrelation parameter $\rho = 0.3$. The results do not vary qualitatively for different choices of the parameters.
\begin{center}
\begin{figure}[ht]
    \includegraphics[width=0.9\textwidth]{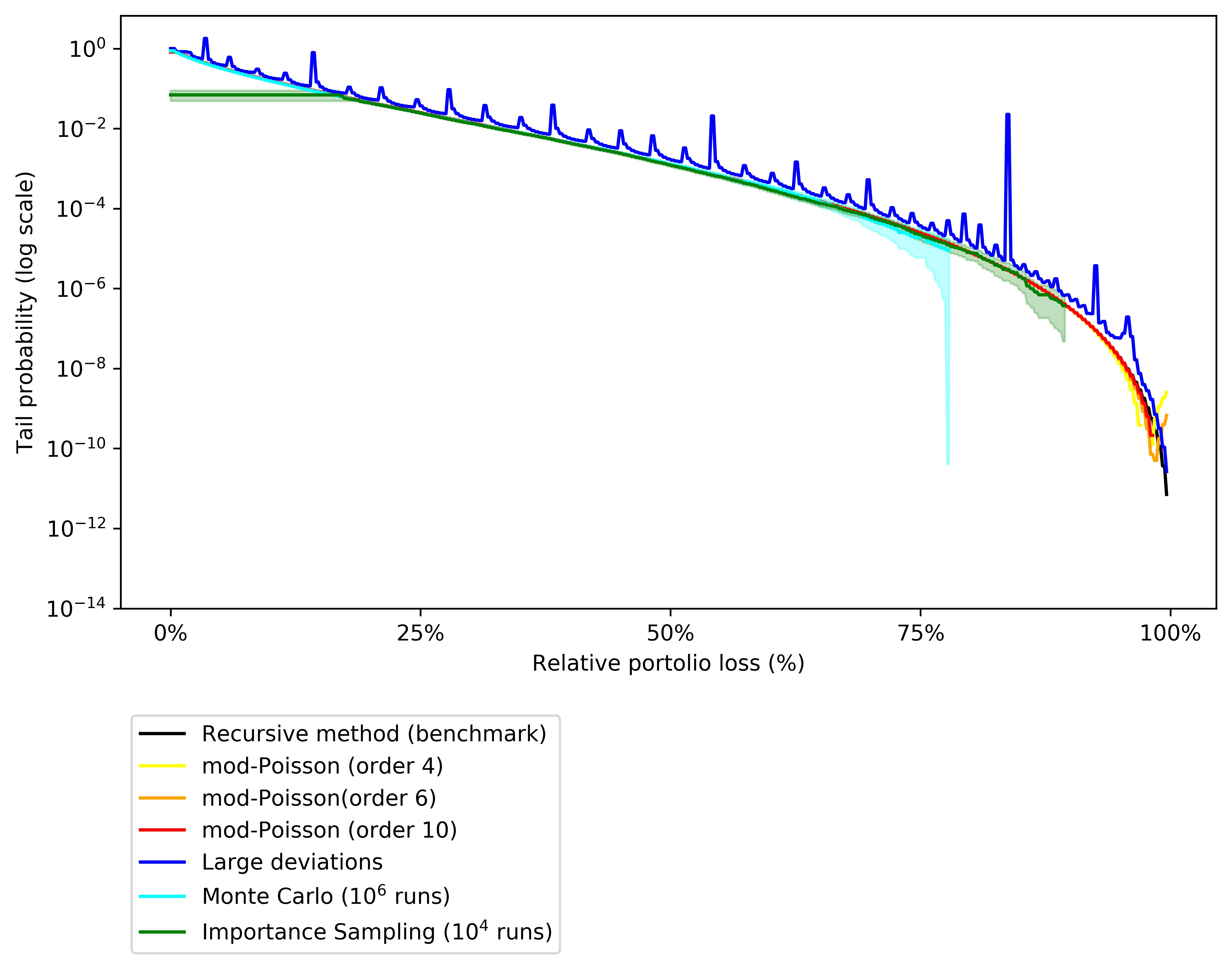}
    \caption{Estimated tail functions on a logarithmic scale. For simulation-based methods (i.e.~Monte Carlo and Importance Sampling) mean estimates are reported within their 99\% asymptotic confidence interval (shaded areas).}
    \label{fig:tail_function}
\end{figure}
\end{center}
Figure \ref{fig:tail_function} shows a comparison of the estimated tail functions for an exact method (the recursive method), two semi-analytical methods (mod-Poisson with varying order and large deviations approximation) and two simulation-based methods (Monte Carlo and importance sampling). The reader is referred to Appendix \ref{sec:estimation_overview} for a self-contained presentation of all these estimation models, together with full details of their numerical implementation.

\begin{itemize}
    \item The recursive method can be used as a benchmark to assess the accuracy of the other methods, because it is an exact procedure for the computation of the loss distribution, up to the numerical integration error due to integration over the portfolio mixing variable and any rounding errors due to finite machine precision, both of which are in practice of order $10^{-15}$. 
A better measure of the performance of each estimation method can be obtained by looking at the signed relative errors of the estimated tail probabilities (computed with respect to the benchmark) as a function of the tail point, as plotted in Figure \ref{fig:tail_errors}.

\begin{center}
\begin{figure}[ht]
    \includegraphics[width=0.99\textwidth]{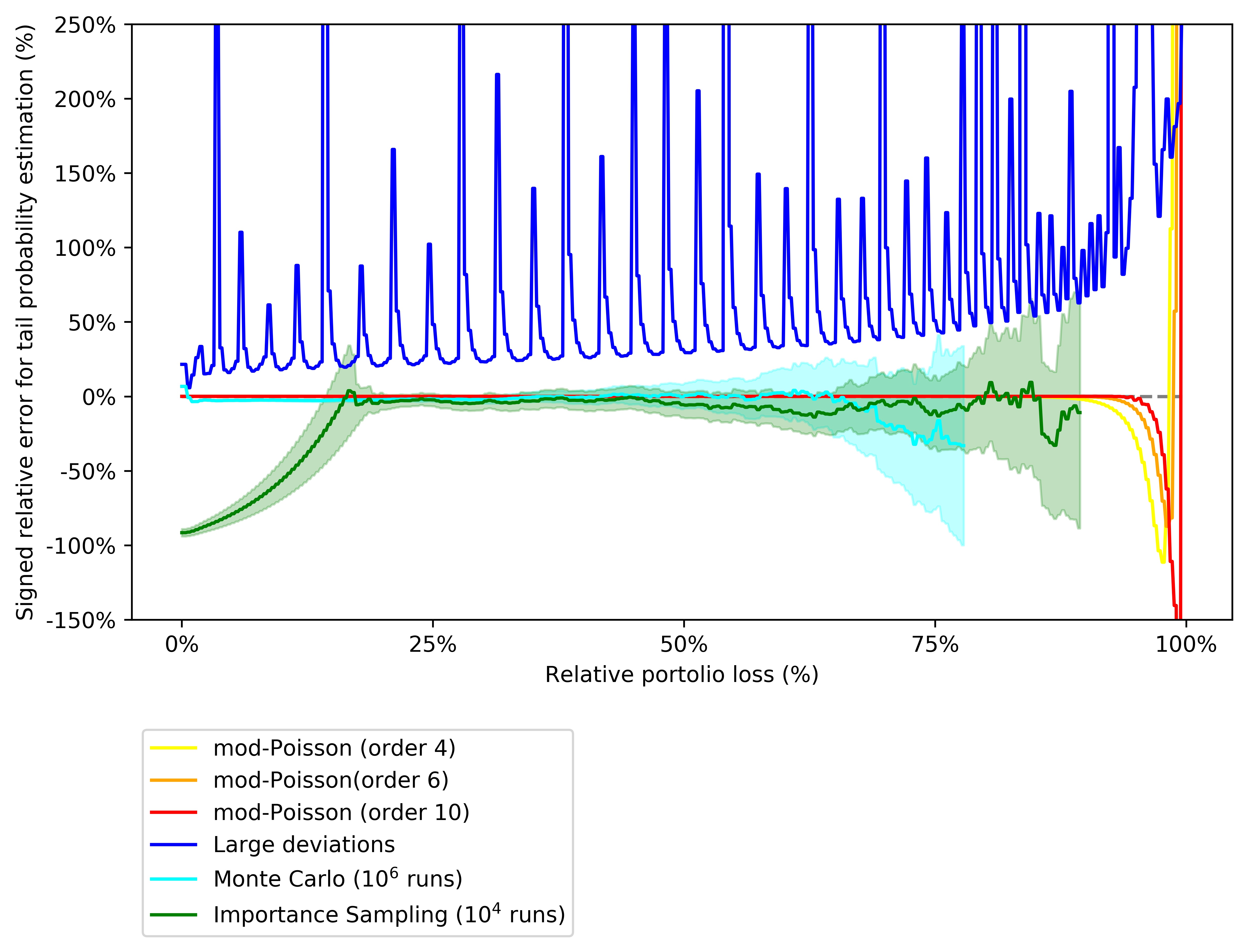}
    \caption{Signed relative errors of estimated tail probabilities for all levels of relative portfolio losses. Benchmark value (i.e.~the assumed true value with respect to which errors are computed) computed via the recursive method. For simulation-based methods (i.e.~Monte Carlo and Importance Sampling) mean estimates are reported within their 99\% asymptotic confidence interval (shaded areas).}
    \label{fig:tail_errors}
\end{figure}
\end{center}

\item The plain Monte Carlo method has been implemented as in Algorithm \ref{alg:monte_carlo_simulation} with $10^6$ simulation runs. Due to the finite number of simulations, it is able to estimate the probability of relative losses only up to the $75\%$ level and with increasing uncertainty, as shown by the widening of the asymptotic confidence intervals around the mean estimates. The mean estimates themselves show a negative bias, due to the undersampling of rare, large losses. Higher accuracy can be obtained by suitably increasing the number of simulations at the price of higher 
computational times.

\item The importance sampling method, by addressing the rare-event simulation problem as explained in Section \ref{subsec:importance_sampling}, is able to estimate tail probabilities at higher levels of relative losses with only $10^4$ simulation runs. Nevertheless, this method is also characterized by widening confidence intervals, with the result that only estimates close to $85\%$ of relative losses can be conceivably used with any statistical confidence. We further remark that higher uncertainty is visible in Figure \ref{fig:tail_errors} not only for high levels of relative losses, but also for low ones. This is because after performing the exponential tilting described in Algorithm \ref{alg:importance_sampling_simulation_second_step} the mean of the loss distribution has been shifted and low losses have become rare events. In practice one could fix this issue by gluing together the estimates coming from a plain Monte Carlo method for low levels of losses and the estimates from the importance sampling method for high levels, as 
suggested in \cite{glasserman2005importance}.

\item The large deviations approximation is characterized by an erratic behavior. This is partly due to the discreteness of the distribution $L_n$, but also to the numerical integration over the mixing variable of the Gaussian copula, which is unstable, possibly because of the denominator in Equation \eqref{eq:large_deviations_estimator}. In general one can expect this method to perform well only in the asymptotic regime as the number of obligors becomes large, but for finite portfolios of hundreds of obligors the performance is disappointing and the estimates are biased and affected by large relative errors.

\item The mod-Poisson approximation schemes show a remarkable accuracy for most of the tail function, except at the level of relative losses higher than $90\%$, i.e.~losses of probability lower than $10^{-8}$ for our representative portfolio, as can be deduced from Figure \ref{fig:tail_function}). Above that level the approximation order becomes an important tuning parameter: approximation schemes with higher order are able to maintain higher accuracy farther in the tail. Nevertheless, the performance at low approximation orders -- such as $r = 4, 6, 10$ as shown in Figure \ref{fig:tail_errors} -- is already very satisfactory, since mod-Poisson approximation schemes show high relative errors only on probabilities of order smaller than $10^{-8}$, whose contribution to the estimation of risk measures is negligible, as will be shown next.
\end{itemize}
\medskip

Table \ref{table:var_es} compares the estimates of the $\mathrm{VaR}$ and $\mathrm{ES}$ (obtained using Equations \eqref{eq:var} and \eqref{eq:es}) of the representative portfolio for a selection of typically used confidence levels, ranging from $95\%$ to $99.9999\%$, while Figure \ref{fig:var_errors} shows the signed relative errors in the estimation of $\mathrm{VaR}$ as a function of the confidence levels on a logarithmic scale. \medskip

The best performing methods are clearly the mod-Poisson approximation schemes, which yield estimates that are almost always identical to the benchmark value for any confidence value, already for order $r=4$. The other methods are characterized by diminishing accuracy at higher confidence levels and increasing uncertainty in the estimates, in the case of simulation-based methods.

\begin{landscape}
\thispagestyle{empty}
\begin{flushleft}
\begin{table}
\hspace*{-3cm}
\begin{tabular}{c|c|cccccc} 
 \hline
 $\mathrm{VaR}$ level & \begin{tabular}{c} Benchmark \\ (recursive) \end{tabular} & \begin{tabular}{c} Large \\ deviations \end{tabular} & \begin{tabular}{c} Monte Carlo \\ ($10^6$ runs) \\\relax Mean [$99\%$ CI] \end{tabular} & \begin{tabular}{c} Importance \\ Sampling \\ ($10^4$ runs) \\\relax Mean [$99\%$ CI]  \end{tabular} & \begin{tabular}{c} Mod-Poisson \\(order=4) \end{tabular} & \begin{tabular}{c} Mod-Poisson \\(order=6) \end{tabular} & \begin{tabular}{c} Mod-Poisson \\(order=10) \end{tabular} \\ \hline 
 95\% & 48 & 54 & 47 & 47 & 48 & 48 & 48 \\ \hline
 99\% & 82 & 89 & 82 [81, 83] & 81 [80, 83] & 82 & 82 & 82 \\ \hline
 99.99\% & 169 & 173 & 168 [164, 172] & 168 [165, 170] & 169 & 169 & 169 \\ \hline
 99.9999\% & 218 & 222 & n.a. & 215 [212, 221] & 218 & 218 & 218 \\ \hline
\end{tabular}
\vspace*{1cm}
\hspace*{-3cm}
\begin{tabular}{c|c|cccccc} 
 \hline
 $\mathrm{ES}$ level & \begin{tabular}{c} Benchmark \\ (recursive) \end{tabular} & \begin{tabular}{c} Large \\ deviations \end{tabular} & \begin{tabular}{c} Monte Carlo \\ ($10^6$ runs) \\\relax Mean [$99\%$ CI] \end{tabular} & \begin{tabular}{c} Importance \\ Sampling \\ ($10^4$ runs) \\\relax Mean [$99\%$ CI]  \end{tabular} & \begin{tabular}{c} Mod-Poisson \\(order=4) \end{tabular} & \begin{tabular}{c} Mod-Poisson \\(order=6) \end{tabular} & \begin{tabular}{c} Mod-Poisson \\(order=10) \end{tabular} \\ \hline 
 95\% & 68.91 & 79.08 & 68.50 [67.99, 69.01] & 68.16 [66.06, 69.28] & 68.91 & 68.91 & 68.91 \\ \hline
 99\% & 103.06 & 118.92 & 102.81 [101.85, 103.76] & 102.38 [100.86, 103.84] & 103.06 & 103.06 & 103.06 \\ \hline
 99.99\% & 181.65 & 198.71 & 160.70 [147.39, 172.90] & 179.99 [177.49, 182.09] & 181.64 & 181.65 & 181.65 \\ \hline
 99.9999\% & 223.73 & 231.13 & n.a. & n.a. & 223.57 & 223.70 & 223.73 \\ \hline
\end{tabular}
\caption{Estimates for $\mathrm{VaR}$ (above) and $\mathrm{ES}$ (below) across estimation methods for a selection of confidence levels (95\%, 97.5\%, 99\%, and 99.99\%). For simulation-based methods, such as Monte Carlo and Importance Sampling, the mean estimate is shown together with its 95\% asymptotic confidence interval in square brackets.}
\label{table:var_es}
\end{table}
\end{flushleft}
\end{landscape}

\begin{center}
\begin{figure}[ht]
    \includegraphics[width=\textwidth]{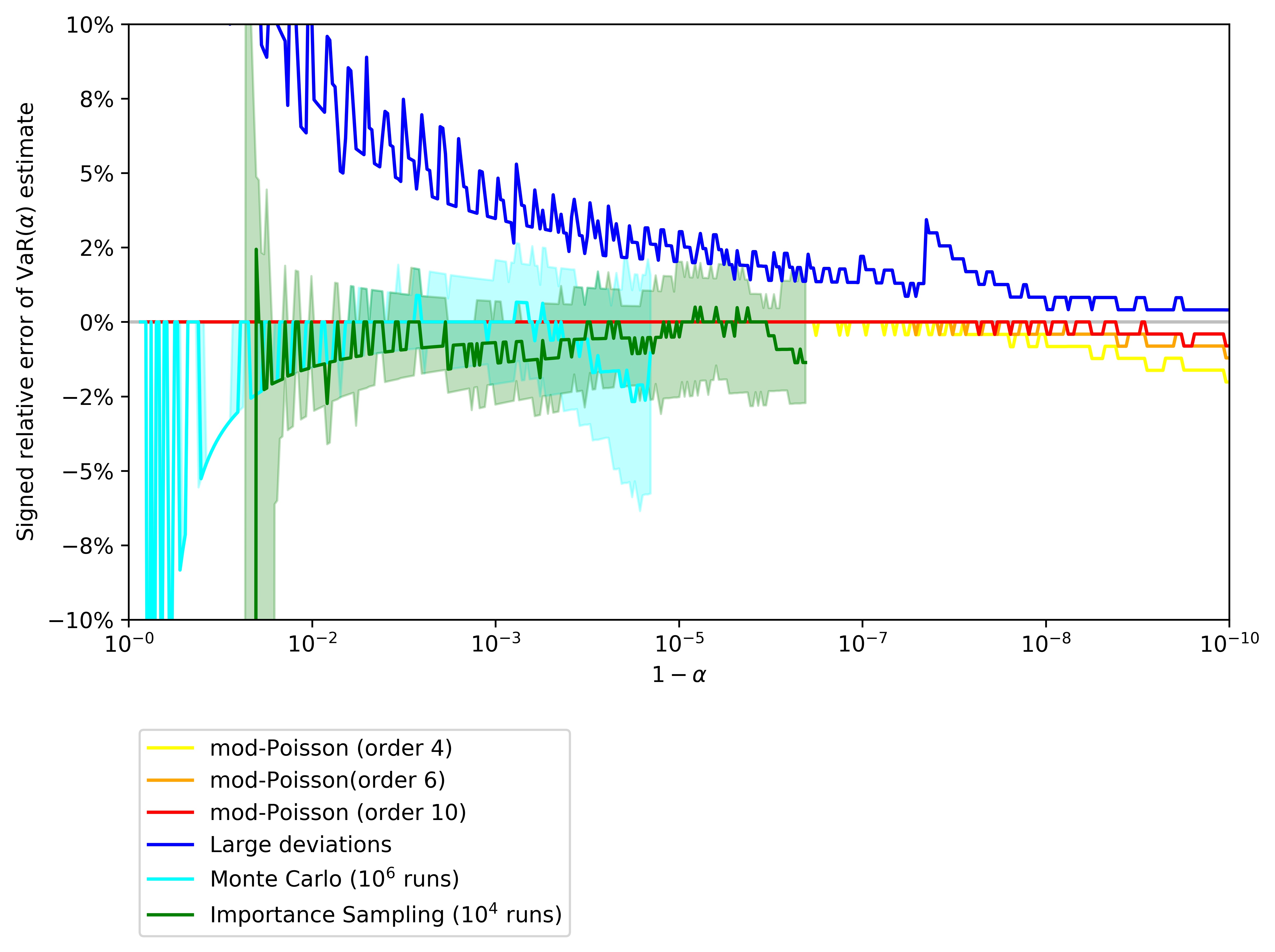}
    \caption{Signed relative errors of estimated $\mathrm{VaR}$ for confidence levels on a logarithmic scale. Benchmark value (i.e.~the assumed true value with respect to which errors are computed) computed via the recursive method. For simulation-based methods (i.e.~Monte Carlo and Importance Sampling) mean estimates are reported within their 99\% asymptotic confidence interval (shaded areas).}
    \label{fig:var_errors}
\end{figure}
\end{center}

\subsection{Computational time}
\label{subsec:computational_time}

The performance of an estimation method in terms of computational time can be particularly important for certain financial applications. In this section we compare empirically the performance of various methods for the estimation of tail probabilities and discuss relative advantages and disadvantages. All empirical tests are performed on a representative portfolio model, specifically a single-factor Gaussian copula with average default probabilities uniformly distributed in $[2\%, 8\%]$ and equicorrelation $\rho = 0.3$.

\begin{center}
\begin{figure}[ht]
    \includegraphics[width=\textwidth]{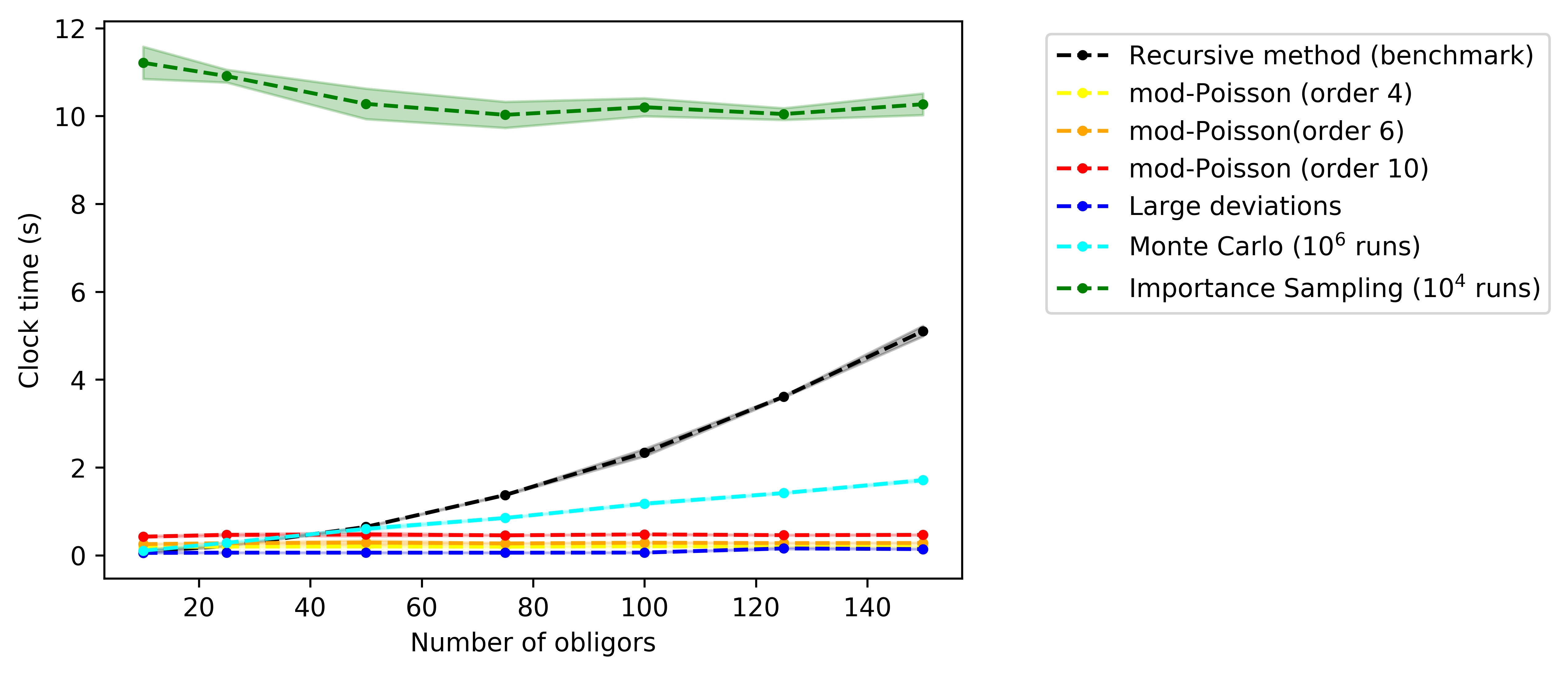}
    \caption{CPU clock time for estimation of a single tail probability for a selection of estimation methods as a function of the number of obligors. Shaded areas are $95\%$ asymptotic confidence intervals from $10$ runs of the estimation methods.}
    \label{fig:computational_time_tail}
\end{figure}
\end{center}

Figure \ref{fig:computational_time_tail} compares the elapsed CPU clock time for the estimation of a single tail probability of portfolio losses (i.e.~an evaluation of $\P{L_n > x}$ for a given value $x \ge 0$) as a function of the number of obligors.

\begin{itemize}
    \item The recursive methodology is the only known method for the exact estimation (up to numerical integration error) of the loss distribution, but its computational complexity scales quadratically in the number of obligors, as we already anticipated in the remarks following Algorithm \ref{alg:recursive_methodology} and as can be seen in Figure \ref{fig:computational_time_tail}. For this reason the recursive method is considered computationally expensive in the case of large portfolios (i.e.~with more than one hundred obligors) and approximate methods -- either semi-analytical or simulation-based -- are typically preferred.
\medskip

\noindent Furthermore, an important disadvantage of the recursive method comes from the fact that it is designed to output the full distribution of the portfolio losses, which might be wasteful in some applications. For instance, when computing risk measures it is necessary to compute the tail function of the loss distribution only at a few points -- more specifically in the part of the tail corresponding to high losses -- so that information of the full distribution is effectively useless. In contrast, semi-analytical methods, such as mod-Poisson approximation schemes or the large deviation approximation, yield approximations of the tail function at a single point and can thus be used to compute risk measures much more efficiently.
Another application for which the recursive methodology tends to be inefficient can be found in model risk management, where sensitivity analysis requires computing risk measures and other portfolio metrics repeatedly for a given credit risk model under slight perturbations of the model parameters. In this case the recursive methodology requires the expensive computation of a large number of very similar loss distributions, thus compounding the wastefulness issue discussed above.

\item 
As shown in Figure \ref{fig:computational_time_tail}, simulation-based methods -- such as Monte Carlo integration and importance sampling -- scale more favorably in the number of obligors. In particular, both methods scale linearly in the number of obligors, since they both require simulating a matrix of obligors' default indicators with number of columns equal to the number of obligors and number of rows equal to the number of simulation runs. \medskip

\noindent While for the Monte Carlo method this linear dependence is evident in Figure \ref{fig:computational_time_tail}, in the case of the importance sampling method it is concealed by the computational overhead stemming from the determination of the shifted mean $\mu$ in the first part of Algorithm \ref{alg:importance_sampling_simulation_second_step}. This preliminary optimization step turns out to be computationally expensive and makes the importance sampling algorithm unappealing for financial applications that require fast execution, such as product pricing for trading desks or the estimation of pre-trade risk for proprietary trading. Furthermore, it is worthwhile recalling that the importance sampling algorithm applies exclusively to the Gaussian factor copula model, so that its field of application is in any case already quite limited.
As far as the Monte Carlo method is concerned, a correct assessment of its time complexity must take into account the dependence on the number of simulations needed. While the importance sampling procedure can produce accurate estimations far into the tail with a low number of simulations (in practice of the order $10^4$), the plain Monte Carlo approach requires a much higher number of simulation runs, as explained in Section \ref{subsec:monte_carlo_simulation}. This is particularly problematic when high accuracy is required as in the computation of the Expected Shortfall of a portfolio and other distortion measures with non-zero spectrum at high quantile levels. 

\item Semi-analytical methods, such as mod-Poisson approximation schemes and the large deviations approximation, boast the best performance in terms of computational time, due to the fact that they require only the evaluation of known functions and a numerical integration over the copula factor, both of which can be performed efficiently.
To be more precise, these methods depend on the number of obligors only through the computation of a few coefficients. In the case of mod-Poisson approximation schemes it is necessary to compute the coefficients $\mathfrak{p}_{k,n}$ for $k = 2, \ldots, r$, where $r$ is the approximation scheme order, while in the case of the large deviations approximation it is only needed to evaluate the cumulant generating function $F$, its second derivative $F''$ and the optimal tilting $\lambda_x$. Nevertheless, these operations amount to computing specific functions of the vector of default probabilities $(p_1, p_2, \ldots, p_n)$ and are easily vectorized on any modern CPU, which results in a linear dependence on the number of obligors with a very small coefficient.
Indeed this linear dependence is empirically negligible for portfolios of even thousands of obligors and is effectively invisible in Figure \ref{fig:computational_time_tail}.

\end{itemize}

\begin{center}
\begin{figure}[ht]
    \includegraphics[width=0.8\textwidth]{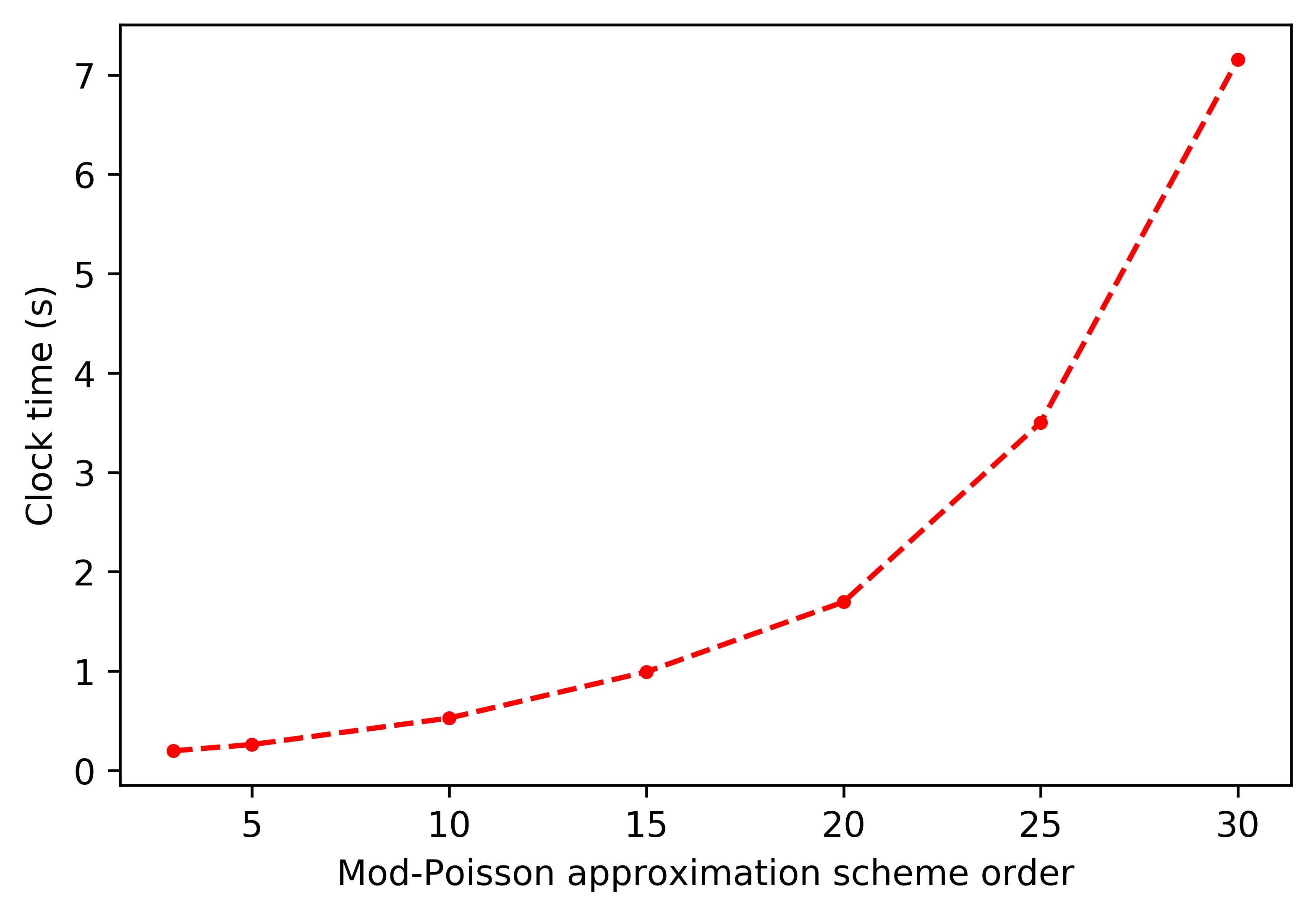}
    \caption{CPU clock time for estimation of a single tail probability for mod-Poisson approximation schemes as a function of the order.}
    \label{fig:computational_time_tail_order}
\end{figure}
\end{center}

The accuracy of mod-Poisson approximation schemes increases in the approximation order, as shown empirically in Figure \ref{fig:tail_errors}. It is therefore natural to investigate how the computational time burden increases when the order is increased.
From Theorem \ref{thm:modPoisson_convergence} and Remark \ref{rk:partitions} it is clear that the computation of the coefficient $b_{k,n}$ depends on $k$ as the number of integer partitions of $k$ with minimum block size $2$. In practice integer partitions can be generated efficiently by encoding them as ascending compositions -- rather than descending compositions, as is conventionally done -- and by exploiting some properties of this representation, as explained in \cite{kelleher2009generating}. One can then select only integer partitions with minimum block size $2$ by acceptance-rejection. This implementation yields an exponential time complexity in the square-root of $r$, as checked empirically for orders up to $r=30$ in Figure \ref{fig:computational_time_tail_order}.

\section{Application: CDO pricing}
\label{sec:cdo_pricing}

\subsection{Background}

A CDO (Collateralized Debt Obligation) is a financial product for the \emph{securitization} of credit portfolios, such as pools of residential mortgages or consumer loans. 
We refer the reader to \cite[Chapter 12]{mcneil2015quantitative} for an introduction to CDO pricing. Here we only mention that CDO pricing requires the valuation and comparison of payment cashflows, therefore the temporal evolution of the credit portfolio becomes important. Therefore we introduce the time dependence by denoting the total portfolio losses up to time $t$ as:
$$L_{t, n} 
= \sum_{i=1}^n Z_{i}\, Y_{t, i}.$$

Let us denote by $m$ the total number of tranches in the CDO, then the \emph{notional value} of the $j$-th tranche at time $t$ as a function of the underlying portfolio losses is given by:

$$ N_t^{(j)}(L_{t,n}) = \begin{cases} K_j - K_{j-1} & \text{if $L_{t,n} < K_{j-1}$}, \\ K_j - L_{t,n} & \text{if $K_{j-1} \le L_{t,n} \le K_j$}, \\ 0 & \text{if $L_{t,n} > K_j$}, \end{cases}$$ 
where $0 = K_0 < K_1 < \ldots < K_m$. The two values $K_{j-1}$ and $K_j$ are called the \emph{attachment} and \emph{detachment} points respectively. 
If the portfolio losses are below the attachment point, the tranche has a fixed value of $K_j - K_{j-1}$. As losses increase above that level, the tranche must absorb them and correspondingly loses value, up until the losses reach the detachment point and the tranche has become worthless.\medskip 

The notional value of tranches can also be expressed more compactly as follows:
\begin{equation}
\label{eq:tranche_notional}
N_t^{(j)} = (K_j - L_{t,n})^+ - (K_{j-1} - L_{t,n})^+, \quad j = 1, \ldots, m,
\end{equation}
which shows that CDO tranches have the same payoff as put spreads on the underlying credit portfolio. Similarly it is possible to define the cumulative \emph{tranche loss} up to time $t$, given by:
\begin{equation}
\label{eq:tranche_loss}
L_t^{(j)} = K_j - K_{j-1} - N_t^{(j)} = (L_{t,n} - K_{j-1})^+ - (L_{t,n} - K_j)^+, \quad j = 1, \ldots, m,
\end{equation}
which represents the losses incurred by the tranche and has the same payoff as a call spread on the underlying credit portfolio.
\medskip

In this section we are actually interested in pricing \emph{synthetic} CDOs, which involves the evaluation and comparison of the two cashflows, or \emph{legs}, of the counterparties: the premium payments leg and the default payments leg.
The CDO issuer makes premium payments at regular times, say $0 = t_0 < t_1 < \ldots < t_N = T$. Assuming time is measured in years, we can express these payments in terms of an annualized \emph{spread}, denoted by $s$. The premium payment at time $t_n$ from the $j$-th tranche is then equal to $s (t_n - t_{n-1}) N_{t_n}^{(j)}$, where $N_{t_n}^{(j)}$ is just the notional value of the tranche at time $t_n$, as given in Equation \eqref{eq:tranche_notional}.
In actual practice premium payments also include so called \emph{accrued payments}. More specifically, if an obligor in the reference portfolio defaults at a random time $T \in (t_{n-1}, t_n]$, then at time $t_n$  the CDO issuer is also required to pay the premium accrued over the time before the default occurred, i.e.~$s(T - t_{n-1})(L_{T}^{(j)} - L_{T-}^{(j)})$, where $L_{T}^{j}$ is the tranche loss at time $T$, as given in Equation \eqref{eq:tranche_loss}. By assuming a sufficiently thick time grid (i.e.~$N$ is sufficiently large), we can safely ignore accrued payments, as we will do in the following.\medskip

The total value at time $t=0$ of the premium cashflow can then be computed by taking the expectation of the discounted cashflow under an equivalent martingale measure, obtaining:
\begin{equation}
\label{eq:premium_leg}
L_{\mathrm{premium}}(s) = s \sum_{n=1}^N \ee^{-r t_n} (t_n - t_{n-1})\,\, \E{N_t^{(j)}},
\end{equation}
where $r$ denotes the deterministic risk-free interest rate. Notice that the uncertainty due to interest rate risk is many orders of magnitude smaller than the uncertainty due to default dependence. This is why incorporating interest rate risk in credit risk models typically leads to negligible contributions and the assumption of a deterministic interest rate is common in many credit risk applications.\medskip 

The CDO buyer makes default payments every time an obligor in the reference portfolio defaults. The discounted value at time $t=0$ of the default cashflow is given by the following integral:
$$ \int_0^T \ee^{-r t} dL_t^{(j)},$$
which is to be understood as a pathwise Riemann--Stieltjes integral, and which can be approximated as a stochastic Riemann sum over the premium payments time grid:
$$ \sum_{n=1}^N \ee^{-r t_n} (L_{t_n}^{(j)} - L_{t_{n-1}}^{(j)}).$$
The value at time $t=0$ of the default cashflow can then be computed, analogously to the premium case, as the expectation of the discounted cashflow under an equivalent martingale measure, yielding:
\begin{equation}
\label{eq:default_leg}
L_{\mathrm{default}} \approx \sum_{n=1}^N \ee^{-r t_n} \left( \E{L_{t_n}^{(j)}} - \E{L_{t_{n-1}}^{(j)}} \right).
\end{equation}
The fair value of the CDO can be deduced by equating the premium leg $L_{\mathrm{premium}}(s)$ in Equation $\eqref{eq:premium_leg}$ with the default leg $L_{\mathrm{default}}$ in Equation \eqref{eq:default_leg}. The value of the spread $s$ for which equality holds is:
$$ s = \frac{L_{\mathrm{default}}}{L_{\mathrm{premium}}(1)},$$
and is called the \emph{fair spread}. This is the quantity that is quoted in CDO exchanges and used to assess the relative cost a CDO contract.

\subsection{Estimation of call prices}

From the definitions of tranche notional value and tranche loss in Equations \eqref{eq:tranche_notional} and \eqref{eq:tranche_loss} it is clear that the computations of the two payment legs -- and therefore of the fair spread -- can be reduced to the problem of computing call and put spreads on the reference portfolio. Moreover, since the payoff of a put option can be expressed in terms of the payoff of a call option, this task can be further reduced to the accurate estimation of call prices only. This section is devoted to an empirical comparison of several estimation models on this particular task. All estimations refer to a representative credit portfolio model, a single-factor Gaussian copula with $n=100$ obligors, equicorrelation parameter $\rho = 0.1$ and heterogeneous average default probabilities sampled from a log-normal distribution with varying mean $p$ and standard deviation $\sigma=0.2$. The choice of a log-normal distribution is of course arbitrary and is done in analogy to (and to ease comparison with) the numerical experiments presented in \cite{el2008gauss, el2009stein}. The results do not vary qualitatively for different choices of the parameters.

\begin{figure}[ht]
    \centering
    \subfloat{{\includegraphics[width=0.5\textwidth]{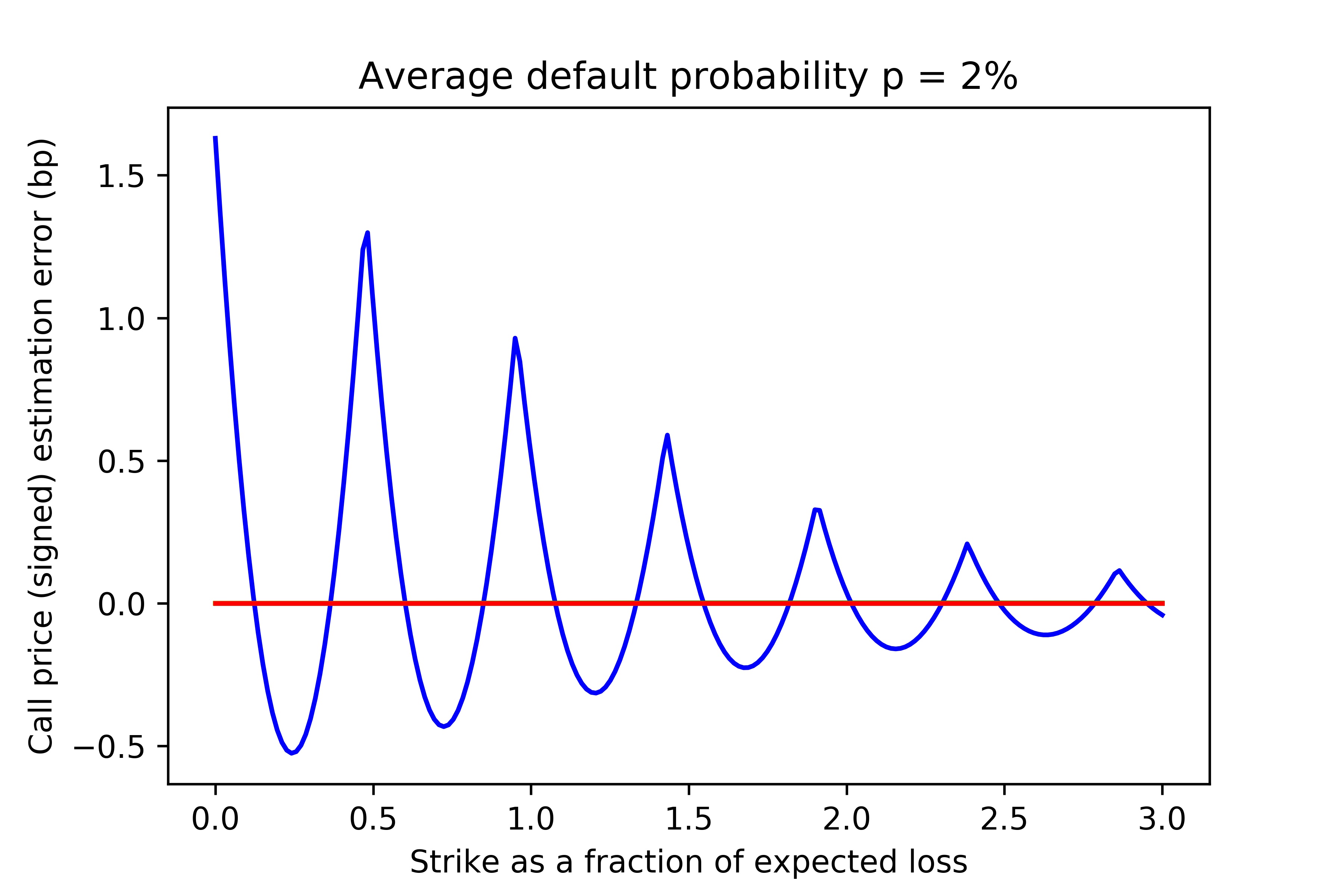}}}%
    \hspace*{-0.5cm}
    \subfloat{{\includegraphics[width=0.5\textwidth]{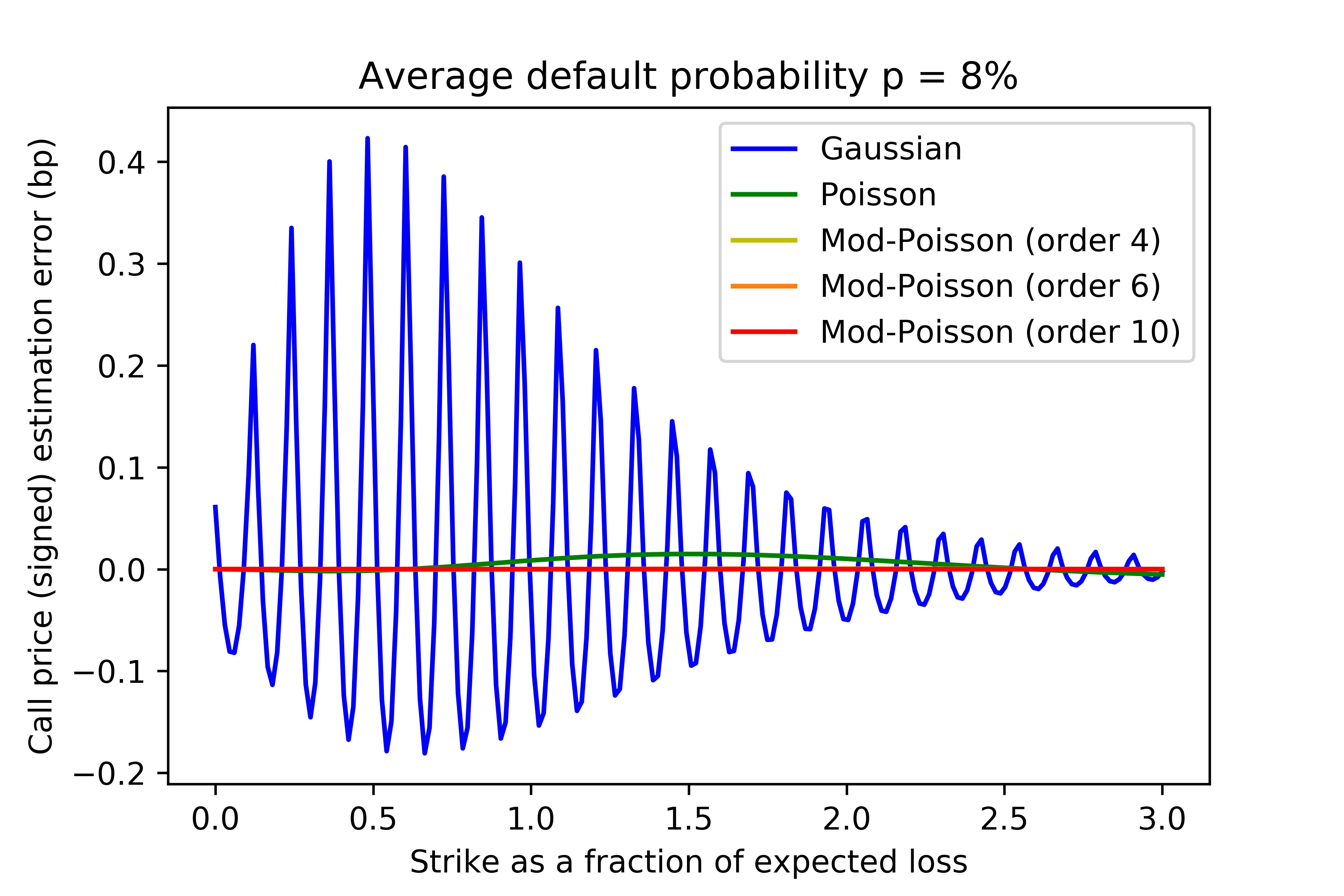} }}%
    \centering \\
    \subfloat{{\includegraphics[width=0.5\textwidth]{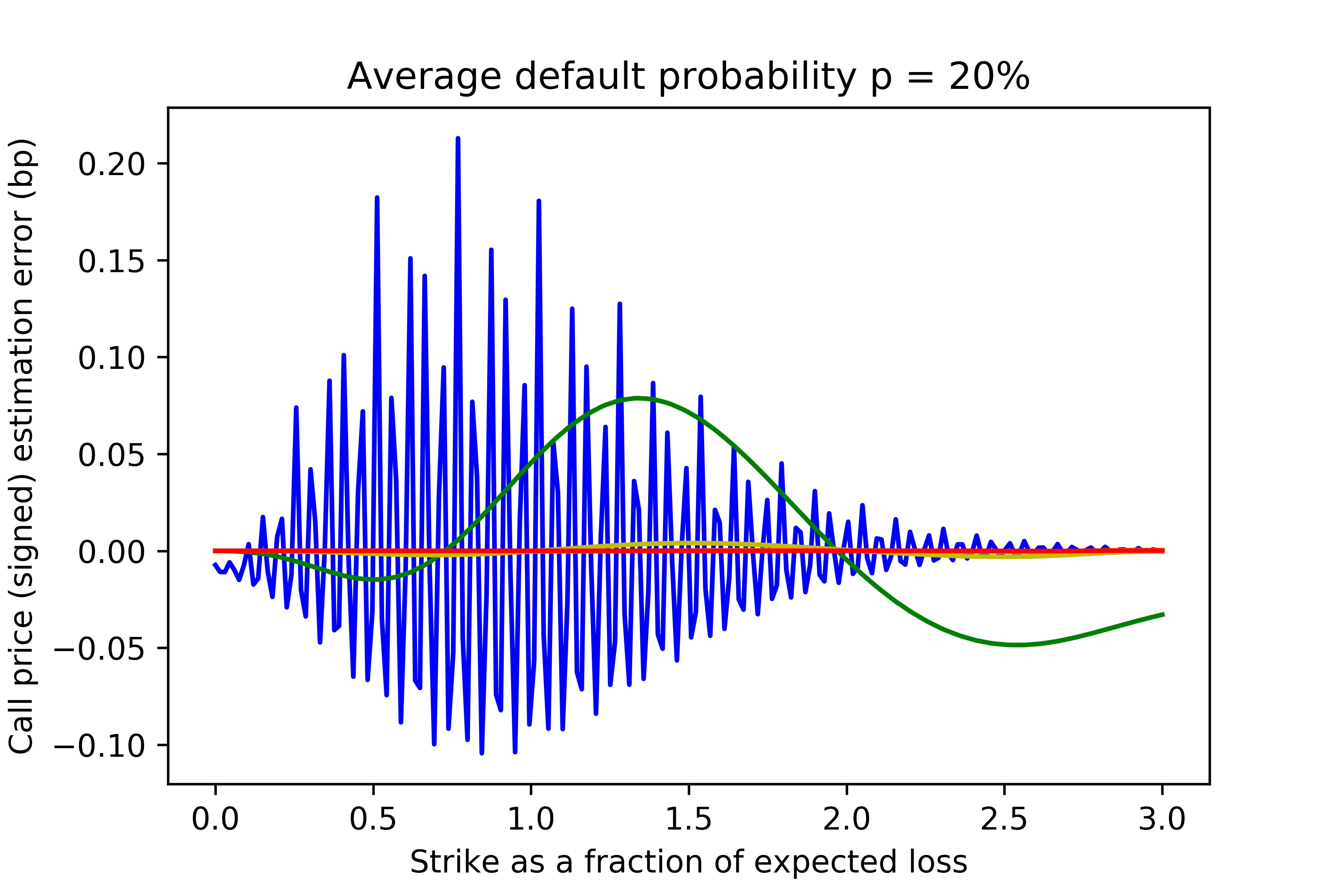}}}%
    \hspace*{-0.5cm}
    \subfloat{{\includegraphics[width=0.5\textwidth]{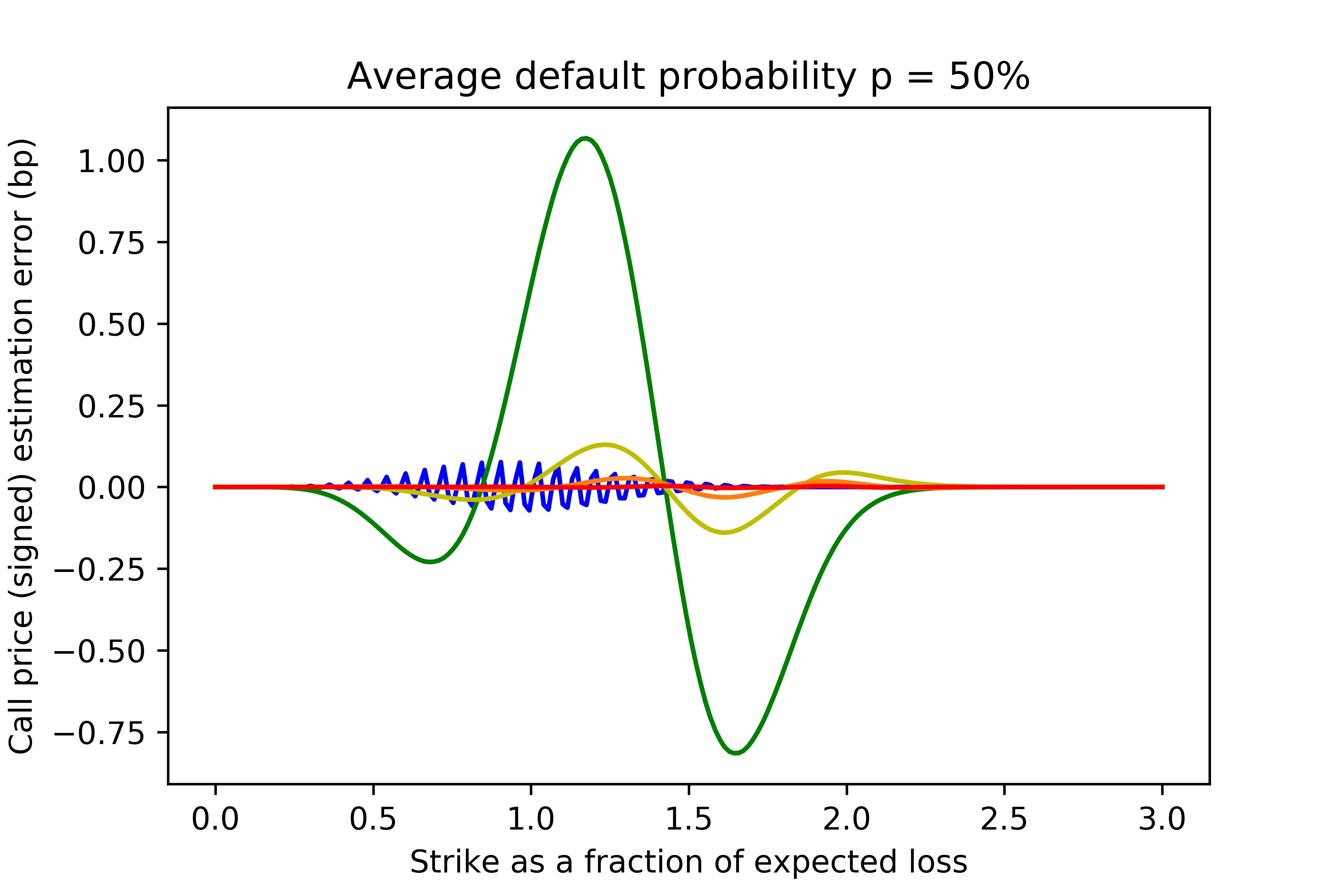} }}%
    \caption{Signed relative error for call price estimation measured in basis points (bp) as a function of the strike for different estimation methods and for increasing values of the average default probability in the portfolio. Method ``Poisson'' denotes the Chen--Stein first order correction to the Poisson approximation (which corresponds also to the mod-Poisson approximation scheme of order 2), while the method ``Gaussian'' denotes the Chen--Stein first order correction to the Gaussian approximation (see \cite{el2008gauss, el2009stein} or Appendix \ref{sec:estimation_overview}).}%
    \label{fig:calls}
\end{figure}
\medskip

Figure \ref{fig:calls} shows the signed relative error for estimated call prices measured in basis points (bp) as a function of the strike. Each subfigure refers to a different choice of the mean $p$ of the log-normal distribution from which the average single-obligor default probabilities are sampled.
The value of $p$ is the correct parameter to study the performance of these estimation methods, because as it increases the total portfolio losses move from a Poisson regime to a Gaussian one, correspondingly affecting the performance of the methods.
\begin{itemize}
\item For low values of $p$ all methods based on the Poisson approximation perform best, with negligible errors, while the Stein first-order correction to the Gaussian approximation yields comparatively larger errors, further characterized by an oscillation in the strike due to the approximation to a discrete distribution. 

\item As the default probability increases, the error associated with the Chen--Stein first-order correction to the Poisson approximation increases, while the Gaussian approximation yields better and better estimates. For this reason the authors in \cite{el2008gauss, el2009stein} (where the Chen--Stein's method is first applied to CDO pricing) propose a hybrid estimation method, in which they suggest to use either the first-order correction to the Gaussian approximation or to the Poisson one, depending on the specific value of $p$. 

\item As shown in Figure \ref{fig:calls}, mod-Poisson approximation schemes at higher order perform very well even in a Gaussian regime. For instance, in the case of $p = 50\%$ -- which is an extremely high average default probability in any conceivable credit risk setting -- the mod-Poisson approximation schemes of order $6$ and $10$ perform better than the first-order correction to the Gaussian approximation, so that the domain of validity of the Poisson approximation, if properly corrected, includes all credit risk applications, without the need for a hybrid method.

\end{itemize}

\subsection{Default leg, premium leg, and fair spread}

Accurate estimates of call prices can then be used in Equations \eqref{eq:default_leg} and \eqref{eq:premium_leg} for the determination of the default and premium legs of the CDO and finally for the estimation of the fair spread, which is effectively used to price a CDO contract.
Table \ref{table:cdo} compares the estimation of default legs, premium legs and fair spreads for five standardized tranches of a CDO written on a representative portfolio.

\begin{center}
\begin{figure}[ht]
    \includegraphics[width=\textwidth]{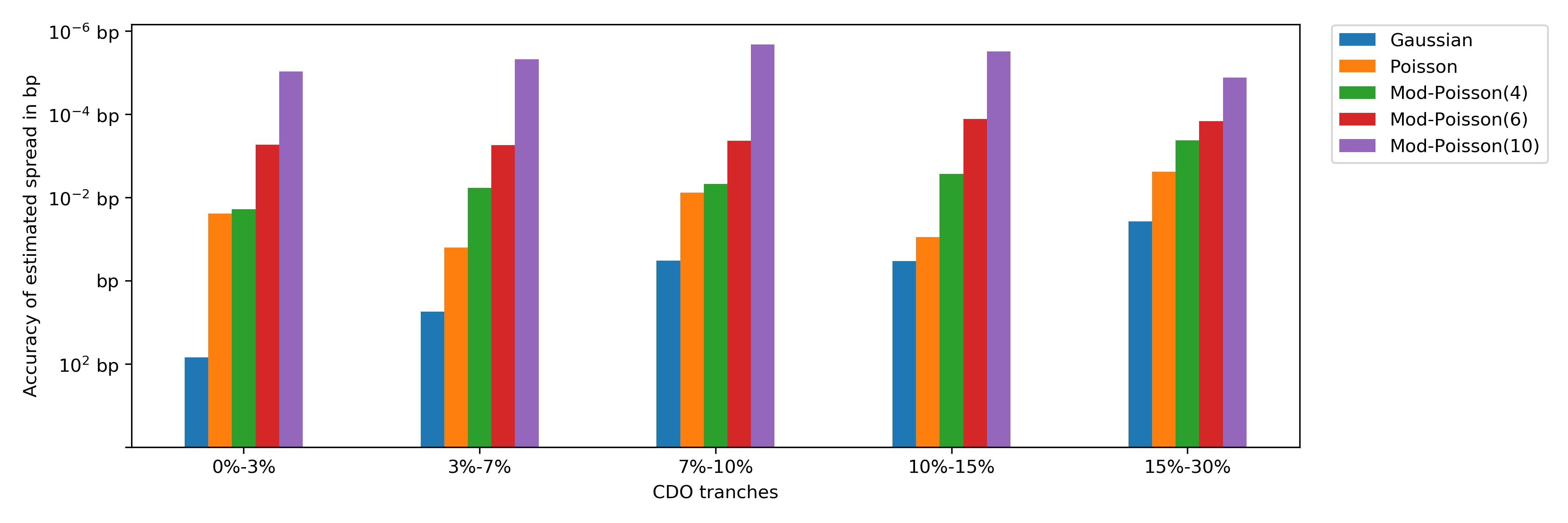}
    \caption{Decimal place accuracy of fair spread estimates for all estimation methods by tranche.}
    \label{fig:decimal_accuracy}
\end{figure}
\end{center}

The accuracy of the models' estimates can be more readily compared from Figure \ref{fig:decimal_accuracy}, where the decimal place accuracy of the fair spread estimates in Table \ref{table:cdo} are compared. The mod-Poisson approximation schemes yield the most accurate estimates, with exponentially improving precision as the approximation scheme order increases.

\clearpage
\begin{landscape}
\thispagestyle{empty}
\begin{center}
\begin{table}
\vspace*{-1cm}
\hspace*{-2cm}
\begin{tabular}{cc|r|rrrrr} 
 \hline
 \begin{tabular}{c} Attachment \\ points \end{tabular} &  & \begin{tabular}{c} Benchmark \\ (recursive) \end{tabular} & \begin{tabular}{c} Gaussian \\ approximation \end{tabular} & \begin{tabular}{c} Poisson \\ approximation \end{tabular} & \begin{tabular}{c} Mod-Poisson \\(order=4) \end{tabular} & \begin{tabular}{c} Mod-Poisson \\(order=6) \end{tabular} & \begin{tabular}{c} Mod-Poisson \\(order=10) \end{tabular} \\ \hline 
& Default leg & 232.5975 bp & 228.8759 bp & 232.5996 bp & 232.5979 bp & 232.5974 bp & 232.5975 bp\\
0\% - 3\% & Premium leg & 452.2145 bp & 451.0626 bp & 452.2208 bp & 452.2137 bp & 452.2145 bp & 452.2145 bp\\
& Fair spread & 5143.5210 bp & 5074.1488 bp & 5143.4961 bp & 5143.5404 bp & 5143.5204 bp & 5143.5210 bp\\ \hline
& Default leg & 200.2722 bp & 200.7338 bp & 200.2540 bp & 200.2716 bp & 200.2723 bp & 200.2722 bp\\
3\% - 7\% & Premium leg & 1364.6971 bp & 1362.7014 bp & 1364.7217 bp & 1364.6987 bp & 1364.6971 bp & 1364.6971 bp\\
& Fair spread & 1467.5213 bp & 1473.0575 bp & 1467.3613 bp & 1467.5153 bp & 1467.5218 bp & 1467.5213 bp\\ \hline
& Default leg & 62.8105 bp & 62.7749 bp & 62.8088 bp & 62.8099 bp & 62.8104 bp & 62.8105 bp\\
7\% - 10\% & Premium leg & 1248.7606 bp & 1248.8878 bp & 1248.7468 bp & 1248.7608 bp & 1248.7606 bp & 1248.7606 bp\\
& Fair spread & 502.9824 bp & 502.6464 bp & 502.9747 bp & 502.9777 bp & 502.9820 bp & 502.9825 bp\\ \hline
& Default leg & 33.6304 bp & 33.5575 bp & 33.6500 bp & 33.6310 bp & 33.6304 bp & 33.6304 bp\\
10\% - 15\% & Premium leg & 2204.4540 bp & 2204.5755 bp & 2204.4246 bp & 2204.4529 bp & 2204.4540 bp & 2204.4540 bp\\
& Fair spread & 152.5566 bp & 152.2176 bp & 152.6473 bp & 152.5594 bp & 152.5565 bp & 152.5566 bp\\ \hline
& Default leg & 7.2444 bp & 7.2698 bp & 7.2461 bp & 7.2447 bp & 7.2445 bp & 7.2444 bp\\
15\% - 30\% & Premium leg & 6738.6074 bp & 6738.5758 bp & 6738.6165 bp & 6738.6076 bp & 6738.6073 bp & 6738.6074 bp\\
& Fair spread & 10.7506 bp & 10.7883 bp & 10.7531 bp & 10.7511 bp & 10.7508 bp & 10.7507 bp\\ \hline
\end{tabular}
\vspace*{0.1cm}
\caption{Default leg, premium leg and fair spread for five tranches computed using different techniques. Benchmark values are exact and computed using the recursive methodology.}
\label{table:cdo}
\end{table}
\end{center}
\end{landscape}

\section{Conclusions} 
\label{sec:conclusions}

In this paper we have introduced mod-Poisson approximation schemes for the semi-analytical estimation of functionals of factor credit portfolio models. This technique is based on the theory of mod-$\phi$ convergence and mod-$\phi$ approximation schemes and relies on the computation of higher-order correction terms to the classic Poisson approximation. We also show how to extend the method to credit models with stochastic exposures using mod-compound Poisson approximation schemes.
\medskip

The method has been compared empirically with the recursive method, the large deviations approximation, the Chen--Stein method and the Monte Carlo simulation technique (with and without importance sampling). The tests show that mod-Poisson approximation schemes lead to more accurate estimates for risk measures (such as $\mathrm{VaR}$ and $\mathrm{ES}$) and CDO tranche prices. Furthermore, due to the semi-analytical nature of the approximations, they require substantially less computational time, especially in the large portfolio limit. 
\clearpage 


\appendix

\section{Overview of estimation methods}
\label{sec:estimation_overview}

\subsection{Recursive methodology}
\label{subsec:recursive_methodology}

The recursive methodology was first introduced in \cite{barlow1984computing} (but see \cite{kuo2003optimal} for a more concise introduction) in the context of reliability theory, where the main quantity of interest is the failure probability of a system constituted by a large number of independent sub-components. The same estimation technique was rediscovered in the context of credit portfolios in \cite{hull2004valuation, brasch2004note} and is nowadays well--known among financial practitioners.
In its full generality, this method allows the exact computation of the law of $L_n = \sum_{i=1}^n Y_i$, where the random variables $(Y_i)_{i=1}^n$ are assumed to be integer-valued and independent, but not necessarily identically distributed. 
Under the conditional independence assumption typical of credit risk models, this method can be used to compute the conditional distribution of $L_n$ given $\Psi$, i.e.~$\P{L_n = k | \Psi}$ for all $k \in \Nb$ and all $\psi \in \range{\Psi}$. The unconditional distribution of $L_n$ can then be obtained via numerical integration over the mixing variable $\Psi$.
\medskip

Before presenting the recursive methodology algorithm in its full generality, it is instructive to consider the simpler case of the Poisson binomial distribution. Let $L_n = \sum_{i=1}^n Y_i$, where $(Y_i)_{i=1}^n$ are independent Bernoulli random variables, with $\P{Y_i = 1} = p_i$. The basic idea is to compute the distribution of the sum $L_n$ recursively by adding one Bernoulli random variable at a time. 

\vspace*{0.5cm}
\begin{algorithm}[ht]
\caption{Recursive methodology for loss distribution under assumption of independence.}
\label{alg:recursive_methodology}
\KwData{distributions of the $Y_i$ as $q_{i,j} = \P{Y_i = j}$, for $j \in \range{Y_i}$.}
\KwResult{distribution $p_k = \P{L_n = k}$, for $k \in \range{L_n}$.}
$p_0^{(0)} \gets 1$\\
$p_k^{(0)} \gets 0$, for $k \in \range{L_n} \backslash \{0\}$\\
\For{$i = 1$ \KwTo $n$}{
    $p_k^{(i)} \gets p_k^{(i-1)}$, for $k \in \range{L_n}$\\
    \For{$k \in \range{L_n}$}{
        \For{$j \in \range{Y_i}$}{
            $\ell \gets k + j$\\
            $p_k^{(i)} \minuseq p_k^{(i-1)} q_{i,j}$\\
            $p_\ell^{(i)} \pluseq p_k^{(i-1)} q_{i,j}$\\
        }
    }
}
\Return{$p_k \gets p_k^{(n)}$, for $k \in \range{L_n}$.}
\end{algorithm}
\vspace*{0.5cm}

If $L_n$ is the sum of the first term only, i.e.~$L_n = Y_1$, then its distribution is simply the same as that of $Y_1$. We can denote it as follows:
$$p^{(1)}_k = \P{L_n = k} = \begin{cases} 1 - p_1 & \mbox{if $k = 0$}, \\ p_1 & \mbox{if $k=1$} \end{cases}$$ 
where the superscript $(1)$ indicates that this is the distribution of $L_n$ as a sum of only one term.
If $L_n = L_1 + L_2$, then clearly we have that:
$$p^{(2)}_k = \P{L_n = k} = \begin{cases} (1 - p_1)(1-p_2) & \mbox{if $k = 0$}, \\ p_1(1-p_2) + p_2(1-p_1) & \mbox{if $k=1$}, \\ p_1 p_2 & \mbox{if $k=2$}, \end{cases}$$ 
which can also be expressed in terms of $p^{(1)}_k$ as follows:
$$p^{(2)}_k = p^{(1)}_k (1-p_2) + p^{(1)}_{k-1} p_2, \quad k = 0, 1, 2,$$
provided we set $p^{(1)}_{-1} := 0$.
This shows that adding one more summand to the sum is equivalent to shifting a probability mass equal to $p^{(1)}_{k} p_2$ from the point $k$ to the point $k+1$, for all possible values of $k$.
\medskip 

This recursion can be generalized, for any number of summands:
$$p^{(m)}_k = p^{(m-1)}_k (1-p_m) + p^{(m-1)}_{k-1} p_m, \quad k = 0, 1, \ldots, m.$$
Running the recursion from $m=1$ to $m=n$, one finally recovers the distribution of $L_n$.
In the general case with $L_n = \sum_{i=1}^n Y_i$, where the $Y_i$ are independent, integer-valued random variables not necessarily Bernoulli distributed, at the $m$-th step of the recursion we must shift a probability mass equal to $p_k^{(m-1)}\, \P{Y_m = j}$ from the point $k$ to the point $\ell := k + j$, for every possible value $j$ of $Y_m$ and for every possible value $k$. Algorithm \ref{alg:recursive_methodology} shows the recursion in this general case.
\medskip 

The complexity of the algorithm is
$$O\left(n \sum_{i=1}^{|\range{L_n}|} |\range{Y_i}| \right).$$
In the Poisson binomial case the complexity is $O(n^2)$, which makes the recursive methodology computationally expensive in the case of large portfolios, as shown in Section \ref{subsec:computational_time}.
We further remark that the recursive method always outputs the full loss distribution, which is computationally wasteful in the case of applications for which only a part of the distribution is needed, for instance its tail or a particular tail value.

\subsection{Large deviations approximation}

The large deviations theory provides tools for the analytical approximation of probabilities of rare events. Most results can be derived using a variety of techniques, but since the '80s a unified approach to the field has emerged, as expounded in several monographs on the subject \cite{deuschel2001large, ellis2006entropy, dembo2010large, den2008large}. 
The theory has been applied to many financial problems, including credit risk management (see \cite{pham2007some} for an overview). Here we follow the results presented in \cite{dembo2004large}, which can be used to estimate conditional probabilities of the form $\P{L_n > n x|\Psi}$ for a given tail value $x \in \Rb^+$ and a generic credit portfolio $L_n = \sum_{i=1}^n Z_i\, Y_i$.
\medskip

The key quantity, as in all large deviations applications, is the cumulant generating function of the random variable of interest. In this case we want to approximate the conditional distribution of $L_n$ given $\Psi$, so we denote its conditional cumulant generating function by $F(\cdot, \Psi)$ and compute it as:
\begin{equation}
\label{eq:large_devs_cgf}
F(\lambda, \Psi) = \frac{1}{n} \sum_{i=1}^n \log\left(1 + p_i(\Psi)(\E{\ee^{\lambda Z_i}|\Psi} - 1)\right),
\end{equation}
where $\P{Y_i = 1|\Psi} = p_i(\Psi)$ are the default probabilities as a function of the mixing variable and $\E{\ee^{\lambda Z_i}|\Psi}$ is the conditional moment generating function of $Z_i$. The latter quantity depends on the particular distributional assumptions of the credit risk model and should be known explicitly or be easily computable.

We remark that the normalization term $1/n$ in Equation \eqref{eq:large_devs_cgf} is necessary for adapting the proof of Bahadur--Rao's theorem (see \cite[Theorem 3.7.4.]{dembo2010large}) to the case of still independent, but not identically distributed summands. More specifically, $F(\lambda, \Psi)$ can be thought of as the cumulant generating function of a mixture of the summands' distributions, each taken with weight $1/n$.
\medskip

The second key quantity is the Legendre--Fenchel transform of $F(\lambda, \psi)$ for a given tail value $x$, which is defined as:
$$\lambda_x(\Psi) := \sup_{\lambda \ge 0} \left\{ \lambda x - F(\lambda, \Psi)\right\}.$$
Even in the case of the simplest credit models the value $\lambda_x(\psi)$ cannot be computed analytically, but fortunately the corresponding optimization problem can be solved efficiently, since the objective function is convex.
Finally, $F(\cdot, \Psi)$, its second derivative $F''(\cdot, \Psi)$ and $\lambda_x(\Psi)$ are used in the computation of the following large devations tail probability estimator:
\begin{equation}
\label{eq:large_deviations_estimator}
    \P{L_n > nx | \Psi} \approx \frac{1}{\sqrt{2 \pi n\, \lambda_x(\Psi)^2 F''(\lambda_x(\Psi), \Psi)}}\, \ee^{- n \left( \lambda_x(\Psi) x - F(\lambda_x(\Psi), \Psi) \right)}
\end{equation}
The unconditional tail probability can then be obtained numerically by integrating over the mixing variable $\Psi$. This latter integration actually suffers from numerical instabilities due to the presence of a vanishing denominator in Equation \eqref{eq:large_deviations_estimator} for some values of $\Psi$, as discussed in Section \ref{sec:risk_measures_estimation}.
We remark that the large deviations approximation is optimal only asymptotically in $n$, therefore we can expect it to perform well in the limit of very large portfolios.

\subsection{Stein's method: first-order correction to the Gaussian approximation}

In this section we present a method introduced by El Karoui and Jiao in \cite{el2009stein} and applied to CDO pricing in \cite{el2008gauss}. Their results rely on Stein's method \cite{stein1972bound} and on the zero-bias transformation framework \cite{goldstein1997stein} developed by Goldstein and Reinert. The results we are interested in rely on the following lemma.

\begin{lemma}
\label{lemma:gaussian_first_order_correction}
Let $X_1, \ldots, X_n \in \mathrm{L}^4$ be independent mean-zero random variables and let $W = X_1 + \cdots + X_n$, with $\sigma_W^2 = \mathrm{Var}(W)$. Then for any function $h$ such that $\|h''\|_\infty$ is finite, the following approximation holds:
$$ \E{h(W)} \approx \E{h(Z)} + \frac{\sum_{i=1}^n \E{(X_i)^3}}{2 \sigma_W^4}\, \E{\left( \frac{Z^2}{3 \sigma_W^2} - 1 \right) Z h(Z)},$$
where $Z \sim \Normal{0}{\sigma_W^2}$.
\end{lemma}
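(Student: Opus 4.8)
The plan is to prove this via Stein's method combined with the zero-bias transformation of Goldstein and Reinert \cite{goldstein1997stein}, exploiting that the Gaussian law is the unique fixed point of that transformation. Recall that for a mean-zero random variable $X$ with variance $\sigma^2$, its zero-biased version $X^*$ is characterized by $\E{X g(X)} = \sigma^2 \E{g'(X^*)}$ for all smooth $g$; the Stein characterization of the normal is precisely the statement that $Z \sim \Normal{0}{\sigma_W^2}$ satisfies $Z^* \stackrel{d}{=} Z$. First I would record the low-order moments of the zero-bias law by testing the defining relation against $g(x) = x^2/2$ and $g(x) = x^3/6$, which gives $\E{X^*} = \E{X^3}/(2\sigma^2)$ and $\E{(X^*)^2} = \E{X^4}/(3\sigma^2)$; the $\mathrm{L}^4$ hypothesis guarantees these are finite. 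I would then invoke the standard coupling for sums: letting $X_i^*$ be an independent zero-biased copy of $X_i$ and drawing a random index $I$ with $\P{I=i} = \sigma_i^2/\sigma_W^2$ (where $\sigma_i^2 = \Var{X_i}$), the variable $W^* = W - X_I + X_I^*$ has the $W$-zero-biased distribution.

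Next I would set up Stein's equation for the target law $\Normal{0}{\sigma_W^2}$: given $h$, let $f$ solve $\sigma_W^2 f'(w) - w f(w) = h(w) - \E{h(Z)}$. Evaluating at $W$ and using the zero-bias identity $\E{W f(W)} = \sigma_W^2 \E{f'(W^*)}$ yields the exact representation $\E{h(W)} - \E{h(Z)} = \sigma_W^2 ( \E{f'(W)} - \E{f'(W^*)} )$. Writing $W^{(I)} = W - X_I$, so that $W = W^{(I)} + X_I$ and $W^* = W^{(I)} + X_I^*$ with the three pieces conditionally independent given $I$, I would Taylor-expand $f'$ around $W^{(I)}$. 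The first-order term contributes $\E{f''(W^{(I)})}\,\E{X_I - X_I^*\mid I}$, and since $\E{X_i} = 0$ and $\E{X_i^*} = \E{X_i^3}/(2\sigma_i^2)$, averaging over $I$ with weights $\sigma_i^2/\sigma_W^2$ and restoring the prefactor $\sigma_W^2$ collapses the leading contribution to $-\tfrac12\sum_{i=1}^n \E{X_i^3}\,\E{f''(W^{(i)})}$.

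To finish I would replace $W^{(i)}$ by $Z$ to leading order and convert $\E{f''(Z)}$ back into a functional of $h$. Differentiating Stein's equation and using Gaussian integration by parts (equivalently, expanding $h - \E{h(Z)}$ in Hermite polynomials, on which the Stein operator acts as a shift) gives $\E{f''(Z)} = -\tfrac13\,\E{h'''(Z)}$, so the correction becomes $\tfrac16\sum_i \E{X_i^3}\,\E{h'''(Z)}$. Finally, the identity $(\frac{Z^2}{3\sigma_W^2}-1)Z = \frac{\mathrm{He}_3(Z)}{3\sigma_W^2}$ together with the relation $\E{\mathrm{He}_3(Z)\,h(Z)} = \sigma_W^6\,\E{h'''(Z)}$ rewrites $\E{h'''(Z)}$ as $\frac{3}{\sigma_W^4}\,\E{(\frac{Z^2}{3\sigma_W^2}-1)Z\,h(Z)}$, which produces exactly the stated formula; crucially this last form transfers all derivatives onto the Hermite weight, so the answer depends only on $h$ itself and never requires $h'''$ to exist.

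The main obstacle is quantifying the ``$\approx$'': one must show that the neglected terms --- the second-order ($f'''$) term in the Taylor expansion, whose index-averaged coefficient involves $\sigma_i^2 - \E{(X_i^*)^2} = \sigma_i^2 - \E{X_i^4}/(3\sigma_i^2)$, together with the error from the replacement $W^{(i)} \to Z$ --- are genuinely of smaller order than the retained third-moment term. This is where the $\mathrm{L}^4$ control and the classical Stein-factor bounds $\|f''\|_\infty, \|f'''\|_\infty \lesssim \|h''\|_\infty$ enter: the hypothesis $\|h''\|_\infty < \infty$ is exactly what is needed to bound the Taylor remainder uniformly, and under the usual central limit normalization these remainders are $O(n^{-1})$ against a leading correction of order $\sum_i \E{X_i^3}$.
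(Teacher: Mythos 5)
Your proposal cannot be matched against an internal argument, because the paper does not actually prove this lemma: it states that the proof, together with an explicit bound on the approximation error, is beyond its scope, and defers to \cite[Theorem 2.1]{el2008gauss}. That said, your derivation is correct, and it reconstructs precisely the route of the cited work (Stein's method plus the Goldstein--Reinert zero-bias transformation, which is the framework the paper itself attributes to El Karoui and Jiao). The key ingredients all check out: the zero-bias moments $\E{X^*}=\E{X^3}/(2\sigma^2)$ and $\E{(X^*)^2}=\E{X^4}/(3\sigma^2)$, the index coupling $W^*=W-X_I+X_I^*$ with $\P{I=i}=\sigma_i^2/\sigma_W^2$, the exact identity $\E{h(W)}-\E{h(Z)}=\sigma_W^2\left(\E{f'(W)}-\E{f'(W^*)}\right)$, and the Gaussian identities $\E{f''(Z)}=-\tfrac{1}{3}\E{h'''(Z)}$ and $\E{\left(Z^3-3\sigma_W^2 Z\right)h(Z)}=\sigma_W^6\,\E{h'''(Z)}$, which combine to give exactly the stated prefactor $\sum_{i=1}^n\E{(X_i)^3}/(2\sigma_W^4)$. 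Two refinements would be needed for a complete write-up. First, $\|h''\|_\infty<\infty$ does not guarantee that $h'''$ exists, so the detour through $\E{h'''(Z)}$ should either be carried out for smooth $h$ and extended by approximation, or avoided altogether by proving $\E{f''(Z)}=-\frac{1}{\sigma_W^4}\,\E{\left(\frac{Z^2}{3\sigma_W^2}-1\right)Zh(Z)}$ directly by integration by parts against the Gaussian density; as you correctly observe, the final formula involves only $h$ itself, so this is a removable technicality. Second, the ``$\approx$'' becomes a theorem only once the discarded terms (the second-order Taylor term with its fourth-moment coefficient $\sigma_i^2-\E{X_i^4}/(3\sigma_i^2)$, the Taylor remainder, and the replacement $W^{(i)}\to Z$) are bounded via Stein-factor estimates; this quantitative part is exactly the content of the error bound in the cited theorem, and your sketch of why those terms are of smaller order under central-limit scaling is the right one.
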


This lemma provides a first-order correction to the classical approximation of $W$ in terms of the Gaussian random variable $Z$. Its proof, together with an explicit bound on the approximation error, is beyond the scope of this work and can be found in \cite[Theorem 2.1]{el2008gauss}.
Unfortunately Lemma \ref{lemma:gaussian_first_order_correction} cannot be used directly for the estimation of tail probabilities or call prices, because the regularity assumption on the function $h$ (namely, that its second derivative exists everywhere) is not satisfied neither for indicator functions, $h(x) = \1_{\{x > k\}}$, nor for call functions, $h(x) = (x - k)^+$. Nevertheless the authors are able to show the result still holds for these functional choices: see \cite[Propositions 3.5 and 3.6]{el2009stein}.\medskip

If $L_n = \sum_{i=1}^n Z_i \,Y_i$ is a generic credit portfolio  with mixing variable $\Psi$, then we can approximate the expectation of a call function of $L_n$ with strike $K$ using Lemma \ref{lemma:gaussian_first_order_correction} applied to the centered random variables $X_i := Z_i Y_i - \E{Z_i Y_i}$, obtaining:
$$ \E{(L_n - K)^+|\Psi} \approx \E{(Z - K)^+|\Psi} + \frac{1}{6 \sigma_W^2} \sum_{i=1}^n \E{X_i^3} K \phi_{\sigma_W}(K),$$
where $\sigma_W^2 = \sum_{i=1}^n \mathrm{Var}(Z_i Y_i | \Psi)$, $Z \sim \Normal{0}{\sigma_W^2}$, $\phi_\sigma$ denotes the Gaussian density with variance $\sigma^2$, and the first-order correction term has been computed explicitly using Bachelier's formula.
The approximation for the unconditional expectation can then be obtained by numerical integration over the distribution of the mixing variable $\Psi$.

\subsection{Chen--Stein's method: first-order correction to the Poisson approximation}

This method was also introduced in \cite{el2009stein, el2008gauss}, together with the first-order Gaussian correction seen in the previous section. In fact the two methods share the same basic techniques, just different reference laws. When the reference law is the Poisson law, Stein's method can still be applied but for a different choice of operator, as first noticed by Chen in \cite{chen1975poisson}. 
The approximation we are interested in relies on the following lemma.

\begin{lemma}
\label{lemma:poisson_first_order_correction}
Let $X_1, \ldots, X_n \in \mathrm{L}^3$ be non-negative, integer-valued random variables and let $W = X_1 + \ldots + X_n$, with $\lambda_W = \E{W}$ and $\sigma_W^2 = \mathrm{Var}(W)$. Then for any bounded function $h$, the following approximation holds:
$$ \E{h(W)} \approx \E{h(Z)} + \frac{\sigma_W^2 - \lambda_W}{2}\, \E{\Delta^2_+ (h)(Z)},$$
where $Z \sim \Po{\lambda_W}$ and $\Delta^2_+(h)(x) = h(x+2) - 2 h(x+1) + h(x)$ is the second-order forward finite difference operator.
\end{lemma}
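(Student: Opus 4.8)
The plan is to identify the right-hand side with the order-$2$ mod-Poisson approximation scheme $\nu^{(2)}$ for $W$ built on the reference law $Z \sim \Po{\lambda_W}$, and to read off the correction coefficient directly from the factorial cumulants of $W$. First I would fix the Poisson parameter to be $\lambda_W = \E{W}$, so that the first factorial cumulants of $W$ and $Z$ agree, $\kappa_1(W) = \lambda_W = \kappa_1(Z)$; this is exactly what forces the first-order coefficient to vanish and leaves a pure second-difference correction. Then, passing to Fourier space with $z = \ee^{\I\xi} - 1$, I would write
$$\widehat{\mu}_W(\xi) = \exp\Big(\sum_{k \ge 1} \tfrac{1}{k!}\,\kappa_k(W)\,z^k\Big), \qquad \widehat{\mu}_Z(\xi) = \ee^{\lambda_W z},$$
so that, as in the proof of Proposition~\ref{prop:factorial_approximation}, the deconvolution residue of Definition~\ref{def:mod-phi_convergence} is $\psi(\xi) = \exp\big(\sum_{k \ge 2} \tfrac{1}{k!}\kappa_k(W)\,z^k\big)$, the $k=1$ terms having cancelled.

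Expanding this residue gives $\psi(\xi) = 1 + \tfrac{1}{2}\kappa_2(W)\,z^2 + O(z^3)$, so the order-$2$ truncation is $\chi^{(2)}(\xi) = 1 + b_2\,z^2$ with $b_1 = 0$ and $b_2 = \tfrac{1}{2}\kappa_2(W)$. The only elementary computation needed is the identity $\kappa_2(W) = \E{W(W-1)} - (\E{W})^2 = \Var{W} - \E{W} = \sigma_W^2 - \lambda_W$, which holds for any $W$ and produces precisely the announced coefficient $b_2 = (\sigma_W^2 - \lambda_W)/2$. Feeding $b_1 = 0$ and this $b_2$ into the expectation formula of Proposition~\ref{prop:expectation_approximation} — whose derivation, resting on the Parseval identity $\int \overline{\widehat{h}(\xi)(\ee^{-\I\xi}-1)^k}\,\widehat{\nu}^{(0)}(\xi)\,\tfrac{d\xi}{2\pi} = \nu^{(0)}(\Delta_+^k h)$, applies verbatim to the single scheme $\nu^{(2)}$ here — yields $\nu^{(2)}(h) = \E{h(Z)} + \tfrac{\sigma_W^2 - \lambda_W}{2}\,\E{\Delta_+^2 h(Z)}$, which is exactly the right-hand side of the statement.

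What remains — and is the genuine content hidden behind the symbol $\approx$ — is to bound the error $\E{h(W)} - \nu^{(2)}(h)$, and this is the hard part. In Fourier terms the error is the integral of $\overline{\widehat{h}}\,(\psi - \chi^{(2)})\,\ee^{\lambda_W z}$, and $\psi - \chi^{(2)} = O(z^3)$ is governed by the third factorial cumulant $\kappa_3(W)$ — which is exactly why the $\mathrm{L}^3$ hypothesis is imposed. The obstacle is to turn this cubic-order residue into a uniform smallness estimate: one must exploit the smoothing of the reference factor $\ee^{\lambda_W\phi}$ (equivalently, $\mathcal{C}^{r+2}(\Tb)$ convergence of residues) so that the discarded term contributes an error of order $\lambda_W^{-3/2}$ uniformly over bounded $h$, for which I would invoke the total-variation estimate quoted after Proposition~\ref{prop:expectation_approximation} (or the companion bound \eqref{eq:companion} when the $X_i$ are Bernoulli). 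As an alternative quantitative route, mirroring Lemma~\ref{lemma:gaussian_first_order_correction}, one can run the native Chen--Stein argument of \cite{el2009stein}: solve the Poisson Stein equation $\lambda_W\,g(k+1) - k\,g(k) = h(k) - \E{h(Z)}$, use a size-bias coupling $W^s = W - X_I + X_I^s$ for $W = \sum_i X_i$, and Taylor-expand $g$, the leading term reproducing the coefficient $(\sigma_W^2 - \lambda_W)/2$ and the Stein factors bounding the remainder; there the only delicate point is the uniform control of the second finite difference of the Stein solution.
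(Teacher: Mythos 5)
The paper does not actually prove Lemma \ref{lemma:poisson_first_order_correction}: it is quoted from the Chen--Stein literature, with the proof and an explicit error bound delegated to \cite[Theorem 4.2]{el2009stein}, where the result is obtained by solving the Poisson Stein equation and using a bias-transformation coupling. Your derivation therefore takes a genuinely different route, and it is correct as far as the identification of the right-hand side goes: writing the deconvolution residue of $W$ with respect to $\Po{\lambda_W}$, the choice $\lambda_W = \E{W} = \kappa_1(W)$ kills the linear term, the quadratic coefficient is $b_2 = \tfrac{1}{2}\kappa_2(W) = \tfrac{1}{2}(\sigma_W^2 - \lambda_W)$, and the computation of Proposition \ref{prop:expectation_approximation} (whose proof indeed applies verbatim to a single signed measure, since it only uses the structure $\widehat{\nu}^{(2)} = \chi^{(2)}\,\ee^{\lambda_W \phi}$ and Parseval, not the asymptotics $\lambda_n \to \infty$) converts the order-$2$ scheme into exactly the displayed right-hand side. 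In effect you have supplied a proof of the paper's closing remark in that subsection, namely that the Chen--Stein first-order Poisson correction \emph{is} the mod-Poisson approximation scheme of order $r=2$; this is structurally more illuminating than the citation, since it explains the coefficient as factorial-cumulant matching in the sense of Proposition \ref{prop:factorial_approximation}.

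Where you should be careful is the error term, which you rightly identify as the real content of ``$\approx$''. Your first suggested route --- invoking the total-variation estimate quoted after Proposition \ref{prop:expectation_approximation}, or the companion bound \eqref{eq:companion} --- does not apply in the generality of the lemma: those statements concern a sequence $(X_n)_{n \in \Nb}$ converging mod-$\phi$ (with residues converging in $\mathcal{C}^{r+2}(\Tb)$ and parameters tending to infinity), respectively sums of independent \emph{Bernoulli} variables, whereas here $W$ is a single, fixed sum of general non-negative integer-valued $\mathrm{L}^3$ variables (with independence tacitly assumed by the statement). So that route, as written, has a gap for the stated level of generality. Your second suggested route --- solving the Stein equation, size-biasing, and controlling $\Delta_+^2$ of the Stein solution --- is the correct one, and is precisely what the cited \cite[Theorem 4.2]{el2009stein} carries out; completing it would yield the quantitative bound that gives the approximation its meaning.
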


Also in this case, it is clear that the lemma provides a first-order correction to the classical Poisson approximation of $W$ in terms of $Z$. Its proof, together with an explicit bound on the approximation error, can be found in \cite[Theorem 4.2]{el2009stein}.\medskip 

In a credit risk setting we can apply this result to the case of a credit portfolio with unit exposures $L_n = \sum_{i=1}^n Y_i$ and mixing variable $\Psi$. For instance, we can approximate the expectation of a call function of $L_n$ with strike $K$ using Lemma \ref{lemma:poisson_first_order_correction} applied to the default indicators $Y_i \sim \Be{p_i(\Psi)}$, which yields:
$$ \E{(L_n - K)^+|\Psi} \approx \E{(Z - K)^+|\Psi} - \frac{1}{2}\sum_{i=1}^n p_i(\Psi)\, \ee^{-\lambda}\, \frac{\lambda^{K-1}}{(K-1)!},$$
where $\lambda = \sum_{i=1}^n p_i(\Psi)$ and $Z \sim \Po{\lambda}$.
The method can trivially be extended to the case of homogeneous deterministic recovery rates, that is portfolios of the form $L_n = \sum_{i=1}^n Z_i\, Y_i$, with $Z_i = 1 - R$, for some $R \in [0,1]$, but the authors in \cite{el2008gauss, el2009stein} are unable to provide an extension to the case of stochastic, heterogeneous recovery rates. In Section \ref{sec:mod-compound_poisson_approximation} we show how higher order correction terms to the Poisson approximation can be derived in this setting using mod-compound Poisson approximation schemes.
Finally, let us remark that this approximation method corresponds to the mod-Poisson approximation scheme of order $r=2$.

\subsection{Monte Carlo simulation}
\label{subsec:monte_carlo_simulation}

Simulation-based methods are very popular in financial applications and standard monographs in the field are \cite{asmussen2007stochastic} and \cite{glasserman2004monte}. 
The simplest simulation-based method is Monte Carlo integration. Algorithm \ref{alg:monte_carlo_simulation} presents the naive Monte Carlo estimator for the tail probabilities of a generic credit risk model $L_n = \sum_{i=1}^n L_i$. 
\medskip 

The simulation is divided in two steps: first we simulate a realization of the mixing variable $\Psi$ and then, conditionally on this value, we simulate the portfolio losses. The process is then repeated for a sufficiently high number of simulation runs.

\vspace*{0.5cm}
\begin{algorithm}[H]
\caption{Monte Carlo estimator for probability tail function.}
\label{alg:monte_carlo_simulation}
\KwData{number of Monte Carlo simulations $M$.}
\KwResult{tail probability estimator $\theta_x \approx \P{L_n > x}$, for some $x \in \Rb^+$.}
\For{$m = 1$ \KwTo $M$}{
    Sample $\psi_m$ from the distribution of $\Psi$\\
    \For{$i = 1$ \KwTo $n$}{
        Sample $L_i^{(m)}$ given $\Psi = \psi_m$\\
    }
}
$\displaystyle \theta_x \gets \frac{1}{M} \sum_{m=1}^M \1_{\left\{ \sum_{i=1}^n L_i^{(m)} > x \right\}}$\\
\Return{$\theta_x$.}
\end{algorithm}
\vspace*{0.5cm}

One key advantage of this naive Monte Carlo methodology is its generality: it's straightforward to introduce new sources of randomness and additional model parameters (provided, of course, the final losses can still be simulated efficiently), so that even very complicated models can be estimated.

Additionally, it is possible to estimate the full loss distribution with only one batch of simulations, by running Algorithm \ref{alg:monte_carlo_simulation} for different values of $x$ on the same simulated sample.

On the other hand, the naive Monte Carlo methodology requires at least $10^n$ simulations for the estimation of probabilities of order $10^{-n}$. This is a simple consequence of the fact that rare events occur rarely also in simulations, thus leading to estimators that underestimate the probability of rare events. In practice, even more simulations are needed in order to obtain an accurate estimate within any reasonable asymptotic confidence interval: see Figure \ref{fig:var_errors} for an empirical illustration. This problem is particularly severe in risk management applications, since large losses tend to occur with low probability but contribute substantially to the risk of a position.

\subsection{Importance sampling}
\label{subsec:importance_sampling}

In order to overcome the limitations of the naive Monte Carlo methodology for the estimation of rare events, importance sampling techniques have been developed. Our presentation follows \cite{glasserman2005importance}, where an importance sampling algorithm for the Gaussian copula model was first introduced, but good treatments can also be found in \cite{glasserman2004monte, mcneil2015quantitative}. Additionally we show explicitly how to incorporate stochastic exposures.
To illustrate the approach, let us first consider a simple Poisson binomial model of the form $L_n = \sum_{i=1}^n Y_i$ and suppose that we are interested in estimating the rare event probability $\P{L_n > x}$, for $x$ sufficiently large.
Let $f(y_1, \ldots, y_n) = \P{Y_1 = y_1, \ldots, Y_n = y_n}$ be the probability mass function of the random vector $(Y_1, \ldots, Y_n)$ on $\{0,1\}^n$ and let $g$ denote the probability mass function of another random vector of obligors' losses, $(\tilde{Y}_1, \ldots, \tilde{Y}_n)$, yet to be determined. 
Then a simple calculation shows that 
\begin{equation}
\label{eq:IS_estimator_formula}
  \P{L_n > x} = \E{\1_{\{Y_1 + \cdots+ Y_n > x\}}} = \E{\1_{\{\tilde{Y}_1 + \cdots + \tilde{Y}_n > x\}} \frac{f(\tilde{Y}_1, \ldots, \tilde{Y}_n)}{g(\tilde{Y}_1, \ldots, \tilde{Y}_n)}},  
\end{equation}
where the right-hand side integral is computed only with respect to the distribution of $(\tilde{Y}_1, \ldots, \tilde{Y}_n)$ and the ratio $f(\cdot)/g(\cdot)$ is known as the \emph{likelihood ratio function} of the two distributions.
\medskip 

The main idea of importance sampling is that we can choose $(\tilde{Y}_1, \ldots, \tilde{Y}_n)$ in such a way that the event $\{\tilde{Y}_1 + \cdots + \tilde{Y}_n > x \}$ is much more likely than $\{Y_1 + \cdots + Y_n > x\}$ on the support of the likelihood ratio function, so we can approximate the rare event probability on the left-hand side  of \eqref{eq:IS_estimator_formula} by estimating the integral on the right-hand side using a naive Monte Carlo estimation.
The large deviations theory suggests that a natural choice for the loss distribution $(\tilde{Y}_1, \ldots, \tilde{Y}_n)$ comes from exponentially tilting the total loss $L_n$, which leads to the following parametrized family:
\begin{align}
g_\lambda(y_1, \ldots, y_n) & = \ee^{\lambda (y_1 + \ldots + y_n)} \frac{f(y_1, \ldots, y_n)}{\E{\ee^{\lambda (Y_1 + \ldots + Y_n)}}} \label{eq:likelihood_ratio} \\
& = \ee^{\lambda (y_1 + \ldots + y_n)} \frac{\prod_{i=1}^n (p_i)^{y_i} (1-p_i)^{1-y_i}}{\prod_{i=1}^n (1 + p_i (\ee^\lambda - 1))} \nonumber \\
& = \prod_{i=1}^n {q_i(\lambda)}^{y_i} (1 - q_i(\lambda))^{1- y_i} \label{eq:exponential_tilting}
\end{align}
where we defined
$$ q_i(\lambda) = \frac{ p_i \ee^\lambda}{1 + p_i(\ee^\lambda - 1)}, \quad \lambda \in \Rb^+.$$
Equation \eqref{eq:exponential_tilting} shows that sampling from the distribution $g_\lambda$ is equivalent to simulating losses with new default probabilities $q_i(\lambda)$ instead of $p_i$.\medskip

Ideally the parameter $\lambda$ should be chosen by minimizing the variance of the importance sampling estimator in Equation \eqref{eq:IS_estimator_formula} (or equivalently its second moment), because the estimator with minimum variance will require the least number of simulations for any given level of confidence. Unfortunately this minimization is in general intractable, but by expressing the likelihood ratio function from Equation \eqref{eq:likelihood_ratio} in terms of $F(\lambda) := \log\left( \E{\ee^{\lambda (Y_1 + \ldots + Y_n)}} \right)$ (i.e.~the cumulant generating function of $L_n$), we obtain the following bound for the second moment of the exponentially tilted importance sampling estimator:
\begin{equation}
\label{eq:chernoff_bound}
\E{\left(\1_{\{\tilde{Y}_1 + \ldots + \tilde{Y}_n > x\}} \frac{f(\tilde{Y}_1, \ldots, \tilde{Y}_n)}{g_\lambda(\tilde{Y}_1, \ldots, \tilde{Y}_n)}\right)^2} \le \ee^{-2\lambda x + 2 F(\lambda)}.
\end{equation}
This bound, which is equivalent to the Chernoff bound, turns out to be remarkably sharp, so that the minimizer $\lambda_x$ of the right-hand side of \eqref{eq:chernoff_bound} can be used to obtain a very efficient importance sampling estimator.
The estimation procedure we just outlined applies to a Poisson binomial distribution, but it is readily extended to any Gaussian copula model of the form $L_n = \sum_{i=1}^n Z_i\, Y_i$ with mixing variable $\Psi$, assuming that the variables $Z_i \,Y_i$ are light-tailed.
\medskip

We first sample several realizations of the mixing variable $\Psi$ and compute the exponentially tilted importance sampling estimator for each realization. The presence of the stochastic exposures, $Z_i$, slightly modifies the formul{\ae} already presented through their moment generating functions, $\E{\ee^{\lambda Z_i}}$, which must be explicitly computable, but are otherwise easy to incorporate. A full description of this procedure can be found in Algorithm \ref{alg:importance_sampling_simulation_first_step}. 

\vspace*{0.5cm}
\begin{algorithm}[H]
\caption{One-step importance sampling algorithm for probability tail function.}
\label{alg:importance_sampling_simulation_first_step}
\KwData{number of Monte Carlo simulations $M$.}
\KwResult{tail probability estimator $\theta_x \approx \P{L_n > x}$, for some $x \in \Rb^+$.}
\For{$m = 1$ \KwTo $M$}{
    Sample $\psi_m$ from $\mathcal{N}(0,I)$\\
    Compute $\lambda_x(\psi_m)$\\
    \For{$i = 1$ \KwTo $n$}{
        $\displaystyle q_i \gets \frac{p_i(\psi_m) \E{\ee^{\lambda_x(\psi_m) Z_i}}}{1 + p_i(\psi_m)(\E{\ee^{\lambda_x(\psi_m) Z_i}} - 1)}$\\
        Sample $\tilde{L}_i^{(m)} = Z_i \tilde{Y}_i$, with $\tilde{Y}_i \sim \Be{q_i}$\\
    }
}
$\displaystyle \theta_x \gets \frac{1}{M} \sum_{m=1}^M \1_{\left\{ \sum_{i=1}^n \tilde{L}_i^{(m)} > x \right\}} \ee^{ -\lambda_x(\psi_m) \sum_{i=1}^n \tilde{L}_i^{(m)} + F(\lambda_x(\psi_m), \psi_m)}$\\
\Return{$\theta_x$.}
\end{algorithm}
\vspace*{0.5cm}

Nevertheless, this importance sampling procedure yields estimators that are far from being optimal, as shown in \cite{glasserman2005importance}. The problem lies in the way the mixing variable $\Psi$ is handled. Recall that in a Gaussian copula model large losses tend to occur for large realizations of the mixing variable $\Psi$ (up to sign conventions), but in Algorithm \ref{alg:importance_sampling_simulation_first_step} large realizations of $\Psi$ will be sampled only rarely. In other words, the exponential tilting must be applied to the unconditional portfolio losses and not just to the conditional ones.
This leads to a two-step exponential tilting procedure, in which the Gaussian mixing variable $\Psi$ is tilted first -- which amounts to a shift of its mean from zero to a new value $\mu$ -- while portfolio losses are tilted in a second step conditionally on each realization of the shifted mixing variable, exactly as in Algorithm \ref{alg:importance_sampling_simulation_first_step}. 
It should be mentioned that the first step relies heavily on the specific parametric choice of a Gaussian copula and furthermore on solving an approximate optimization, as the objective function itself needs to be approximated. Nevertheless several possible approximations are feasible and the interested reader is referred to \cite[Section 5.1]{glasserman2005importance} for an overview of choices.\medskip 

The final procedure is described in detail in Algorithm \ref{alg:importance_sampling_simulation_second_step} and this is also the procedure used for the empirical tests of Section \ref{sec:risk_measures_estimation}. Finally, we remark that if the full tail function of the loss distribution needs to be estimated, then it is not necessary to re-compute the shifted mean $\mu$ of $\Psi$ for each value of $x$, since in practice the same shift yields efficient estimators for a large neighborhood of tail points.
 
\vspace*{0.5cm}
\begin{algorithm}[H]
\caption{Two-step importance sampling algorithm for probability tail function.}
\label{alg:importance_sampling_simulation_second_step}
\KwData{number of Monte Carlo simulations $M$.}
\KwResult{tail probability estimator $\theta_x \approx \P{L_n > x}$, for some $x \in \Rb^+$.}
$\displaystyle \mu \gets \sup_{z \in \Rb} \left\{ F(\lambda_x(z), z) - \lambda_x(z) x - \frac{1}{2} |z|^2 \right\}$\\
\For{$m = 1$ \KwTo $M$}{
    Sample $\psi_m$ from $\mathcal{N}(\mu,I)$\\
    Compute $\lambda_x(\psi_m)$\\
    \For{$i = 1$ \KwTo $n$}{
        $\displaystyle q_i \gets \frac{p_i(\psi_m) \E{\ee^{\lambda_x(\psi_m) Z_i}}}{1 + p_i(\psi_m)(\E{\ee^{\lambda_x(\psi_m) Z_i}} - 1)}$\\
        Sample $\tilde{L}_i^{(m)} = Z_i \tilde{Y}_i$, with $\tilde{Y}_i \sim \Be{q_i}$\\
    }
}
$\displaystyle \theta_x \gets \frac{1}{M} \sum_{m=1}^M \1_{\left\{ \sum_{i=1}^n \tilde{L}_i^{(m)} > x \right\}} \ee^{ -\lambda_x(\psi_m) \sum_{i=1}^n \tilde{L}_i^{(m)} + F(\lambda_x(\psi_m), \psi_m) + \frac{1}{2} |\mu|^2 - \mu^T \psi_m}$\\
\Return{$\theta_x$.}
\end{algorithm}
\vspace*{0.5cm}

\section{Tail of the Poisson distribution}
\label{app:useful_formulae}

\begin{proposition}[Tail function of the Poisson distribution]
\label{prop:poisson_tail_gamma}

If $X$ is a $\Po{\lambda}$ random variable, then:
$$
    \P{X > k} = \frac{1}{k!}\, \gamma(k+1, \lambda), \quad \forall k \in \Nb
$$
where $\gamma$ is the lower incomplete gamma function given by:

$$ \gamma(x, \lambda) = \int_0^\lambda t^{x-1} \ee^{-t} dt.$$
\end{proposition}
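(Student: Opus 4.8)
The plan is to treat both sides as functions of the parameter $\lambda$ and to show that they solve the same first-order initial value problem. I would write the left-hand side explicitly as the Poisson tail series
$$ \P{X > k} = \sum_{j=k+1}^\infty \ee^{-\lambda}\,\frac{\lambda^j}{j!}, $$
and set $g(\lambda) := \frac{1}{k!}\,\gamma(k+1,\lambda) = \frac{1}{k!}\int_0^\lambda t^k\,\ee^{-t}\,dt$. Both functions vanish at $\lambda = 0$: the series because every one of its terms carries a factor $\lambda^j$ with $j \ge k+1 \ge 1$, and $g$ because it is an integral over a degenerate interval. It therefore suffices to verify that the two functions have the same derivative in $\lambda$ on $(0, +\infty)$, after which the claim follows by integrating from $0$.

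For $g$ this is immediate from the fundamental theorem of calculus: $g'(\lambda) = \frac{1}{k!}\,\lambda^k\,\ee^{-\lambda}$, which is exactly the Poisson mass at $k$. For the series I would differentiate term by term (justified by normal convergence on compact $\lambda$-intervals) and use $\frac{d}{d\lambda}\big(\ee^{-\lambda}\lambda^j/j!\big) = \ee^{-\lambda}\lambda^{j-1}/(j-1)! - \ee^{-\lambda}\lambda^j/j!$. Abbreviating $a_j(\lambda) := \ee^{-\lambda}\lambda^j/j!$, the derivative of the tail becomes $\sum_{j=k+1}^\infty (a_{j-1} - a_j)$, a telescoping sum equal to $a_k - \lim_{j\to\infty} a_j = a_k = \ee^{-\lambda}\lambda^k/k!$. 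Hence $(\P{X>k})' = g'$, and together with the matching initial values at $\lambda=0$ this forces the two expressions to coincide for every $\lambda \ge 0$.

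An equivalent route, which avoids differentiating the series, is to evaluate $\gamma(k+1,\lambda)$ directly by integration by parts: one proves by induction on $k$ that $\int_0^\lambda t^k\,\ee^{-t}\,dt = k!\,\big(1 - \ee^{-\lambda}\sum_{m=0}^k \lambda^m/m!\big)$, the base case $k=0$ being $\int_0^\lambda \ee^{-t}\,dt = 1 - \ee^{-\lambda}$ and the inductive step using the recursion $\gamma(k+1,\lambda) = -\lambda^k\,\ee^{-\lambda} + k\,\gamma(k,\lambda)$. Dividing by $k!$ and recognizing that $1 - \sum_{m=0}^k \ee^{-\lambda}\lambda^m/m! = 1 - \P{X \le k} = \P{X > k}$ then yields the statement. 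The only step demanding any care — and the natural \emph{main obstacle} in an otherwise elementary argument — is the index bookkeeping: in the differentiation approach one must track the shift so that the families $(a_{j-1})$ and $(a_j)$ cancel correctly and only the boundary term $a_k$ survives, while in the integration-by-parts approach one must check that the upper limit of the partial sum advances from $k-1$ to $k$ exactly when the leftover term $-\lambda^k\,\ee^{-\lambda}$ is absorbed.
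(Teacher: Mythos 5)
Your first argument is essentially the paper's own proof: the paper likewise differentiates the tail $\sum_{j=k+1}^\infty \ee^{-\lambda}\lambda^j/j!$ term by term in $\lambda$, observes that the resulting sums cancel (your telescoping is the same index-shift computation) leaving the Poisson mass $\ee^{-\lambda}\lambda^k/k!$, and then integrates from $0$ using the fundamental theorem of calculus. Both your explicit check of the initial condition at $\lambda=0$ and your alternative integration-by-parts/induction route are fine, but they add nothing beyond the paper's argument, which is correct as you have reproduced it.
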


\begin{proof}

For a Poisson distribution, the derivative of the tail probability with respect to the distribution parameter $\lambda$ is given by the probability mass function. This can be shown as follows:

\begin{align*}
    \frac{\dd}{\dd \lambda} \P{X > k} & = \sum_{j = k+1}^\infty \frac{1}{j!} \frac{\dd}{\dd \lambda} \left( \ee^{-\lambda} \lambda^j \right) \\
    & = \sum_{j=k+1}^\infty \ee^{-\lambda} \frac{\lambda^{j-1}}{(j-1)!} - \sum_{j=k+1}^\infty \ee^{-\lambda} \frac{\lambda^{j}}{(j)!} \\
    & = \sum_{j=k}^\infty \ee^{-\lambda} \frac{\lambda^{j}}{j!} - \sum_{j=k+1}^\infty \ee^{-\lambda} \frac{\lambda^{j}}{(j)!} = \ee^{-\lambda} \frac{\lambda^k}{k!}.
\end{align*}

Then, by the fundamental theorem of calculus, one has:

\begin{align*}
    \P{X > k} 
    & = \int_0^\lambda \ee^{-t}\, \frac{t^k}{k!}\, dt  = \frac{1}{k!}\, \gamma(k+1, \lambda).
\end{align*} 
\end{proof}



\section{Incidence algebras and the M\"obius function}
\label{sec:mobius_function}

\begin{definition}
A poset (or partially ordered set) $P$ is a set together with a binary order relation, denoted $\le$, satisfying the following axioms:
\begin{enumerate}
    \item $x \le x, \quad \forall x \in P$,
    \item if $x \le y$ and $y \le x$, then $x = y$,
    \item if $x \le y$ and $y \le z$, then $x \le z$.
\end{enumerate}
\end{definition}

We say that $P$ has a minimal element, denoted $\widehat{0}$, if there exists an element $\widehat{0} \in P$ such that $\widehat{0} \le x$ for all $x \in P$. Analogously, has a maximal element $\widehat{1}$, if there exists an element $\widehat{1} \in P$ such that $\widehat{1} \ge x$ for all $x \in P$.
 
\begin{example}[The poset of set partitions $\Pi(n)$]
Let $A$ be a finite set. A set partition $\pi = \{B_1, \ldots, B_k\}$ of $A$ is a collection of non-empty, mutually disjoint subsets of $A$, such that $\cup_{i=1}^k B_i = A$. The sets $B_1, \ldots, B_k$ are called the blocks of $\pi$ and the number of blocks of $\pi$ is denoted by $|\pi|$.
\medskip

For $n \in \Nb$, define $\Pi(n)$ as the set of all set partitions of $\{1, 2, \ldots, n\}$.  Given two set partitions $\pi$ and $\sigma$, we denote $\pi \le \sigma$ if every block of $\pi$ is contained in a block of $\sigma$. Then, the poset $(\Pi(n),\le)$
admits a minimal element $$\widehat{0}_n = \{\{k\}, k = 1, \ldots, n\}\quad\text{(the partition with $n$ blocks)}$$ and a maximal element 
$$\widehat{1}_n = \{\{1, \ldots, n\}\}\quad\text{(the partition with only one block)}.$$
\end{example}

An interval of a poset $P$, denoted $[x, y]$ for some $x, y \in P$ with $x \le y$, is a subset of $P$ defined as $[x, y] = \{ z \in P \: | \: x \le z \le y\}$. We denote by $\mathrm{Int}(P)$ the set of all intervals of $P$ and we say that $P$ is locally finite is every interval of $P$ is finite.
It is quite natural to define functions on intervals, for instance if we want to count the number of elements of an interval, and more generally functions $f:\mathrm{Int}(P) \to \Kb$ for some field $\Kb$. The space of all such functions can be turned into an associative algebra, as the following definition shows.

\begin{definition}[Incidence algebra $I(P, \Kb)$]
The incidence algebra $I(P, \Kb)$ of $P$ over $\Kb$ is the $\Kb$-algebra of all functions $f:\mathrm{Int}(P) \to \Kb$ with operation, called convolution, given by:
$$ (f \star g)(x,y) := \sum_{x \le z \le y} f(x, z)\, g(z, y).$$
\end{definition}

The algebra $I(P, \Kb)$ is associative, and its multiplicative identity is
$$ \delta(x, y) = \begin{cases} 1 & \text{if $x = y$,} \\ 0 & \text{otherwise.} \end{cases}$$
Another important element of the incidence algebra is the zeta function of the poset, defined as
$$ \zeta(x, y) = \begin{cases} 1 & \text{if $x \le y$,} \\ 0 & \text{otherwise.} \end{cases}$$
It can be shown that the function $\zeta$ of a poset is invertible and its inverse is called the M\"obius function of the poset and is denoted by $\mu$. In particular one has that $\mu$ satisfies:
$$ \mu \star \zeta = \zeta \star \mu = \delta.$$
Furthermore, the following important result holds.

\begin{theorem}[M\"obius inversion formula]
Let $P$ be a locally finite poset and let $g,h: P \to \Kb$. Then
$$ h(x) = \sum_{y \le x} g(y), \quad \forall x \in P $$
is equivalent to
$$ g(x) = \sum_{y \le x} h(y)\, \mu(y, x), \quad \forall x \in P. $$
\end{theorem}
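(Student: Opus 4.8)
The plan is to recast both identities in terms of the convolution product of the incidence algebra $I(P,\Kb)$ and to exploit the relation $\mu \star \zeta = \zeta \star \mu = \delta$ established above. The key observation is that the hypothesis $h(x) = \sum_{y \le x} g(y)$ can be rewritten as $h(x) = \sum_{y \le x} \zeta(y,x)\, g(y)$, since $\zeta(y,x) = 1$ precisely when $y \le x$; likewise the desired conclusion reads $g(x) = \sum_{y \le x} \mu(y,x)\, h(y)$. Thus both statements are assertions about expanding a function on $P$ against the kernels $\zeta$ and $\mu$, and the equivalence should follow by substituting one into the other and collapsing the result via the $\delta$ identity.

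First I would prove the forward implication. Assuming $h(x) = \sum_{z \le x} \zeta(z,x)\, g(z)$, I substitute this into the right-hand side of the conclusion and interchange the (finite) order of summation:
\begin{align*}
\sum_{y \le x} \mu(y,x)\, h(y) &= \sum_{y \le x} \mu(y,x) \sum_{z \le y} \zeta(z,y)\, g(z) \\
&= \sum_{z \le x} g(z) \sum_{z \le y \le x} \zeta(z,y)\, \mu(y,x) \\
&= \sum_{z \le x} g(z)\, (\zeta \star \mu)(z,x) = \sum_{z \le x} g(z)\, \delta(z,x) = g(x),
\end{align*}
where the penultimate equality is the definition of the convolution $\zeta \star \mu$ over the interval $[z,x]$, and the final steps use $\zeta \star \mu = \delta$ together with the fact that $\delta(z,x)$ vanishes unless $z = x$.

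For the converse I would proceed entirely symmetrically: assuming $g(x) = \sum_{y \le x} \mu(y,x)\, h(y)$, I substitute into $\sum_{y \le x} g(y) = \sum_{y \le x} \zeta(y,x)\, g(y)$ and rearrange the double sum to obtain $\sum_{z \le x} h(z)\, (\mu \star \zeta)(z,x) = h(x)$, this time invoking $\mu \star \zeta = \delta$. The only point demanding care is the rearrangement of each double sum into an iterated sum over the interval $[z,x]$, and I would flag this as the (mild) main obstacle, since it is precisely where the locally finite hypothesis enters: local finiteness guarantees that each interval $[z,x]$ is finite, so every sum involved has only finitely many nonzero terms and the interchange of summation order is unconditionally justified. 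Without this assumption the convolution product, and hence the whole argument, would not be well defined.
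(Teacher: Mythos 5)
Your proof is correct, and it is the standard one: rewrite both displayed identities as convolutions against the kernels $\zeta$ and $\mu$, substitute one into the other, interchange the order of summation, and collapse the inner sum via $\zeta \star \mu = \mu \star \zeta = \delta$. Be aware, though, that the paper itself offers no proof of this theorem: it is stated as a quoted result and the reader is referred to \cite{rota1964}, so there is no in-paper argument to compare yours against; what you have written is precisely the textbook proof that the paper omits. One point deserves more care than you give it. You claim that local finiteness ensures ``every sum involved has only finitely many nonzero terms,'' but the outer sums $\sum_{y \le x}$ run over the principal order ideal $\{ y \in P : y \le x \}$, not over an interval, and a locally finite poset can have infinite principal ideals (adjoin a top element to an infinite antichain: all intervals are finite, yet the ideal below the top is infinite). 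Over an arbitrary field $\Kb$ the hypothesis $h(x) = \sum_{y \le x} g(y)$ is then not even meaningful, so the theorem implicitly requires finite principal ideals; this finiteness is what licenses both the statement and your interchange of the (then genuinely finite) double sum, not interval finiteness alone. In the paper's applications this is automatic, since the posets considered (such as $\Pi(n)$) have a minimal element $\widehat{0}$, so every ideal $\{ y : y \le x \}$ equals the finite interval $[\widehat{0},x]$. This imprecision is inherited from the statement as given rather than introduced by you, so with that caveat flagged your proof is complete.
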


\begin{example}
The M\"obius function of the poset of set partitions $\Pi(n)$ admits the following representation:
$$ \mu(\pi, \sigma) = (-1)^{|\pi| - |\sigma|} \prod_{B \in \sigma} (n^\sigma_\pi(B) - 1)!, \quad \forall \pi \le \sigma$$
where $n^\sigma_\pi(B)$ is the number of blocks of $\pi$ contained in the block $B$ of $\sigma$.
\end{example}

We refer to \cite{rota1964} for details on these constructions. For the manipulation of Fourier and Laplace transforms of probability distributions, the formalism of posets and Möbius functions enables one to go from a generating series to its exponential or logarithm.
\begin{theorem}[Exponential and logarithm of generating series]\label{thm:exp_log}
Let $G(z) = \sum_{n=1}^\infty \frac{g_n}{n!}\,z^n$ be the exponential generating series of a sequence of coefficients $(g_n)_{n \geq 1}$. If $H(z)=\exp(G(z))$, then
$$H(z) = 1 + \sum_{n=1}^\infty \frac{1}{n!}\left(\sum_{\pi \in \Pi(n)} \prod_{B \in \pi} g_{|B|} \right) z^n.$$
Conversely, if $H(z) = 1 + \sum_{n=1}^\infty \frac{h_n}{n!}\,z^n$ and $G(z) = \log(H(z))$, then
$$G(z) = \sum_{n=1}^\infty \frac{1}{n!}\left(\sum_{\pi \in \Pi(n)} \mu(\pi,\widehat{1}_n)\,\prod_{B \in \pi} h_{|B|} \right)z^n,$$
with $\mu(\pi,\widehat{1}^n) = (-1)^{|\pi|-1}\,(|\pi|-1)!$.
\end{theorem}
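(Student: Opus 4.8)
The plan is to treat the two identities in turn, the first by a direct expansion of the exponential and the second by M\"obius inversion on the partition lattice $\Pi(n)$. Throughout I write $g_n = n!\,[z^n]G(z)$ and $h_n = n!\,[z^n]H(z)$, so that the first identity to be proven reads $h_n = \sum_{\pi\in\Pi(n)}\prod_{B\in\pi}g_{|B|}$ and the second reads $g_n = \sum_{\pi\in\Pi(n)}\mu(\pi,\widehat{1}_n)\prod_{B\in\pi}h_{|B|}$.

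For the first identity I would start from $H(z)=\exp(G(z))=\sum_{k\ge0}\frac{1}{k!}G(z)^k$ and extract the coefficient of $z^n$. Expanding the $k$-th power gives $[z^n]G(z)^k=\sum_{m_1+\cdots+m_k=n,\,m_i\ge1}\prod_{i}\frac{g_{m_i}}{m_i!}$, hence $n!\,[z^n]G(z)^k=\sum_{m_1+\cdots+m_k=n}\frac{n!}{m_1!\cdots m_k!}\prod_i g_{m_i}$. The key combinatorial step is to read the coefficient $\frac{n!}{m_1!\cdots m_k!}$ as the number of ordered tuples $(B_1,\ldots,B_k)$ of disjoint nonempty blocks of prescribed sizes $m_1,\ldots,m_k$ covering $\{1,\ldots,n\}$. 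Since the $k$ blocks of any set partition are pairwise distinct sets, each set partition with $k$ blocks arises from exactly $k!$ such ordered tuples; the factor $\frac{1}{k!}$ therefore collapses the triple sum over $k$, over compositions, and over orderings into a single sum over $\pi\in\Pi(n)$, yielding $h_n=\sum_{\pi\in\Pi(n)}\prod_{B\in\pi}g_{|B|}$.

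For the second identity I would avoid expanding $\log(1+\cdot)$ directly and instead leverage the first identity blockwise. Applying it with ground set an arbitrary block $B$ gives $h_{|B|}=\sum_{\pi_B\in\Pi(B)}\prod_{B'\in\pi_B}g_{|B'|}$; taking the product over the blocks of any $\sigma\in\Pi(n)$ and observing that a simultaneous choice of a partition of each block of $\sigma$ is precisely a partition $\pi\le\sigma$, I obtain the relation $h_\sigma=\sum_{\pi\le\sigma}g_\pi$ in the incidence algebra of $\Pi(n)$, where $h_\sigma:=\prod_{B\in\sigma}h_{|B|}$ and $g_\sigma:=\prod_{B\in\sigma}g_{|B|}$. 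This is exactly $h=g\star\zeta$, so convolving with $\mu$ and using $\zeta\star\mu=\delta$ gives $g_\sigma=\sum_{\pi\le\sigma}\mu(\pi,\sigma)\,h_\pi$. Specializing to $\sigma=\widehat{1}_n$ and inserting the value $\mu(\pi,\widehat{1}_n)=(-1)^{|\pi|-1}(|\pi|-1)!$ coming from the Example that computes the M\"obius function of $\Pi(n)$ produces the claimed formula for $g_n=n!\,[z^n]G(z)$.

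The power-series manipulations and the convergence bookkeeping are routine: everything can be read as an identity of formal power series, or analytically in a neighborhood of $0$, where $G$ is holomorphic and $H(0)=1$ makes $\log H$ well defined. I expect the only genuine subtlety to be the combinatorial identification in the first identity --- correctly matching the compositions-with-multinomial-weights picture to \emph{unordered} set partitions, in particular the cancellation of the $\frac{1}{k!}$ factor even when several blocks share the same size --- together with the careful verification that the blockwise relation $h_\sigma=\sum_{\pi\le\sigma}g_\pi$ is a true identity in the incidence algebra, so that the M\"obius inversion $\zeta\star\mu=\delta$ recalled earlier applies verbatim.
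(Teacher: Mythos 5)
Your proof is correct and follows essentially the same route as the paper's: a direct expansion of $\exp(G(z))$ reorganized into a sum over set partitions for the first identity, followed by the blockwise product relation $h_\sigma=\sum_{\pi\le\sigma}g_\pi$ and M\"obius inversion on $\Pi(n)$ for the second. The only difference is cosmetic: in the first identity you collapse the multinomial sum via ordered tuples of disjoint blocks (each unordered partition being overcounted by exactly $k!$), whereas the paper passes through integer partitions and the multiplicity factors $m_i(\lambda)$; your bookkeeping is arguably cleaner and correctly handles the subtlety of blocks of equal size.
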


\begin{proof}
We expand the exponential of $G(z) = \sum_{n=1}^\infty \frac{g_n}{n!}\,z^n$, and we collect the coefficient of $z^n$. This is
$$ [z^n]\,H(z) = \sum_{l =1}^n  \sum_{\substack{c_1+\cdots+c_l=n \\ c_1\geq 1,\ldots,c_l\geq 1}} \frac{1}{l!} \frac{g_{c_1}\cdots g_{c_k}}{c_1!\cdots c_l!} .$$
The sum above runs over compositions of $n$, that is to say sequences $(c_1,\ldots,c_l)$ of positive integers with sum equal to $n$. By replacing these \emph{compositions} of size $n$ by their non-increasing reorderings called \emph{integer partitions} of size $n$, we obtain a sum over a smaller set:
$$ [z^n]\,H(z) = \sum_{l =1}^n  \sum_{\substack{\lambda_1+\cdots+\lambda_l=n \\ \lambda_1\geq \cdots \geq \lambda_l\geq 1}} \frac{1}{m_1(\lambda)!\cdots m_n(\lambda)!} \frac{g_{\lambda_1}\cdots g_{\lambda_l}}{\lambda_1!\cdots \lambda_l!} ,$$
where $m_i(\lambda)$ denotes the number of parts $\lambda_{1\leq j\leq l}$ of $\lambda$ equal to $i$. Indeed, given an integer partition $\lambda=(\lambda_1\geq \cdots \geq \lambda_l)$ with sum $n$, the number of compositions whose non-increasing reordering is $\lambda$ is the multinomial coefficient $\frac{l!}{m_1(\lambda)!\cdots m_n(\lambda)!}$. Now, for any integer partition $\lambda$ with size $n$,
$$\frac{n!}{m_1(\lambda)!\cdots m_n(\lambda)!\,\lambda_1!\cdots \lambda_l!}$$
is the number of \emph{set partitions} $\pi$ with size $n$ and type $\lambda$, that is to say that the sizes of the blocks of $\pi$ are given by the integer partition $\lambda$. Therefore,
$$n!\,[z^n]\, H(z) = \sum_{\pi \in \Pi(n)} \prod_{B \in \pi} g_{|B|}.$$
This proves the first formula. In order to get the second formula, let us define two functions $g$ and $h$ on the poset $\Pi(n)$ :
\begin{align*}
    g(\pi) &= \prod_{B \in \pi} \big(|B|!\,[z^{|B|}]\,G(z)\big) = \prod_{B \in \pi} g_{|B|} ;\\ 
    h(\pi) &= \prod_{B \in \pi} \big(|B|!\,[z^{|B|}]\,H(z)\big) = \prod_{B \in \pi} h_{|B|}.
\end{align*}
If $H(z)=\exp(G(z))$, then we have shown that $h(\widehat{1}_n) = \sum_{\pi \in \Pi(n)} g(\pi)$, from which we deduce that $h(\sigma) = \sum_{\pi \leq \sigma} g(\pi)$. By Möbius inversion, $g(\sigma) = \sum_{\pi \leq \sigma} \mu(\pi,\sigma)\,g(\pi)$, so in particular,
$$g(\widehat{1}_n) = g_n= n!\,[z^n]\,G(z) = \sum_{\pi \in \Pi(n)} \mu(\pi,\widehat{1}_n)\,\prod_{B \in \pi} h_{|B|}.$$
Note that an alternative way to get these inversion formul{\ae} is by means of the Faà-di-Bruno formula
\begin{equation*}
\frac{\dd^n(a\circ b)}{\dd z^n} (z)= \sum_{\pi \in \Pi(n)} \frac{\dd^{|\pi|}a}{\dd z^{|\pi|}} (b(z)) \prod_{B \in \pi} \frac{\dd^{|B|}b}{\dd z^{|B|}} (z),    
\end{equation*}
with $b$ taken equal to $G(z)$ or $H(z)$, and $a=\exp$ or $a=\log$. 
\end{proof}

\section{Relation between the coefficients of the approximation scheme and the moments of the total loss variable}\label{sec:stirling}
In this appendix, we prove Formula \eqref{eq:bkn_in_terms_of_Mkn}. It is convenient to introduce the elementary symmetric functions
$$\mathfrak{e}_{k,n} = \sum_{1\leq i_1<i_2 < \cdots < i_k\leq n}p_{i_1}p_{i_2}\cdots p_{i_k}.$$
By using the well known relation between the coefficients of a polynomial and its roots, we get:
\begin{align*}
    1+\sum_{k=1}^\infty \mathfrak{e}_{k,n}\,z^k &= \prod_{i=1}^n (1+p_iz) = \exp\left(\sum_{i=1}^n \log(1+p_iz)\right) \\ 
    &= \exp\left(\sum_{i=1}^n\sum_{k=1}^\infty \frac{(-1)^{k-1}}{k}\,(p_iz)^k\right) =\exp\left(\sum_{k=1}^\infty \frac{(-1)^{k-1}}{k}\,\mathfrak{p}_{k,n}\,z^k\right).
\end{align*}
By using Theorem \ref{thm:exp_log}, we then obtain the relation between the coefficients $\mathfrak{e}_{k,n}$ and the coefficients $\mathfrak{p}_{k,n}$:
$$\mathfrak{e}_{k,n} = \frac{1}{k!} \sum_{\pi \in \Pi(k)} \mu(\widehat{0}_k,\pi)\left(\prod_{B \in \pi} \mathfrak{p}_{|B|,n}\right).$$
This is the same relation as Equation \eqref{eq:mod-poisson_approximation_scheme_coefs}, except that the sum runs over all set partitions, and not only those with blocks of size larger than $2$. As a consequence, the coefficients $\mathfrak{e}_{k,n}$ and $b_{k,n}$ are related by the following inclusion-exclusion formula:
\begin{equation}
    b_{k,n} = \sum_{l=0}^k \frac{(-1)^l}{l!}\,(\mathfrak{e}_{1,n})^l\,\mathfrak{e}_{k-l,n}.\label{eq:inclusion_exclusion}
\end{equation}
Indeed, let us replace on the right-hand side each $\mathfrak{e}_{k-l,n}$ by its expansion over set partitions. We get:
$$\mathrm{RHS} = \sum_{l=0}^k \sum_{\pi \in \Pi(k-l) } \frac{(-1)^l\,\mu(\widehat{0}_{k-l},\pi)}{l!\,(k-l)!}\,(\mathfrak{p}_{1,n})^l\left(\prod_{B \in \pi} \mathfrak{p}_{|B|,n}\right).$$
Since there are $\frac{k!}{l!(k-l)!}$ subsets of the integer interval $\lle 1,k\rre=\{1,2,\ldots,k\}$ with size $l$, we can rewrite the formula above as a sum over pairs $(\sigma,S)$, where $\sigma \in \Pi(k)$, and $S$ is a subset of $S(\sigma)=\bigsqcup_{B \in \sigma,\,|B|=1} B$, which is the union of the blocks of $\sigma$ with size $1$. Thus,
$$\mathrm{RHS} = \frac{1}{k!}\sum_{\sigma \in \Pi(k)} \sum_{S \subset S(\sigma)} (-1)^{|S|}\, \mu(\widehat{0}_{k},\sigma)\left(\prod_{B \in \sigma} \mathfrak{p}_{|B|,n}\right).$$
Given a set partition $\sigma$, the alternate sum $\sum_{S \subset S(\sigma)} (-1)^{|S|}$ vanishes unless $S(\sigma)=\emptyset$, so we conclude that
$$\mathrm{RHS} = \frac{1}{k!}\sum_{\substack{\sigma \in \Pi(k) \\ \forall B \in \sigma,\,|B|\geq 2}} \mu(\widehat{0}^k,\sigma)\,\left(\prod_{B \in \sigma} \mathfrak{p}_{|B|,n}\right)=b_{k,n}.$$
Now, the coefficients $\mathfrak{e}_{k,n}$ also appear in the computations of the moments of $L_n = \sum_{i=1}^n Y_i$ with $Y_i \sim \Be{p_i}$:
\begin{align*}
    \E{(L_n)^r} &= \sum_{1\leq i_1,i_2,\ldots,i_r \leq n} \E{Y_{i_1}Y_{i_2}\cdots Y_{i_r}} = \sum_{f : \lle 1,r\rre \to \lle 1,n\rre} \left(\prod_{i \in f(\lle 1,r\rre)} p_i\right) \\
    &= \sum_{s=1}^r \sum_{1\leq i_1<i_2<\cdots <i_s\leq n } s!\,\genfrac\{\}{0pt}{0}{r}{s}\, p_{i_1}p_{i_2}\cdots p_{i_s} =  \sum_{s=1}^r s!\,\genfrac\{\}{0pt}{0}{r}{s}\, \mathfrak{e}_{s,n},
\end{align*}
where $\genfrac\{\}{0pt}{1}{r}{s}$ is the Stirling number of the second kind, which counts set partitions of $\lle 1,r\rre$ in $s$ parts. Indeed, to go from the first line to the second line, we gather the functions $f:\lle 1,r\rre \to \lle 1,n\rre$ according to their range $f(\lle 1,r\rre) = \{i_1<i_2<\cdots<i_s\}$; if this range is fixed, then there are $s!\, \genfrac\{\}{0pt}{1}{r}{s}$ functions with this range. Set $M_{r,n}=\E{(L_n)^r}$. The relation above can be inverted by introducing the Stirling number of the first kind  $\genfrac[]{0pt}{1}{r}{s}$, which counts permutations of size $r$ with $s$ disjoint cycles. Hence,
\begin{equation}
    \mathfrak{e}_{r,n} = \frac{1}{r!}\,\sum_{s=1}^r (-1)^{r-s}\, \genfrac[]{0pt}{0}{r}{s} \,M_{s,n};\label{eq:stirling_numbers}
\end{equation}
see \cite[Sections 1.3 and 1.4]{Stan97} for the combinatorial properties of the two kinds of Stirling numbers. Injecting Equation \eqref{eq:stirling_numbers} into Formula \eqref{eq:inclusion_exclusion}, we get Equation \eqref{eq:bkn_in_terms_of_Mkn}.











\clearpage
\bibliographystyle{amsalpha}
\bibliography{credit_risk}
\end{document}